
\documentclass[12pt,draftcls,onecolumn]{IEEEtran}
\pdfoutput=1
\usepackage{mathtools} 
\usepackage{latexsym}
\usepackage{graphicx}
\usepackage{array}
\usepackage{amsmath}
\usepackage{amsfonts}
\usepackage{amssymb}
\usepackage{amsthm}
\usepackage{epsfig}
\usepackage{cite}
\usepackage{tabulary}
\usepackage{booktabs}


\newtheorem{theorem}{Theorem}[section]
\newtheorem{lemma}[theorem]{Lemma}
\newtheorem{definition}[theorem]{Definition}

\newtheorem{remark}[theorem]{Remark}

\newtheorem{assumption}[theorem]{Assumption}





%
%
%

%
%
\newcommand{\sr}{\stackrel}

\newcommand{\tri}{\sr{\triangle}{=}}

%
%

%
%
%
\newcommand{\be}{\begin{equation}}
\newcommand{\ee}{\end{equation}}
\newcommand{\bea}{\begin{eqnarray}}
\newcommand{\eea}{\end{eqnarray}}
\newcommand{\bes}{\begin{eqnarray*}}
\newcommand{\ees}{\end{eqnarray*}}
%
%
\newcommand{\bi}{\begin{itemize}}
\newcommand{\ei}{\end{itemize}}
\newcommand{\ben}{\begin{enumerate}}
\newcommand{\een}{\end{enumerate}}
%
%

%
%
\newcommand{\bp}{\begin{problem}}
\newcommand{\ep}{\end{problem}}

\newcommand{\noi}{\noindent}

%
%

%
%
%
%

%
\begin{document}


\title{Nonanticipative Rate Distortion Function and Relations to Filtering Theory}
%
%

\author{Charalambos~D.~Charalambous, Photios~A.~Stavrou,~\IEEEmembership{Student Member,~IEEE,}
        and~Nasir~U.~Ahmed
\thanks{This work was financially supported by a medium size University of Cyprus grant entitled ``DIMITRIS" and by European Community's Seventh Framework Programme (FP7/2007-2013) under grant agreement no. INFSO-ICT-223844. Part of this work was presented in 20$^{th}$ International Symposium on Mathematical Theory of Networks and Systems (MTNS '12)\cite{stavrou-charalambous2012b}.}
\thanks{Charalambos D. Charalambous and Photios A. Stavrou are with the Department of Electrical and Computer Engineering (ECE), University of Cyprus, 75 Kallipoleos Avenue, P.O. Box 20537, Nicosia, 1678, Cyprus, e-mail: chadcha@ucy.ac.cy, stavrou.fotios@ucy.ac.cy}
\thanks{Nasir U. Ahmed is with the School of Electrical Engineering and Computer Science,
University of Ottawa, 161 Louis Pasteur, P.O. Box 450, Stn A, Ottawa, Ontario, Canada, K1N 6N5, e-mail: ahmed@site.uottawa.ca}}
%


\maketitle

\begin{abstract}
The relation between nonanticipative Rate Distortion Function (RDF) and filtering theory is discussed on abstract spaces. The relation is established by imposing a realizability constraint on the reconstruction conditional distribution of the classical RDF. Existence of the extremum solution of the nonanticipative RDF is shown using weak$^*$-convergence on appropriate topology. The extremum  reconstruction conditional distribution is derived in closed form, for the case of stationary processes. The realization of the reconstruction conditional distribution which achieves the infimum of the nonanticipative RDF is described. Finally, an example is presented to illustrate the concepts.
\end{abstract}

\begin{IEEEkeywords}
Nonanticipative Rate Distortion Function (RDF), filtering, realization, weak$^*$-convergence, optimal reconstruction conditional distribution
\end{IEEEkeywords}

\section{Introduction}\label{introduction}
Shannon's information theory \cite{shannon1948} for reliable communication evolved over the years without much emphasis on nonanticipation imposed on the communication sub-systems. 
In particular, the classical rate distortion function (RDF) for source data compression or quantization deals with the characterization of the optimal reconstruction conditional  distribution subject to a fidelity criterion \cite{berger,cover-thomas}, without regard to nonanticipation.

On the other hand, filtering theory is developed by imposing real-time realizability on the estimators with respect to measurement data.  Although, both reliable communication and filtering (state estimation for control) are concerned with reconstruction of processes, the main underlying assumptions characterizing them are different.
\par In this paper, the intersection of rate distortion function (RDF) and real-time realizable filtering theory is established by invoking a nonanticipative constraint on the reconstruction conditional distribution to be realizable via real-time operations, while the optimal nonanticipative reconstruction conditional distribution is derived. Consequently, the connection between nonanticipative RDF, its characterization via the optimal reconstruction conditional distribution, and real-time realizable filtering theory is established under very general conditions on the source (including Markov sources).
\par The fundamental advantage of the new filtering approach based on nonanticipative RDF, is the ability to ensure average or probabilistic estimation error constraints, which is  non-trivial task if Bayesian filtering techniques are employed to formulate such constraints. The motivations includes nonanticipative data compression over noisy channels, such as control over networks, where the controlled system and controller may be connected via a noisy channel \cite{yuksel2011,gupta-dana-hespanha-murray-hassibi2009,freudenberg-middleton2008,borkar-mitter-tatikonda2001,tatikonda-mitter2004,yuksel-meyn2012}. In such applications, filtering via nonanticipative RDF approximates sensor measurements by the reconstruction process taking values in a set of smaller cardinality, while the approximation is quantified by the distortion function. Given the recent interest in developing controller and estimator architectures processing quantized information and, in general, communication schemes for control applications, nonanticipative RDF is necessary for developing zero-delay or limited delay quantization schemes. Moreover, nonanticipative RDF is necessary for the realization of the compression channel by communication systems processing information causally.
\par The first relation between information theory and filtering via distortion rate function is discussed by R. S. Bucy in \cite{bucy}, by carrying out the computation of a realizable (nonanticipative) distortion rate function with square criteria for two samples of the Ornstein-Uhlenbeck Gaussian process. Related work on nonanticipative rate distortion theory is pursued by A.~K.~Gorbunov and M.~S.~Pinsker in \cite{gorbunov-pinsker,gorbunov91}. Specifically, \cite{gorbunov-pinsker} discussed nonanticipative RDF for general stationary processes and establishes existence of the infinite horizon limit, while \cite{gorbunov91} computes a closed form expression for nonanticipative RDF (called $\epsilon$-entropy) for stationary Gaussian processes using power spectral methods. Further elaborations on the similarities and differences between \cite{bucy,gorbunov-pinsker,gorbunov91} and this paper will be discussed in subsequent parts of the paper. Moreover, over the years several papers appeared in the literature in which controller or estimator are designed based on information theoretic measures \cite{galdos-gustafson1977,feng-loparo-fang1997,guo-yin-wang-chai2009}. An earlier work designing filters via information theoretic measures is \cite{teneketzis-thesis}, while \cite{caines1988} analyzes mutual information for Gaussian processes.

\par In this paper, the connection between nonanticipative rate distortion theory and filtering theory is further examined, under a nonanticipative condition defined by the family of conditional distributions (reconstructions), for general distortion functions and random processes on abstract Polish spaces. The connection is established via optimization on the space of conditional distributions with average distortion constraint and almost sure ($a.s.$) constraints to account for the nonanticipative condition on the reconstruction conditional distribution. The main results are the following.
\begin{description}
\item[(1)] Existence of the nonanticipative RDF using the topology of weak$^*$-convergence;
\item[(2)] Closed form expression for reconstruction conditional distribution minimizing the nonanticipative RDF for stationary processes;
\item[(3)] Realization procedure of the filter based on the nonanticipative RDF;
\item[(4)] Example to demonstrate the realization of the filter.
\end{description}
It is important to point out that items (1)-(4) above are not addressed in the related papers \cite{bucy,gorbunov-pinsker,gorbunov91}. Moreover, (2) together with (3) are important in reliable communication for filtering and control applications, because one can develop communication architectures which operate with zero-delay or limited delay, as opposed to the classical RDF which is anticipative.  

Next,we give a high level discussion on Bayesian filtering theory and nonanticipative RDF, and we present some aspects of the problem pursued in this paper.
Consider a discrete-time process $X^n\triangleq\{X_0,X_1,\ldots,X_n\}\in{\cal X}_{0,n} \triangleq \times_{i=0}^n{\cal X}_i$, and its reconstruction $Y^n\triangleq\{Y_0,Y_1,\ldots,Y_n\}\in{\cal Y}_{0,n} \triangleq \times_{i=0}^n{\cal Y}_i$, where ${\cal X}_i$ and ${\cal Y}_i$ are Polish spaces (complete separable metric spaces). The objective is to reconstruct $X^n$ by $Y^n$ via nonanticipative operations subject to a distortion or fidelity criterion. That is, for each $i=0,1,\ldots$, the reconstruction $Y_i$ of $X_i$ should depend on past and present information $\{X_0,Y_0,X_1,Y_1,\ldots,X_{i-1},Y_{i-1},X_i\}$. Once this mapping is found a procedure is introduced to realize the filter of $Y_i$ from auxiliary\footnote{This point is explained in Subsection~\ref{1.2}.} measurements.

\subsection{Bayesian Estimation Theory}
In classical filtering \cite{elliott-aggoun-moore1995}, one is given a mathematical model that generates the process $X^n$, via its conditional distribution $\{P_{X_i|X^{i-1}}$ $(dx_i|x^{i-1}):i=0,1,\ldots,n\}$ or via discrete-time recursive dynamics, a mathematical model that generates observed data obtained from sensors, say, $Z^n$, $\{P_{Z_i|Z^{i-1},X^i}$ $(dz_i|z^{i-1},x^i):i=0,1,\ldots,n\}$, while $Y^n$ are the causal estimates of some function of the process $X^n$ based on the observed data $Z^n$. Note that for a memoryless channel that generates the observation sequence $\{Z_i:~i=0,1,\ldots,n\}$ then $P_{Z_i|Z^{i-1},X^i}(dz_i|z^{i-1},x^i)=P_{Z_i|X_{i}}(dz_i|x_i)-a.s.,~i=0,1,\ldots,n$.\\
In Bayesian estimation one is interested in causal estimators of some function $\Phi:{\cal X}_n\longmapsto\mathbb{R}$, $Y_n\triangleq\Phi(X_n)$ based on the observed data $Z^{n-1}\triangleq\{Z_0,Z_1,\ldots,Z_{n-1}\}$. With respect to minimizing the least-squares error pay-off, the best estimate of ${\Phi}(X_i)$ given $Z^{i-1}$, denoted by $\widehat{\Phi}(X_i)$, is given by the conditional mean
\begin{eqnarray*}
\widehat{\Phi}(X_i)\triangleq\mathbb{E}\Big\{\Phi(X_i)|Z^{i-1}\Big\}=\int_{{\cal X}_i}\Phi(x)P_{X_i|Z^{i-1}}(dx|z^{i-1}),~i=0,1,\ldots,n.
\end{eqnarray*}
For non-linear problems, Bayesian filtering is often addressed via the conditional distribution $\{P_{X_i|Z^{i-1}}(dx_i|z^{i-1}):i=0,1,\ldots,n\}$ or its unnormalized versions which satisfy discrete-recursions \cite{elliott-aggoun-moore1995}, and forms a sufficient statistic for the filtering problem.\\
Consider the simplified example of the multi-dimensional Gaussian-Markov processes modeled by
\begin{eqnarray}
\left\{ \begin{array}{ll} X_{k+1}=A_kX_k+B_kW_k,~X_0{\sim}N(0;~\Sigma_{x_0}),~k=0,1,\ldots,n-1\\
Z_k=C_kX_k+D_kV_k,~k=0,1,\ldots,n \end{array} \right.\label{equation50}
\end{eqnarray}
where $\{A_k, B_k, C_k, D_k\}$ are time-varying matrices having appropriate dimensions, $W_k{\sim}N(0;\Sigma_{W_k})$ (Gaussian with mean zero and covariance $\Sigma_{W_k}$), $V_k{\sim}N(0;\Sigma_{V_k})$, $k=0,1,\ldots,n$, while the processes $\{W_k:~k=0,1,\ldots,n-1\}, \{V_k:~k=0,1,\ldots,n\}$ are mutually independent, and independent of $X_0$. The classical Kalman Filter \cite{elliott-aggoun-moore1995} is a well-known example for which the optimal reconstruction $\widehat{X_i} =\mathbb{E}[X_i | Z^{i-1}],~i=0,1,\ldots,n$, is the conditional mean which minimizes the average  least-squares estimation error. Thus, in classical filtering theory both models which generate the unobserved and observed processes, $X^n$ and $Z^n$, respectively, are given \'a priori, and the estimator $\widehat{X}_i$ is a nonanticipative function of the past information $Z^{i-1},~i=0,1,\ldots,n$. Fig.~\ref{filtering} illustrates the cascade block diagram of the Bayesian filtering problem.	
\begin{figure}[ht]
\centering
\includegraphics[scale=0.70]{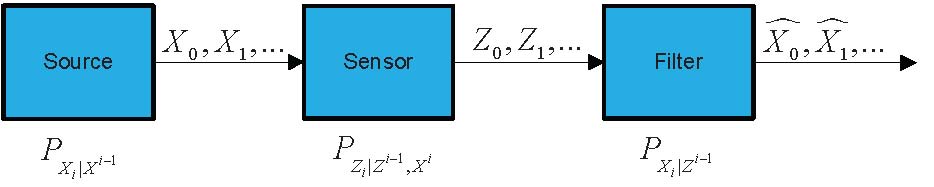}
\caption{Block Diagram of Bayesian Filtering Problem}
\label{filtering}
\end{figure}

\subsection{Nonanticipative Rate Distortion Theory and Estimation}\label{1.2}
\par In nonanticipative rate distortion theory one is given the process $X^n$, which induces  the conditional distributions $\{P_{X_i|X^{i-1}}(dx_i|x^{i-1}):~i=0,1,\ldots,n\}$ and determines the nonanticipative reconstruction conditional distribution $\{P_{Y_i|Y^{i-1},X^i}(dy_i|y^{i-1},x^i):~i=0,1,\ldots,n\}$ which minimizes the mutual information between $X^n$ and $Y^n$ subject to a distortion or fidelity constraint, via a nonanticipative or realizability constraint. The filter or estimate $\{Y_i:~i=0,1,\ldots,n\}$ of $\{X_i:~i=0,1,\ldots,n\}$ is found by realizing the reconstruction distribution $\{P_{Y_i|Y^{i-1},X^i}(dy_i|y^{i-1}$, $x^i):~i=0,1,\ldots,n\}$ via a cascade of sub-systems as shown in Fig.~\ref{filtering_and_causal}.
\begin{figure}[ht]
\centering
\includegraphics[scale=0.70]{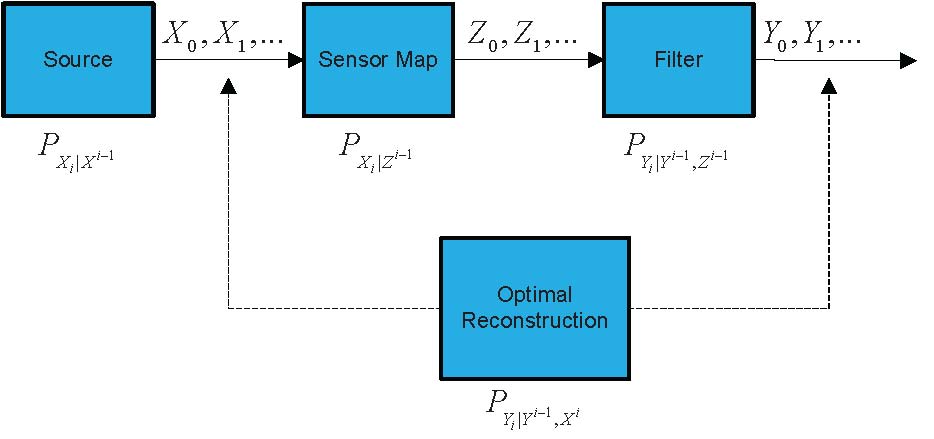}
\caption{Block Diagram of Filtering via Nonanticipative Rate Distortion Function}
\label{filtering_and_causal}
\end{figure}
The point to be made here is that the auxiliary random sequence $\{Z_0,Z_1,\ldots\}$ which is the analogue of sensor measurements (in the above discussion of Bayesian estimation) is identified during the realization of the optimal reconstruction distribution $\{P_{Y_i|Y^{i-1},X^i}(dy_i|y^{i-1},x^i):~i=0,1,\ldots,n\}$. Thus, in Bayesian estimation, the sensor map is given \'a priori, while in nonanticipative rate distortion theory, this map is identified during the realization of the optimal reconstruction conditional distribution $\{P_{Y_i|Y^{i-1},X^i}(dy_i|y^{i-1},x^i):~i=0,1,\ldots,n\}$, so that the end-to-end nonanticipative RDF from $X^n$ to $Y^n$ is achieved.

The precise problem formulation of nonanticipative RDF is defined by first introducing the distortion or fidelity constraint and mutual information.
The distortion function \cite{berger} or fidelity constraint between  $x^n$ and its reconstruction $y^n$, is a measurable function defined by
\begin{eqnarray}
d_{0,n} : {\cal X}_{0,n} \times {\cal Y}_{0,n} \mapsto [0, \infty], \: \: d_{0,n}(x^n,y^n)\triangleq\sum^n_{i=0}\rho_{0,i}(x^i,y^i).\nonumber
\end{eqnarray}
For single letter distortion $d_{0,n}(x^n,y^n)\equiv\sum_{i=0}^n\rho(x_i,y_i)$, and for single letter square error distortion $d_{0,n}(x^n,y^n)\equiv\sum_{i=0}^n{||x_i-y_i||}^2$ \cite{berger}. Moreover, for finite alphabet spaces ${\cal X}_i$ and ${\cal Y}_i$, the distortion function can be defined in terms of the Hamming distance \cite{cover-thomas}.\\
The mutual information between $X^n$ and $Y^n$, for a given distribution ${P}_{X^n}(dx^n)$, and  conditional distribution $P_{Y^n|X^n}(dy^n|x^n)$, is defined by\footnote{The precise definition of a convolution of measures denoted by $\otimes$ is given in Section~\ref{problem_formulation}.} \cite{berger}
\begin{equation}
{I}(X^n;Y^n)\triangleq \int_{{\cal X}_{0,n}\times{\cal Y}_{0,n}}\log\Big(\frac{P_{Y^n|X^n}(dy^n|x^n)}{{P}_{Y^n}(dy^n)}\Big)
P_{Y^n|X^n}(dy^n|x^n)\otimes{P}_{X^n}(dx^n). \label{1}
\end{equation}
Next, introduce the nonanticipative constraint on the reconstruction distribution. To this end, define the $(n+1)-$fold nonanticipative convolution measure
\begin{equation}
{\overrightarrow P}_{Y^n|X^n}(dy^n|x^n) \triangleq \otimes^n_{i=0}P_{Y_i|Y^{i-1},X^i}(dy_i|y^{i-1},x^i)-a.s. \label{9}
\end{equation}
The set of nonanticipative reconstruction distributions is defined by
\begin{eqnarray}
{\overrightarrow Q}_{ad} \triangleq&\Big\{ P_{Y^n|X^n}(dy^n|x^n) :~P_{Y^n|X^n}(dy^n|x^n) ={\overrightarrow P}_{Y^n|X^n}(dy^n|x^n)-a.s. \Big\}.  \label{eq18}
\end{eqnarray}
Note that without the nonanticipative constraint specified by ${\overrightarrow Q}_{ad}$, the connection between filtering theory and rate distortion theory cannot be established, since in general by Bayes' rule $P_{Y^n|X^n}(dy^n|x^n)=\otimes_{i=0}^n{P}_{Y_i|Y^{i-1},X^n}(dy_i|y^{i-1},x^n)-a.s.$, and hence, for each $i=0,1, \ldots,n$, the conditional distribution ${P}_{Y_i|Y^{i-1},X^n}(\cdot|\cdot,\cdot)$ of $Y_i$ will depend on future symbols $\{X_{i+1},X_{i+2},\ldots,X_n\}$, in addition to the past and present symbols $\{Y^{i-1},X^i\}$. However, by imposing the nonanticipative constraint (\ref{eq18}), then at each time instant $i=0,1,\ldots,$ the reconstruction $Y_i$ of $X_i$ will depend on the past reconstructions $\{Y_0,\ldots,Y_{i-1}\}$ and past and present symbols $\{X_0,\ldots,X_i\}$. For filtering and control applications, the nonanticipative constraint is necessary to avoid anticipative processing of information, while for quantization or compression applications it offers the possibility to realize the compression channel (optimal reconstruction distribution)  via causal operations and achieve an end-to-end compression with zero-delay.   

{\it Nonanticipative Distortion Rate Function.} The nonanticipative distortion rate function is defined by the minimization over ${ P}_{Y^n|X^n}(dy^n|x^n)$ of the average distortion function subject to a constraint on the mutual information rate $I(X^n;Y^n)\leq R$ and the nonanticipative constraint (\ref{eq18}) as follows.
\begin{eqnarray}
{D}^{na}_{0,n}(R)\triangleq \inf_{{ P}_{Y^n|X^n}(dy^n|x^n)\in \overrightarrow{Q}_{ad}: I(X^n ;{Y}^n)\leq R}\mathbb{E}\Big\{d_{0,n}(X^n,Y^n)\Big\}.\label{6}
\end{eqnarray}
The classical distortion rate function does not imposes the nonanticipative constraint $P_{Y^n|X^n}(dy^n|$\\
$x^n)={\overrightarrow P}_{Y^n|X^n}(dy^n|x^n)-a.s.$, hence the resulting optimal reconstruction distribution of symbol $y_i$ will depend on $(y^{i-1},x^i)$ and on future symbols $(x_{i+1},\ldots,x_n)$. Thus, by solving (\ref{6}) and then realizing the conditional distribution the optimal causal filter will be defined.\\
At this stage it is important to point out that the nonanticipative condition (\ref{eq18}) is different from the realizability condition  in \cite{bucy}, in which is assumed that $Y_i$ is independent of $X_{j|i}^*\triangleq X_j -\mathbb{E}\Big(X_j|X^i\Big), j=i+1, i+2, \ldots,$. Moreover, the nonanticipative condition (\ref{eq18}) is implied by the nonanticipative condition found in \cite{gorbunov-pinsker,gorbunov91}, defined by $X_{n+1}^\infty \leftrightarrow X^n \leftrightarrow Y^n$ forms a Markov chain for any $n=0,1,\ldots$ (e.g., $P_{Y^n|X^n,X_{n+1}^{\infty}}(dy^n|x^n,x_{n+1}^{\infty})=P_{Y^n|X^n}(dy^n|x^n),~n=0,1,\ldots$).  The claim here is that the nonanticipative condition (\ref{eq18}) is more natural and applies to processes which are not necessarily Gaussian with square error distortion function, and subject to slight modification to controlled sources in which the control is a function of the reconstruction process (we shall discuss this point further).

 {\it Nonanticipative Rate Distortion Function.}
An equivalent problem to (\ref{6}) is the nonanticipative RDF defined by
\begin{equation}
{R}^{na}_{0,n}(D)\triangleq \inf_{{ P}_{Y^n|X^n}(dy^n|x^n)\in  \overrightarrow{Q}_{ad}:~{E}\big\{d_{0,n}(X^n,Y^n)\leq{D}\big\}}I(X^n; Y^n).\label{7}
\end{equation}
The two problems defined by (\ref{6}) and (\ref{7}) are equivalent in the sense that the solution of (\ref{6}) gives that of (\ref{7}) and vice-versa \cite{cover-thomas}. Moreover, it can be shown that
\begin{eqnarray}
{P}_{Y^n|X^n}(dy^n|x^n)&=&{\overrightarrow P}_{Y^n|X^n}(dy^n|x^n)-a.s.
{\Longleftrightarrow} \nonumber\\
I(X^n;Y^n)&=&\int_{{\cal X}_{0,n}\times{\cal Y}_{0,n}} \log\Big(\frac{{\overrightarrow P}_{Y^n|X^n}(dy^n|x^n)}{{P}_{Y^n}(dy^n)}\Big){\overrightarrow P}_{Y^n|X^n}(dy^n|x^n)\otimes{P}_{X^n}(dx^n)\nonumber\\
&\equiv&{\mathbb I}(P_{X^n},{\overrightarrow P}_{Y^n|X^n})\label{eq6}
\end{eqnarray}
where the notation ${\mathbb I}(P_{X^n},{\overrightarrow P}_{Y^n|X^n})$ is used to point out the functional dependence of $I(X^n;{Y^n})$ on $\{P_{X^n},{\overrightarrow P}_{Y^n|X^n}\}$. The nonanticipative distortion rate function and RDF can be generalized to controlled sources.
\par The paper is organized as follows. Section~\ref{problem_formulation} discusses the problem formulation on abstract spaces. Section~\ref{existence} establishes  existence of optimal minimizing reconstruction distribution, and Section~\ref{necessary} derives the stationary solution. Section~\ref{realization} describes the real-time realization of nonanticipative RDF. Finally, Section~\ref{example} demonstrates the filter realization via an example.


\section{Formulation of Nonanticipative RDF on Abstract Spaces}\label{problem_formulation}
\noi Throughout the paper we use the notation defined on Table~\ref{notations}.
\begin{table}[htbp]\caption{Table of Notation}
\centering
\begin{tabular}{r c p{8cm} }
\toprule
$\mathbb{N}\triangleq\{0,1,\ldots\}$ &~& Set of nonnegative integers\\
$\mathbb{N}^n\triangleq\{0,1,\ldots,n\}$ &~& Set of first $(n+1)$ nonnegative integers\\
${\cal X}_t$,~${\cal Y}_t$,~$t\in\mathbb{N}$ &~& Source and reconstruction alphabets\\
${\cal B}({\cal X}_t)$,~${\cal B}({\cal Y}_t)$ &~& $\sigma$-algebra of Borel sets generated by ${\cal X}_t$, ${\cal Y}_t$\\
${\cal X}_{0,n}\triangleq {\times}_{k=0}^{n}{\cal X}_k$,~${\cal Y}_{0,n}\triangleq {\times}_{k=0}^{n}{\cal Y}_k$ &~& Cartesian product of source and reconstruction alphabets\\
${\cal B}({\cal X}_{0,n})\triangleq\times_{k=0}^n{\cal B}({\cal X}_k)$,~${\cal B}({\cal Y}_{0,n})\triangleq\times_{k=0}^n{\cal B}({\cal Y}_k)$ &~& $\sigma$-algebra of Borel sets generated by ${\cal X}_{0,n}$, ${\cal Y}_{0,n}$\\
$x^n \triangleq \{x_0,\ldots,x_n\}$,~$y^n \triangleq \{y_0,\ldots,y_n\}$ &~& Sequence of source and reconstruction symbols\\
${\cal M}_1({\cal Z})$ &~& Set of probability measures on a measurable space $({\cal Z}, {\cal B}({\cal Z}))$\\
${\cal Q}({\cal Y};{\cal X})$ &~& Set of stochastic kernels on $({\cal Y},{\cal B}({\cal Y}))$ given $({\cal X},{\cal B}(\cal X))$\\
$X\leftrightarrow Y \leftrightarrow Z\Leftrightarrow{P}_{Z|X,Y}(dz|x,y)=P_{Z|Y}(dz|y)-a.s.$ &~& Markov Chain (MC) or conditional independence\\
$BC({\cal X})$ &~& Vector space of bounded continuous real-valued functions defined on a Polish space ${\cal X}$\\
$L_1\big{(}\mu,BC({\cal X})\big{)}$ &~& Set of all $\mu$-integrable functions defined on ${\cal X}$ with values in $BC({\cal X})$\\
$||\cdot||_{\mu}$ &~& Norm with respect to $L_1\big{(}\mu,BC({\cal X})\big{)}$\\
${\cal X}^*$ &~& Topological dual of a Banach space ${\cal X}$\\
${M}_{rba}({\cal X})$ &~& Space of finitely additive regular bounded signed measures on $({\cal X},{\cal B}({\cal X}))$\\
${\Pi}_{rba}({\cal X})\subset{M}_{rba}({\cal X})$ &~& Space of finitely additive regular bounded probability measures on $({\cal X},{\cal B}({\cal X}))$\\
\bottomrule
\end{tabular}
\label{notations}
\end{table}
\noi The source and reconstruction alphabets, respectively, are sequences of Polish spaces $\{ {\cal X}_t: t\in\mathbb{N}\}$ and $\{ {\cal Y}_t: t\in\mathbb{N}\}$, associated with their corresponding measurable spaces $({\cal X}_t,{\cal B}({\cal X}_t))$ and $({\cal Y}_t, {\cal B}({\cal Y}_t))$, $t\in\mathbb{N}$. Sequences of alphabets are  identified
with the product spaces $({\cal X}_{0,n},{\cal B}({\cal X}_{0,n})) \triangleq {\times}_{k=0}^{n}({\cal X}_k,{\cal B}({\cal X}_k))$, and $({\cal Y}_{0,n},{\cal B}({\cal Y}_{0,n}))\triangleq \times_{k=0}^{n}({\cal Y}_k,{\cal B}({\cal Y}_k))$. The source and reconstruction are random processes denoted by $X^n \triangleq \{X_t: t\in\mathbb{N}^n\}$, $X:~\{t\}\times\Omega\mapsto {\cal X}_t$, and by $Y^n\triangleq \{Y_t: t\in\mathbb{N}^n\}$, $Y:~\{t\}\times\Omega\mapsto  {\cal Y}_t$, respectively.\\ 
\noi The reconstruction conditional distribution will be defined via stochastic kernels. Note that the random variable (RV) ${Z}$ is called conditional independent of RV $X$ given the RV $Y$ if and only if $X\leftrightarrow Y \leftrightarrow Z$ forms a MC in both directions, equivalently $P_{X,Z|Y}(dx,dz|y)=P_{X|Y}(dx|y)P_{Z|Y}(dz|y)-a.s.$, or $P_{Z|Y,X}(dz|y,x)=P_{Z|Y}(dz|y)-a.s.$

\begin{definition}\label{stochastic kernel}\cite{dupuis-ellis97}
Let $({\cal X}, {\cal B}({\cal X})), ({\cal Y}, {\cal B}({\cal Y}))$ be measurable spaces in which $\cal Y$ is a Polish Space.\\
A  stochastic kernel on $\cal Y$ given $\cal X$ is a mapping $q: {\cal B}({\cal Y}) \times {\cal X}  \rightarrow [0,1]$ satisfying the following two properties:
\begin{enumerate}
\item[(1)] For every $x \in {\cal X}$, the set function $q(\cdot;x)$ is a probability measure (possibly finitely additive) on ${\cal B}({\cal Y})$;
\item[(2)] For every $F \in {\cal B}({\cal Y})$, the function $q(F;\cdot)$ is ${\cal B}({\cal X})$-measurable.
\end{enumerate}
\end{definition}
\noi Stochastic kernels can be used to define anticipative and nonanticipative convolution of reconstruction kernels and associated classical and nonanticipative RDF.
\begin{definition}\label{comprchan}
Given measurable spaces $({\cal X}_{0,n},{\cal B}({\cal X}_{0,n}))$, $({\cal Y}_{0,n},{\cal B}({\cal Y}_{0,n}))$, and their product spaces, data compression channels are classified as follows.
\begin{enumerate}
\item[1)] {\it An Anticipative Data Compression Channel} is a  stochastic kernel $ q_{0,n} (dy^n; x^n) \in {\cal Q}({\cal Y}_{0,n} ;{\cal X}_{0,n})$. Such a kernel admits a factorization into a convolution of a sequence of anticipative stochastic kernels as follows
\begin{equation}
q_{0,n}(dy^n; x^n)=\otimes_{i=0}^n q_i(dy_i;y^{i-1},x^n)-a.s.\label{eq.1}
\end{equation}
where  $q_i(dy_i;y^{i-1},x^n) \in {\cal Q}({\cal Y}_i;{\cal Y}_{0,i-1}\times{\cal X}_{0,n}), i=0,\ldots,n,~n \in \mathbb{N}$.
\item[2)]{\it A Nonanticipative Convolution Data Compression Channel} is a convolution of a sequence of nonanticipative stochastic kernels defined by
\begin{equation}
{\overrightarrow q}_{0,n}(dy^n;x^n) \triangleq \otimes_{i=0}^n q_i(dy_i;y^{i-1},x^i)-a.s.\label{eq.2}
\end{equation}
where  $q_i(dy_i;y^{i-1},x^i) \in {\cal Q}({\cal Y}_i;{\cal Y}_{0,i-1}\times{\cal X}_{0,i}), i=0,\ldots,n,~n \in \mathbb{N}$.
\item[3)]{\it A Restricted Nonanticipative Data Compression Channel} is a stochastic kernel $q_{0,n} (dy^n; x^n)$ $\in {\cal Q}({\cal Y}_{0,n} ;{\cal X}_{0,n})$ which is a convolution of a sequence of nonanticipative stochastic kernels obtained by imposing the almost sure (a.s.) constraint defined by
\begin{equation}
q_{0,n}(dy^n; x^n)=\otimes_{i=0}^n q_i(dy_i;y^{i-1},x^i)-a.s.\label{eq.3}
\end{equation}
where $q_i \in {\cal Q}({\cal Y}_i;{\cal Y}_{0,i-1}\times{\cal X}_{0,i}), i=0,\ldots,n,~n \in \mathbb{N}$.
\end{enumerate}
\end{definition}
\noi As stated earlier, the classical RDF is concerned with optimizing (\ref{1}) with respect to anticipative stochastic kernels (\ref{eq.1}). In this paper we address problem (\ref{6}) or (\ref{7}), i.e., when the conditional distribution (stochastic kernel) is restricted nonanticipative, and we discuss  generalizations based on (\ref{eq.2}). That is, for a given distribution $P_{X^n}(dx^n)$, nonanticipative RDF imposes the a.s.-constraint (\ref{eq.3}) on the reconstruction conditional distribution, and hence on the joint distribution $P_{X^n,Y^n}(dx^n,dy^n)$ generated by them, unlike the classical RDF which does not impose such a constraint. However, when the source is independently distributed, i.e., $P_{X^n}(dx^n)=\otimes_{i=0}^n{P}_{X_i}(dx_i)-a.s.$, it is well known that the optimal reconstruction conditional distribution of the classical RDF has the property $P^*_{Y^n|X^n}(dy^n|x^n)=\otimes_{i=0}^n{P}^*_{Y^i|X^i}(dy_i|x_i)-a.s.$ It is also well known that for sources with memory (i.e., Markov sources) the optimal reconstruction conditional distribution of the classical RDF is anticipative, i.e., $P^*_{Y^n|X^n}(dy^n|x^n)=\otimes_{i=0}^n{P}^*_{Y_i|Y^{i-1},X^n}(dy_i|y^{i-1},x^n)-a.s.$ Therefore, to ensure a nonanticipative reconstruction conditional distribution one has to impose the constraint (\ref{eq.3}) to the classical RDF. On the other hand, a nonanticipative convolution data compression channel (\ref{eq.2}) does not impose any constraint on the joint distribution $P_{X^n,Y^n}(dx^n,dy^n)$. This point is further explained below by discussing generalizations of distortion rate function (\ref{6}) and RDF (\ref{7}).
\begin{remark}\label{generalizations}
The nonanticipative distortion rate function and the nonanticipative RDF can be generalized as follows. Given a sequence of conditional distributions $\{{P}_{X_i|X^{i-1},Y^{i-1}}(dx_i|x^{i-1},y^{i-1})$ $:~i=0,1,\ldots,n\}$ then (\ref{6}) and (\ref{7}) can be generalized to
\begin{eqnarray}
{\overrightarrow D}_{0,n}^{na}(R)&\triangleq \inf_{{\overrightarrow P}_{Y^n|X^n}(dy^n|x^n):I(X^n\rightarrow{Y^n})\leq R}E\Big\{d_{0,n}(X^n,Y^n)\Big\}\label{eq21}\\
{\overrightarrow R}_{0,n}^{na}(D)&\triangleq \inf_{{\overrightarrow P}_{Y^n|X^n}(dy^n|x^n):E\{d_{0,n}(X^n,Y^n)\leq{D}\}}I(X^n\rightarrow{Y^n})\label{eq22}
\end{eqnarray}
where $I(X^n\rightarrow{Y^n})$ is the directed information measure from $X^n$ to $Y^n$ defined by
\begin{eqnarray}
{I}(X^n\rightarrow{Y}^n)&\triangleq& \int_{{\cal X}_{0,n}\times{\cal Y}_{0,n}}\log\Big(\frac{\overrightarrow{P}_{Y^n|X^n}(dy^n|x^n)}{{P}_{Y^n}(dy^n)}\Big)
\overrightarrow{P}_{Y^n|X^n}(dy^n|x^n)\otimes\overleftarrow{P}_{X^n|Y^{n-1}}(dx^n|y^{n-1})\nonumber\\
&\equiv&\mathbb{I}_{X^n\rightarrow{Y^n}}(\overleftarrow{P}_{X^n|Y^{n-1}},\overrightarrow{P}_{Y^n|X^n})\nonumber
\end{eqnarray}
where $\overleftarrow{P}_{X^n|Y^{n-1}}$ is defined by
\begin{eqnarray}
\overleftarrow{P}_{X^n|Y^{n-1}}(dx^n|y^{n-1})\triangleq\otimes_{i=0}^n{P}_{X_i|X^{i-1},Y^{i-1}}(dx_i|x^{i-1},y^{i-1})-a.s.\nonumber
\end{eqnarray}
Clearly, (\ref{eq21}) and (\ref{eq22}) do not assume $P_{X_i|X^{i-1},Y^{i-1}}(dx_i|x^{i-1},y^{i-1})=P_{X_i|X^{i-1}}(dx_i|x^{i-1})-a.s.$, and hence the process $X^n$ can be affected by $Y^n$ causally. It is easy to show that if (\ref{eq.3}) holds then $P_{X_i|X^{i-1},Y^{i-1}}(dx_i|x^{i-1},y^{i-1})=P_{X_i|X^{i-1}}(dx_i|x^{i-1})-a.s.,~i=0,1,\ldots,n$, also holds, and hence (\ref{eq21}), (\ref{eq22}) reduce to (\ref{6}), (\ref{7}). The generalizations (\ref{eq21}), (\ref{eq22}), covers conditionally Gaussian sources as a special case \cite{liptser-shiryaev1978}. It also covers the case when the source is a controlled process, controlled over a finite rate channel based on the quantized or reconstruction process. These generalizations will be investigated elsewhere, since they will require new topological spaces on which existence of optimal solution to (\ref{eq21}) and (\ref{eq22}) can be shown.
\end{remark}
\subsection{Nonanticipative RDF}
In this subsection the nonanticipative RDF is rigorously defined on abstract spaces. Given a source probability measure ${\cal \mu}_{0,n} \in {\cal M}_1({\cal X}_{0, n})$ (possibly finitely additive) and a reconstruction kernel $q_{0,n} \in {\cal Q}({\cal Y}_{0, n};{\cal X}_{0, n})$, one can define three probability measures as follows.
\par (P1): The joint measure $P_{0,n} \in {\cal M}_1({\cal Y}_{0,n}\times {\cal X}_{0, n})$:
\begin{eqnarray}
P_{0,n}(G_{0,n})&\triangleq&(\mu_{0,n} \otimes q_{0,n})(G_{0,n}),\:G_{0,n} \in {\cal B}({\cal X}_{0,n})\times{\cal B}({\cal Y}_{0,n})\nonumber\\
&=&\int_{{\cal X}_{0,n}} q_{0,n}(G_{0,n,x^n};x^n) \mu_{0,n}(d{x^n})\nonumber
\end{eqnarray}
where $G_{0,n,x^n}$ is the $x^n-$section of $G_{0,n}$ at point ${x^n}$ defined by $G_{0,n,x^n}\triangleq \{y^n \in {\cal Y}_{0,n}: (x^n, y^n) \in G_{0,n}\}$ and $\otimes$ denotes the convolution.
\par (P2): The marginal measure $\nu_{0,n} \in {\cal M}_1({\cal Y}_{0,n})$:
\begin{eqnarray}
\nu_{0,n}(F_{0,n})&\triangleq& P_{0,n}({\cal X}_{0, n} \times F_{0,n}),~F_{0,n} \in {\cal B}({\cal Y}_{0,n})\nonumber\\
&=&\int_{{\cal X}_{0, n}} q_{0,n}(({\cal X}_{0, n}\times F_{0,n})_{{x}^{n}};{x}^{n}) \mu_{0,n}(d{x^n})=\int_{{\cal X}_{0, n}} q_{0,n}(F_{0,n};x^n) \mu_{0,n}(dx^n).\nonumber
\end{eqnarray}
\par(P3): The product measure  $\pi_{0,n}:{\cal B}({\cal X}_{0,n}) \times
{\cal B}({\cal Y}_{0,n}) \mapsto [0,1] $ of $\mu_{0,n}\in{\cal M}_1({\cal X}_{0, n})$ and $\nu_{0,n}\in{\cal M}_1({\cal Y}_{0, n})$ for $G_{0,n} \in {\cal B}({\cal X}_{0,n}) \times {\cal B}({\cal Y}_{0,n})$:
\begin{eqnarray}
\pi_{0,n}(G_{0,n})\triangleq(\mu_{0,n} \times \nu_{0,n})(G_{0,n})=\int_{{\cal X}_{0, n}} \nu_{0,n}(G_{0,n,x^n}) \mu_{0,n}(dx^n).\nonumber
\end{eqnarray}
\noi The precise definition of mutual information between two sequences of Random Variables $X^n$ and $Y^n$, denoted $I(X^n; Y^n)$ is defined via the Kullback-Leibler distance (or relative entropy) between the joint probability distribution of $(X^n, Y^n)$ and the product of its marginal probability distributions of $X^n$ and $Y^n$, using the Radon-Nikodym derivative as follows.
\begin{definition}\label{relative_entropy}\cite{ihara1993}
Given a measurable space $({\cal X},  {\cal B}({\cal X}))$, the relative entropy between two probability measures $P, Q\in {\cal M}_1({\cal X})$ is defined by
\begin{eqnarray}
\mathbb{D}(P || Q) \triangleq \left\{ \begin{array}{ll} \int_{{\cal X}} \log   \Big(  \frac{dP}{dQ}\Big) dP = \int_{{\cal X}} \log \Big(\frac{dP}{dQ}\Big)  \frac{dP}{dQ} dQ & \mbox{if} \quad P << Q \\
+ \infty &  \mbox{otherwise} \end{array} \right.
\nonumber
\end{eqnarray}
where $ \frac{dP}{dQ}$ denotes the Radon-Nikodym derivative (density) of  $P$ with respect to  $Q$, and $P << Q$ denotes absolute continuity of  $Q$ with respect to $P$.
\end{definition}
Hence, by the construction of probability measures (P1)-(P3), and the chain rule of relative entropy \cite{dupuis-ellis97}, the following equivalent definitions of mutual information are obtained.
\begin{eqnarray}
I(X^n;Y^n) &\triangleq& \mathbb{D}(P_{0,n}|| \pi_{0,n})\label{re4}\\
&=&\int_{{\cal X}_{0,n} \times {\cal Y}_{0,n}}\log \Big( \frac{d  (\mu_{0,n} \otimes q_{0,n}) }{d ( \mu_{0,n} \times \nu_{0,n} ) }\Big) d(\mu_{0,n} \otimes q_{0,n}) \label{eq.4}\\
& =& \int_{{\cal X}_{0,n} \times {\cal Y}_{0,n}} \log \Big( \frac{q_{0,n}(d y^n; x^n)}{  \nu_{0,n} (dy^n)   } \Big)q_{0,n}(dy^n;dx^n)\otimes\mu_{0,n}(dx^n) \label{eq.5}\\
&=&\int_{{\cal X}_{0,n}} \mathbb{D}(q_{0,n}(\cdot;x^n)|| \nu_{0,n}(\cdot)) \mu_{0,n}(dx^n)\nonumber\\
&\equiv& \mathbb{I}(\mu_{0,n}, q_{0,n}).  \label{re3}
\end{eqnarray}
Note that (\ref{re3}) states that mutual information is expressed as a functional of $\{\mu_{0,n}, q_{0,n}\}$ and it is denoted by $\mathbb{I}(\mu_{0,n},q_{0,n})$. Note also that $\mu_{0,n}\otimes{q}_{0,n}\ll{\mu}_{0,n}\times\nu_{0,n}$ if and only if $q(\cdot;x^n)\ll{\nu}_{0,n}(\cdot)$, $\mu_{0,n}-$a.s., which is used to established that (\ref{eq.4}) is equivalent to (\ref{eq.5}). Necessary and sufficient conditions for existence of  a Radon-Nikodym derivative  for finitely additive measures can be found in \cite{maynard79}.\\
Next, the classical RDF \cite{berger} is introduced, since the definition of nonanticipative RDF will be based on the classical definition by imposing the nonanticipative constraint (\ref{eq18}) (or Definition~\ref{comprchan}-3).
\begin{definition}\label{classical_rdf}
\cite{berger}$(${\bf Classical Rate Distortion Function}$)$ Let $d_{0,n}: {\cal X}_{0,n}  \times {\cal Y}_{0,n} \rightarrow [0, \infty]$, be an ${\cal B}({\cal X}_{0,n}) \times {\cal B }( {\cal Y}_{0,n})$-measurable distortion function, and let $Q_{0,n}(D) \subset {\cal Q}({\cal Y}_{0,n}; {\cal X}_{0,n})$ (assuming is nonempty) denotes the average distortion or fidelity constraint defined by
\begin{align}
Q_{0,n}(D)\tri\Big\{ q_{0,n} \in {\cal Q}({\cal Y}_{0,n}; {\cal X}_{0,n}):\int_{{\cal X}_{0,n}\times{\cal Y}_{0,n}} d_{0,n}(x^n,y^n) q_{0,n}(dy^n;x^n)\otimes\mu_{0,n}(dx^n) \leq D \Big\}\label{dc1}
\end{align}
for $D\geq 0$. The classical RDF associated with the anticipative kernel $q_{0,n} \in {\cal Q}({\cal Y}_{0,n}; {\cal X}_{0,n})$ is defined by
\begin{equation}
R_{0,n}(D) \triangleq \inf_{q_{0,n} \in Q_{0,n}(D)}\mathbb{I}(\mu_{0,n},q_{0,n}).  \label{f3s}
\end{equation}
\end{definition}
Existence in (\ref{f3s}) is shown by assuming $d_{0,n}(x^n,\cdot)$ is bounded continuous on ${\cal Y}_{0,n}$ while ${\cal Y}_{0,n}$ is compact, using weak-convergence of probability measures in \cite{csiszar74}, and for more general conditions $d_{0,n}(x^n,\cdot)$ which is only continuous on ${\cal Y}_{0,n}$ using weak$^*$-convergence of measures on Polish spaces \cite{farzad06}. \\
 Unfortunately, for general sources and distortion function $d_{0,n}(x^n,y^n)$, the optimal reconstruction $q^*_{0,n}(dy^n;x^n)=\otimes^n_{i=0}q^*_i(dy_i;y^{i-1},x^n)$ is anticipative, and hence the link to filtering theory cannot be established due to dependence of $y_i$ on $(y^{i-1},x^i)$ and on future symbols $(x_{i+1},\ldots,x_n)$. This raises the question whether the classical RDF can be reformulated so that the optimal reconstruction kernel is nonanticipative. Before the definition of nonanticipative RDF we introduced a Lemma which gives insight into how classical and nonanticipative RDF are related.
\par The next lemma relates nonanticipative convolution reconstruction kernels and conditional independence.
\begin{lemma} \label{lem1}
The following are equivalent for each $n\in\mathbb{N}$.
\begin{enumerate}
\item[1)] $q_{0,n} (dy^n; x^n)={\overrightarrow q}_{0,n}(dy^n;x^n)$-a.s., (see Definition \ref{comprchan}-3).

\item[2)] For each $i=0,1,\ldots, n-1$,  $Y_i \leftrightarrow (X^i, Y^{i-1}) \leftrightarrow (X_{i+1}, X_{i+2}, \ldots, X_n)$, forms a MC.

\item[3)] For each  $i=0,1,\ldots, n-1$, $Y^i \leftrightarrow X^i \leftrightarrow X_{i+1}$ forms a MC.
\end{enumerate}
Moreover, $X_{i+1}^n\leftrightarrow{X^i}\leftrightarrow{Y^i}$, forms a MC for each $i=0,1,\ldots,n-1$, implies any of the statements 1), 2), 3).
\end{lemma}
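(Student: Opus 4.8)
The plan is to establish the cycle of implications $1)\Rightarrow 2)\Rightarrow 3)\Rightarrow 1)$ and then to obtain the final assertion (that the Markov chain $X_{i+1}^n\leftrightarrow X^i\leftrightarrow Y^i$ implies each of the three) as a one-line corollary. The single tool behind all of this is the a.s.-uniqueness of the disintegration (regular conditional distribution / chain-rule factorization) of a measure on the product of Polish spaces $\mathcal{X}_{0,n}\times\mathcal{Y}_{0,n}$; the conditional-independence bookkeeping is routine once uniqueness is in hand. Throughout I would work with the joint measure $P_{0,n}=\mu_{0,n}\otimes q_{0,n}$ and use its two canonical factorizations: the \emph{interleaved} chain rule in the order $X_0,Y_0,X_1,Y_1,\ldots$, and the \emph{split} factorization $\mu_{0,n}\otimes q_{0,n}$, in which the reconstruction kernel is disintegrated as $\otimes_{i=0}^n P_{Y_i|Y^{i-1},X^n}$.

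For $1)\Rightarrow 2)$: assuming $q_{0,n}=\overrightarrow{q}_{0,n}=\otimes_{i=0}^n q_i(dy_i;y^{i-1},x^i)$ a.s., I integrate out $y_{i+1},\ldots,y_n$ so that the kernel of $Y^i$ given $X^n$ becomes $\otimes_{j=0}^i q_j(dy_j;y^{j-1},x^j)$, which depends on $x^n$ only through $x^i$; peeling off the last factor identifies $P_{Y_i|Y^{i-1},X^n}$ with $q_i(\cdot;y^{i-1},x^i)$ a.s., which is exactly the chain $Y_i\leftrightarrow(X^i,Y^{i-1})\leftrightarrow X_{i+1}^n$ of $2)$. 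For $2)\Rightarrow 3)$: statement $2)$ gives $P_{Y_j|Y^{j-1},X^n}=P_{Y_j|Y^{j-1},X^j}$ for every $j\le n-1$, so for fixed $i\le n-1$ the product $P_{Y^i|X^n}=\otimes_{j=0}^i P_{Y_j|Y^{j-1},X^n}$ collapses to $\otimes_{j=0}^i P_{Y_j|Y^{j-1},X^j}$, depending on $x^n$ only through $x^i$; hence $Y^i\leftrightarrow X^i\leftrightarrow X_{i+1}^n$, and projecting the future block onto its first coordinate yields the chain $Y^i\leftrightarrow X^i\leftrightarrow X_{i+1}$ of $3)$. For $3)\Rightarrow 1)$: I write the interleaved chain rule $P_{0,n}=\otimes_{i=0}^n P_{X_i|X^{i-1},Y^{i-1}}\otimes P_{Y_i|Y^{i-1},X^i}$, note that the $Y$-factors are automatically the nonanticipative kernels $q_i$ (in this ordering $X_{i+1},\ldots$ follow $Y_i$), and invoke $3)$ in the equivalent form $P_{X_{i+1}|X^i,Y^i}=P_{X_{i+1}|X^i}$ to replace every $X$-factor by the source factor $P_{X_i|X^{i-1}}$. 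The product then equals $\mu_{0,n}\otimes\overrightarrow{q}_{0,n}$; comparing with $\mu_{0,n}\otimes q_{0,n}$ and using a.s.-uniqueness of the $X^n$-kernel gives $q_{0,n}=\overrightarrow{q}_{0,n}$ a.s., i.e. $1)$.

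Finally, the Markov chain $X_{i+1}^n\leftrightarrow X^i\leftrightarrow Y^i$ reads $X_{i+1}^n\perp Y^i\mid X^i$; by the decomposition property of conditional independence (project the future block $X_{i+1}^n=(X_{i+1},X_{i+2}^n)$ onto its first coordinate) this implies $X_{i+1}\perp Y^i\mid X^i$, which is $3)$, and hence, via the cycle above, each of $1),2),3)$.

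The step I expect to be the genuine obstacle is not the algebra of Markov chains but the measure-theoretic scaffolding it rests on: the a.s.-uniqueness of the disintegrations and the legitimacy of ``peeling off'' one conditional kernel at a time have to be justified on abstract Polish spaces, and with extra care under the finitely additive measures admitted in the paper, for which existence and uniqueness of Radon--Nikodym derivatives and regular conditional distributions are delicate (cf. \cite{maynard79}). Once the two factorizations are known to be unique up to $\mu_{0,n}$-null sets, every equality displayed above is read as an a.s. equality and the three equivalences follow.
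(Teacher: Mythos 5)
The paper offers no proof of this lemma---the authors declare it ``straightforward'' and omit the derivation---so there is nothing of theirs to compare against; your argument is correct and is surely the intended one. The cycle $1)\Rightarrow 2)\Rightarrow 3)\Rightarrow 1)$ works exactly as you describe: marginalizing the future $y$'s out of the product kernel and peeling off the last factor for $1)\Rightarrow 2)$; collapsing $P_{Y^i|X^n}$ to a $\sigma(X^i)$-measurable kernel and invoking a.s.-uniqueness of conditional distributions for $2)\Rightarrow 3)$; and, for $3)\Rightarrow 1)$, substituting $P_{X_{i+1}|X^i,Y^i}=P_{X_{i+1}|X^i}$ into the interleaved chain rule, rearranging, and comparing the result with $\mu_{0,n}\otimes q_{0,n}$. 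This last step is the only non-mechanical one---the hypothesis $3)$ involves only the one-step-ahead variable $X_{i+1}$ rather than the whole block $X_{i+1}^n$, yet it suffices precisely because the interleaved factorization never conditions more than one step ahead---and you handle it correctly. The caveat you flag is also the right one: every step leans on existence and a.s.-uniqueness of disintegrations and on a Fubini/Ionescu--Tulcea rearrangement of product kernels, which is standard for countably additive measures on Polish spaces but genuinely delicate for the finitely additive measures the paper also admits; since the authors gloss over this entirely, your proof is at least as complete as anything in the paper.
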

\begin{proof}
This is straight forward hence the derivation is omitted.
\end{proof}
\vspace{0.2cm}
\noi According to Lemma~\ref{lem1}$-1)$, for a restricted nonanticipative stochastic kernel the mutual information becomes
\begin{eqnarray}
I(X^n;Y^n)&=&\int_{{\cal X}_{0,n} \times {\cal Y}_{0,n}} \log \Big( \frac{ \overrightarrow{q}_{0,n}(d y^n; x^n)}{\nu_{0,n}(dy^n)} \Big){\overrightarrow q}_{0,n}(dy^n;dx^n)\otimes\mu_{0,n}(dx^n) \nonumber  \\
&\equiv& {\mathbb I}(\mu_{0,n},\overrightarrow{q}_{0,n} )  \label{ex11}
\end{eqnarray}
where (\ref{ex11}) states that $I(X^n;Y^n)$ is a functional of $\{\mu_{0,n},{\overrightarrow q}_{0,n}\}$. Hence, nonanticipative RDF is defined by optimizing ${\mathbb I}(\mu_{0,n},{q}_{0,n})$ over ${q}_{0,n}{\in}Q_{0,n}(D)$ subject to the realizability constraint $q_{0,n}(dy^n;x^n)={\overrightarrow q}_{0,n}(dy^n;x^n)-a.s.,$ which satisfies a distortion constraint.
\begin{definition}\label{def1}
$(${\bf Nonanticipative Rate Distortion Function}$)$
Suppose $d_{0,n}(x^n,y^n)\triangleq\sum^n_{i=0}\rho_{0,i}$ $(x^i,y^i)$, where $\rho_{0,i}: {\cal X}_{0,i}  \times {\cal Y}_{0,i}\rightarrow [0, \infty]$, is a sequence of ${\cal B}({\cal X}_{0,i}) \times {\cal B }( {\cal Y}_{0,i})$-measurable distortion functions, for $i=0,1,\ldots,n$, and let $\overrightarrow{Q}_{0,n}(D)$ (assuming is nonempty) denotes the average distortion or fidelity constraint defined by
\begin{eqnarray}
\overrightarrow{Q}_{0,n}(D)\tri Q_{0,n}(D)\bigcap\Big\{q_{0,n}{\in}{\cal Q}({\cal Y}_{0,n};{\cal X}_{0,n}):&q_{0,n}(dy^n;x^n)={\overrightarrow q}_{0,n}(dy^n;x^n)-a.s.\Big\}\label{eq2}
\end{eqnarray}
The nonanticipative RDF associated with the restricted nonanticipative stochastic kernel is defined by
\begin{eqnarray}
{R}^{na}_{0,n}(D) \tri \inf_{{{q}_{0,n}\in \overrightarrow{Q}_{0,n}(D)}}{\mathbb I}(\mu_{0,n},{q}_{0,n}).\label{ex12}
\end{eqnarray}
\end{definition}
Thus, ${R}^{na}_{0,n}(D)$ is characterized by minimizing mutual information or equivalently $\mathbb{I}(\mu_{0,n},{q}_{0,n})$ over the ${Q}_{0,n}(D)$ and the nonanticipative constraint (\ref{eq18}). In the work of \cite{gorbunov-pinsker}, nonanticipative RDF is called $\epsilon$-entropy and nonanticipation is defined via $X_{i+1}^n\leftrightarrow{X^i}\leftrightarrow{Y^i}$, forms a MC for each $i=0,1,\ldots,n-1$, which implies $q_{0,n} (dy^n; x^n)={\overrightarrow q}_{0,n}(dy^n;x^n)$. Clearly, Gorbunov and Pinsker \cite{gorbunov-pinsker} nonanticipative RDF which imposes the constraint $X_{i+1}^n\leftrightarrow{X}^i\leftrightarrow{Y^i}$ forms a MC for each $i=0,1,\ldots,n-1$, implies $P_{X_{i+1}|X^{i},Y^i}(dx_{i+1}|x^i,y^i)=P_{X_{i+1}|X^i}(dx_{i+1}|x^i)-a.s.,~i=0,1,\ldots,n-1$, and hence, it does not allow the generalizations discussed in Remark~\ref{generalizations}.


\section{Existence of Optimal Reconstruction Kernel}\label{existence}

\par In this section, appropriate topologies and function spaces are introduced and existence of the minimizing nonanticipative product kernel in (\ref{ex12}) is proved. The construction of spaces is based on \cite{farzad06}.
\subsection{Abstract Spaces}
Let $BC({\cal Y}_{0,n})$ denote the vector space of bounded continuous real valued functions defined
on the Polish space ${\cal Y}_{0,n}$. Furnished with the sup norm
topology, this is a Banach space. Denote by $L_1(\mu_{0,n}, BC({\cal Y}_{0,n}))$ the space of all $\mu_{0,n}$-integrable functions defined on  ${\cal X}_{0,n}$ with values in $BC({\cal Y}_{0,n}),$ so that for each $\phi \in L_1(\mu_{0,n}, BC({\cal Y}_{0,n}))$ its norm is defined by
\begin{eqnarray}
\parallel \phi \parallel_{\mu_{0,n}} \triangleq \int_{{\cal X}_{0,n}} ||\phi(x^n,\cdot)||_{BC({\cal Y}_{0,n})} \mu_{0,n}(dx^n) <\infty\nonumber.
\end{eqnarray}
The norm topology $\parallel{\phi}\parallel_{\mu_{0,n}}$, makes $L_1(\mu_{0,n}, BC({\cal Y}_{0,n}))$ a Banach space. The topological dual of $BC({\cal Y}_{0,n})$ denoted by $ \Big( BC({\cal Y}_{0,n})\Big)^*$ is isometrically isomorphic to the Banach space of finitely additive regular bounded signed measures on ${\cal Y}_{0,n}$ \cite{dunford1988}, denoted by $M_{rba}({\cal Y}_{0,n})$. Let $\Pi_{rba}({\cal Y}_{0,n})\subset M_{rba}({\cal Y}_{0,n})$ denote the set of regular bounded finitely additive probability measures on ${\cal Y}_{0,n}$.  Clearly if ${\cal Y}_{0,n}$ is compact, then $\Big(BC({\cal Y}_{0,n})\Big)^*$ will be isometrically isomorphic to the space of countably additive signed measures, as
in \cite{csiszar74}. It follows from the theory of ``lifting" \cite{tulcea1969} that the dual
of the space $L_1(\mu_{0,n},BC({\cal Y}_{0,n}))$ is $L_{\infty}^w(\mu_{0,n}, M_{rba}({\cal Y}_{0,n}))$, denoting the space of all $M_{rba}({\cal Y}_{0,n})$ valued functions $\{q\}$ which are weak$^*$-measurable in the sense that for each $\phi \in
BC({\cal Y}_{0,n}),$  $x^n \rightarrow q_{x^n}(\phi) \triangleq \int_{{\cal Y}_{0,n}}\phi(y^n) q(dy^n;x^n)$ is $\mu_{0,n}$-measurable and $\mu_{0,n}$-essentially
bounded.
\subsection{Weak$^*$-Compactness and Existence}
Next, we prepare to prove existence of solution to $R_{0,n}^{na}(D)$. Define an admissible set of stochastic kernels associated with classical rate distortion function by
\begin{eqnarray*}
Q_{ad}\triangleq L_{\infty}^w(\mu_{0,n}, \Pi_{rba}({\cal Y}_{0,n})) \subset L_{\infty}^w(\mu_{0,n}, M_{rba}({\cal Y}_{0,n})).
\end{eqnarray*}
Clearly, $Q_{ad}$ is a unit sphere in $L_{\infty}^w(\mu_{0,n}, M_{rba}({\cal Y}_{0,n}))$. For each $\phi{\in}L_1(\mu_{0,n}, BC({\cal Y}_{0,n}))$ we can define a linear functional on $L_{\infty}^w(\mu_{0,n}, M_{rba}({\cal Y}_{0,n}))$ by
\begin{eqnarray*}
\ell_{\phi}(q_{0,n})\triangleq\int_{{\cal X}_{0,n}}\Big( \int_{{\cal Y}_{0,n}} \phi(x^n,y^n)q_{0,n}(dy^n;x^n) \Big)\mu_{0,n}(dx^n).
\end{eqnarray*}
This is a bounded, linear and weak$^*$-continuous functional on $L_{\infty}^w(\mu_{0,n}, M_{rba}({\cal Y}_{0,n}))$ as it is shown below.
\begin{eqnarray*}
|\ell_{\phi}(q_{0,n})|&=&\bigg{|}\int_{{\cal X}_{0,n}}\Big( \int_{{\cal Y}_{0,n}} \phi(x^n,y^n)q_{0,n}(dy^n;x^n) \Big)\mu_{0,n}(dx^n)\bigg{|}\nonumber\\
&\leq&\int_{{\cal X}_{0,n}}\bigg{|}\Big( \int_{{\cal Y}_{0,n}} \phi(x^n,y^n)q_{0,n}(dy^n;x^n) \Big)\bigg{|}\mu_{0,n}(dx^n)\nonumber\\
&\leq&\int_{{\cal X}_{0,n}}||\phi(x^n,\cdot)||_{BC({\cal Y}_{0,n})}||q_{0,n}(\cdot;x^n)||_{TV}\mu_{0,n}(dx^n)\nonumber\\
&\leq&\int_{{\cal X}_{0,n}}||\phi(x^n,\cdot)||_{BC({\cal Y}_{0,n})}\mu_{0,n}(dx^n)\nonumber\\
&=&||\phi||_{L_1(\mu_{0,n}, BC({\cal Y}_{0,n}))}<\infty.
\end{eqnarray*}
So given $\phi\in L_1(\mu_{0,n},BC({\cal Y}_{0,n}))$, there exists a $c_{\phi}<\infty$ such that $||\ell_{\phi}||<c_{\phi}$. Therefore, $\ell_{\phi}$ is a bounded, linear functional on $L_{\infty}^w(\mu_{0,n}, \Pi_{rba}({\cal Y}_{0,n}))$ and hence on $L_{\infty}^w(\mu_{0,n}, M_{rba}({\cal Y}_{0,n}))$. Thus, it is continuous in the weak$^*$-sense.
\par For $d_{0,n}: {\cal X}_{0,n}  \times {\cal Y}_{0,n}\rightarrow [0, \infty)$ measurable and $d_{0,n}{\in}L_1(\mu_{0,n},BC({\cal Y}_{0,n}))$ the distortion constraint set of the classical RDF is given by
\begin{eqnarray*}
Q_{0,n}(D)\triangleq\{q{\in}Q_{ad}:\ell_{d_{0,n}}(q_{0,n}){\leq}D\}.
\end{eqnarray*}
The next result is shown in \cite{farzad06}; it utilizes the Alaoglu's theorem \cite{dunford1988}, which states that a closed and bounded subset of a weak$^*$-compact set is weak$^*$-compact. These will be used to establish existence of minimizer in $\overrightarrow{Q}_{ad}$ for the nonanticipative RDF $R_{0,n}^{na}(D)$.
\begin{lemma}\label{$Q_ad$ w$^*$-closed}\cite{farzad06}
For ${d_{0,n}}{\in}L_1(\mu_{0,n},BC({\cal Y}_{0,n}))$, the set $Q_{0,n}(D)$ is bounded and weak$^*$-closed subset of $Q_{ad}$ (hence weak$^*$-compact).
\end{lemma}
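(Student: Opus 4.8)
The plan is to exploit the duality structure already assembled in this section: $Q_{ad} = L_{\infty}^w(\mu_{0,n}, \Pi_{rba}({\cal Y}_{0,n}))$ sits inside the closed unit ball of the dual space $L_{\infty}^w(\mu_{0,n}, M_{rba}({\cal Y}_{0,n})) = \big(L_1(\mu_{0,n}, BC({\cal Y}_{0,n}))\big)^*$, and to combine Alaoglu's theorem with the weak$^*$-continuity of the distortion functional $\ell_{d_{0,n}}$ just verified above the statement. First I would dispose of boundedness: every element of $Q_{ad}$ is a $\Pi_{rba}({\cal Y}_{0,n})$-valued map, so $||q_{0,n}(\cdot;x^n)||_{TV}=1$ for $\mu_{0,n}$-a.e.\ $x^n$; hence $Q_{ad}$ is contained in the unit sphere of the dual, and the subset $Q_{0,n}(D)\subseteq Q_{ad}$ is automatically norm-bounded.

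Second, I would show that $Q_{ad}$ itself is weak$^*$-compact. By Alaoglu's theorem the closed unit ball of $\big(L_1(\mu_{0,n}, BC({\cal Y}_{0,n}))\big)^*$ is weak$^*$-compact, so it suffices to verify that $Q_{ad}$ is a weak$^*$-closed subset of that ball. I would take a net $\{q^{\alpha}\}$ in $Q_{ad}$ converging weak$^*$ to some $q$, which by definition means $\ell_{\phi}(q^{\alpha})\to\ell_{\phi}(q)$ for every $\phi\in L_1(\mu_{0,n}, BC({\cal Y}_{0,n}))$. Testing against functions $\phi(x^n,y^n)=g(x^n)$ that are constant in $y^n$ (with $g$ bounded and $\mu_{0,n}$-integrable) preserves normalization, since $\int_{{\cal X}_{0,n}} g(x^n)\, q({\cal Y}_{0,n};x^n)\,\mu_{0,n}(dx^n)=\int_{{\cal X}_{0,n}} g(x^n)\,\mu_{0,n}(dx^n)$ for all such $g$ forces $q({\cal Y}_{0,n};x^n)=1$ for $\mu_{0,n}$-a.e.\ $x^n$. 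Testing against products $\phi(x^n,y^n)=g(x^n)h(y^n)$ with $g\ge 0$ bounded and $h\in BC({\cal Y}_{0,n})$, $h\ge 0$, preserves nonnegativity of the set function $q(\cdot;x^n)$; letting $h$ range over a countable family separating measures on the Polish space ${\cal Y}_{0,n}$ upgrades this to $q(\cdot;x^n)\in\Pi_{rba}({\cal Y}_{0,n})$ for $\mu_{0,n}$-a.e.\ $x^n$. Hence $q\in Q_{ad}$, so $Q_{ad}$ is weak$^*$-closed and therefore weak$^*$-compact.

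Third, I would handle the distortion constraint. Since $d_{0,n}\in L_1(\mu_{0,n}, BC({\cal Y}_{0,n}))$, the computation preceding the statement shows that $\ell_{d_{0,n}}$ is a bounded, linear, weak$^*$-continuous functional, so the sublevel set $\{q:\ell_{d_{0,n}}(q)\le D\}=\ell_{d_{0,n}}^{-1}\big((-\infty,D]\big)$ is the preimage of a closed half-line under a weak$^*$-continuous map, hence weak$^*$-closed. Consequently $Q_{0,n}(D)=Q_{ad}\cap\ell_{d_{0,n}}^{-1}\big((-\infty,D]\big)$ is a weak$^*$-closed subset of the weak$^*$-compact set $Q_{ad}$, and a closed subset of a compact set is compact, which yields the asserted weak$^*$-compactness.

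The main obstacle is the second step, namely confirming that the probability-measure structure is preserved under weak$^*$ limits — that both the normalization $q({\cal Y}_{0,n};\cdot)=1$ and the nonnegativity of $q(\cdot;x^n)$ survive the limit \emph{for $\mu_{0,n}$-almost every} $x^n$ rather than merely in an integrated sense. This is where the finitely additive, regular nature of $M_{rba}({\cal Y}_{0,n})$ and the separability of the Polish space ${\cal Y}_{0,n}$ must be used carefully to pass from an inequality "$\int g(\cdot)\,(\cdots)\ge 0$ holds for all admissible $g$" to the pointwise almost-everywhere conclusion, via a countable separating family of test functions; everything else reduces to the two standard facts, Alaoglu compactness and continuity of $\ell_{d_{0,n}}$, that are already available.
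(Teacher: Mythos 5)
First, a point of reference: the paper does not actually prove this lemma --- it is quoted from \cite{farzad06} with only the remark that it rests on Alaoglu's theorem --- so your outline is being compared against that indicated route rather than against a written-out argument. Your first and third steps are sound and are exactly what the citation points to: $Q_{0,n}(D)$ lies in the unit sphere of $\big(L_1(\mu_{0,n},BC({\cal Y}_{0,n}))\big)^*$, hence is bounded; and since $\ell_{d_{0,n}}$ is evaluation against the predual element $d_{0,n}\in L_1(\mu_{0,n},BC({\cal Y}_{0,n}))$, it is weak$^*$-continuous by the very definition of the weak$^*$ topology, so $\{q:\ell_{d_{0,n}}(q)\le D\}$ is weak$^*$-closed and $Q_{0,n}(D)$ is a weak$^*$-closed, bounded subset of $Q_{ad}$, which combined with Alaoglu gives compactness.

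The genuine gap is in your second step, precisely at the point you flag as ``the main obstacle.'' To conclude that a weak$^*$ limit $q$ of a net in $Q_{ad}$ again satisfies $q(\cdot;x^n)\in\Pi_{rba}({\cal Y}_{0,n})$ for $\mu_{0,n}$-almost every $x^n$, you propose to upgrade the integrated inequalities to pointwise ones using ``a countable family separating measures on the Polish space ${\cal Y}_{0,n}$.'' No such family exists inside $BC({\cal Y}_{0,n})$ when ${\cal Y}_{0,n}$ is non-compact: $BC({\cal Y}_{0,n})$ is then non-separable, and a countable subset of it separating $M_{rba}({\cal Y}_{0,n})$ would render the weak$^*$ topology on the dual unit ball metrizable, which it is not. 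The countable-test-function trick works for countably additive Borel probability measures (via bounded Lipschitz functions), but the limit object here is a priori only an element of the dual $L_{\infty}^w(\mu_{0,n},M_{rba}({\cal Y}_{0,n}))$, i.e., a finitely additive measure-valued map obtained through the lifting theorem \cite{tulcea1969}; for each fixed $h\ge 0$ you get a null set $N_h$ outside of which $q_{x^n}(h)\ge 0$, and there is no countable reduction that yields a single null set valid for all $h\ge 0$. The correct repair is to work at the level of the functional: positivity ($\ell_{\phi}(q)\ge 0$ for all $\phi\ge 0$) and normalization ($\ell_{g\cdot 1}(q)=\int g\,d\mu_{0,n}$) are manifestly preserved under weak$^*$ limits, and the lifting --- being a positive, multiplicative selection --- then furnishes a $\Pi_{rba}({\cal Y}_{0,n})$-valued representative; this is exactly the content of the weak$^*$-compactness of $L_{\infty}^w(\mu_{0,n},\Pi_{rba}({\cal Y}_{0,n}))$ established in \cite{farzad06}, which the paper simply imports. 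Everything else in your proposal stands.
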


Now we prepare to consider the problem stated in Definition~\ref{def1}. First, we show weak$^*$-compactness of $\overrightarrow{Q}_{ad}$ defined as a subset of $Q_{ad}$ as follows.
\begin{eqnarray*}
{\overrightarrow Q}_{ad}=\Big\{q_{0,n}\in {Q_{ad}}:q_{0,n}(dy^n;x^n)={\overrightarrow q}_{0,n}(dy^n;x^n)-a.s.\Big\}.
\end{eqnarray*}
The average distortion function for the nonanticipative RDF is defined by
\begin{eqnarray*}
\overrightarrow{Q}_{0,n}(D)&\triangleq& \Big\{{q}_{0,n} \in {Q}_{ad} :\ell_{d_{0,n}}({q}_{0,n})\triangleq \int_{{\cal X}_{0,n}} \biggr(\int_{{\cal Y}_{0,n}}d_{0,n}(x^n,y^n){q}_{0,n}(dy^n;x^n) \biggr)\\
&&\qquad\qquad\qquad\otimes\mu_{0,n}(dx^n)\leq D\Big\}\bigcap{\overrightarrow Q}_{ad}\\
&=&\Big\{{q}_{0,n} \in {\overrightarrow Q}_{ad} :\ell_{d_{0,n}}({q}_{0,n})\triangleq \int_{{\cal X}_{0,n}} \biggr(\int_{{\cal Y}_{0,n}}d_{0,n}(x^n,y^n){ q}_{0,n}(dy^n;x^n) \biggr)\\
&&\qquad\qquad\qquad\otimes\mu_{0,n}(dx^n)\leq{D}\Big\},~D\geq0.
\end{eqnarray*}
Since we are interested in proving existence of nonanticipative RDF of Definition~\ref{def1}, we shall first show that $\overrightarrow{Q}_{ad}$ is weak$^*$-closed, and then utilize Lemma~\ref{$Q_ad$ w$^*$-closed} to establish weak$^*$-compactness for $\overrightarrow{Q}_{ad}$ and then weak$^*$-compactness of $\overrightarrow{Q}_{0,n}(D)$.
\begin{lemma}\label{weakstar-closed}
Let ${\cal X}_{0,n}$ and ${\cal Y}_{0,n}$ be Polish spaces and introduce the net $\{q^{\alpha}_i(dy_i;y^{i-1},x^i)\}$, where $\alpha\in({\cal D},\succeq)$, and $q^{\alpha}_i\in{\cal Q}({\cal Y}_i;{\cal Y}_{0,i-1},{\cal X}_{0,i})$. Assume
\begin{description}
\item[(a)] $q^{\alpha}_i(\cdot;y^{i-1},x^i)\buildrel w^* \over {\longrightarrow}q_i^{0}(\cdot;y^{i-1},x^i)$ for $i=1,\ldots,n$;
\item[(b)] for all $h_i(\cdot,\cdot){\in}L_1(\mu_{i},BC({\cal Y}_{i}))$ the function
\begin{eqnarray}
(x^{i},y^{i-1})\in{\cal X}_{0,i}\times{\cal Y}_{0,i-1}\longmapsto\int_{{\cal X}_i}\int_{{\cal Y}_i}h_i(y)q_i(dy;y^{i-1},x^i)\mu_i(dx_i;x^{i-1})\nonumber
\end{eqnarray}
is $L_1(\mu_{0,i-1},BC({\cal Y}_{0,i-1}))$ for $i=0,1,\ldots,n$;
\item[(c)] for all $h_i(\cdot,\cdot){\in}L_1(\mu_{i},BC({\cal Y}_{i}))$ and $\forall~\epsilon>0$ there exists $\alpha\succ\alpha_{\epsilon}$ such that
\begin{eqnarray}
&&\int_{{\cal X}_i}\sup_{y^{i-1}\in{\cal Y}_{0,i-1}}\bigg{|}\int_{{\cal Y}_i}h_i(x_i,y_i)q_i^{\alpha}(dy_i;y^{i-1},x^i)\nonumber\\
&&-\int_{{\cal Y}_i}h_i(x_i,y_i)q_i^0(dy_i;y^{i-1},x^i)\bigg{|}\mu_i(dx_i;x^{i-1})<\epsilon,\quad\forall~x^{i-1}\in{\cal X}_{0,i-1}\nonumber.
\end{eqnarray}
\end{description}
Then the convolution of stochastic kernels converges in weak$^*$-sense as follows.
\begin{equation}
{\overrightarrow{q}}_{0,n}^{\alpha}\buildrel w^* \over \longrightarrow{\overrightarrow{q}}_{0,n}^{0}\label{eq.8}
\end{equation}
e.g, the set $\overrightarrow{Q}_{ad}$ is weak$^*$-closed.
\end{lemma}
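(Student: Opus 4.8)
The plan is to prove the weak$^*$ convergence (\ref{eq.8}) by induction on the number of factors in the convolution, and then to read off weak$^*$-closedness of $\overrightarrow{Q}_{ad}$. Recall that $\overrightarrow{q}^{\alpha}_{0,n}\buildrel w^* \over\longrightarrow\overrightarrow{q}^{0}_{0,n}$ means precisely that $\ell_{\phi}(\overrightarrow{q}^{\alpha}_{0,n})\to\ell_{\phi}(\overrightarrow{q}^{0}_{0,n})$ for every test function $\phi\in L_1(\mu_{0,n},BC({\cal Y}_{0,n}))$. Since each weak$^*$ limit $q^0_i$ furnished by (a) is again a stochastic kernel of the nonanticipative type $q^0_i\in{\cal Q}({\cal Y}_i;{\cal Y}_{0,i-1}\times{\cal X}_{0,i})$, its convolution $\overrightarrow{q}^{0}_{0,n}=\otimes_{i=0}^n q^0_i$ again belongs to $\overrightarrow{Q}_{ad}$; once (\ref{eq.8}) is established, uniqueness of weak$^*$ limits identifies the limit of any convergent net drawn from $\overrightarrow{Q}_{ad}$ with such a convolution, which is exactly the assertion that $\overrightarrow{Q}_{ad}$ is weak$^*$-closed.

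For the induction, let the inductive hypothesis be that $\overrightarrow{q}^{\alpha}_{0,n-1}\buildrel w^* \over\longrightarrow\overrightarrow{q}^{0}_{0,n-1}$, the single-factor case being condition (a) combined with the integrated estimate (c) at $i=0$ (where the supremum over the empty block $y^{-1}$ is vacuous). Fixing $\phi\in L_1(\mu_{0,n},BC({\cal Y}_{0,n}))$, I factor the convolution as $\overrightarrow{q}^{\alpha}_{0,n}=\overrightarrow{q}^{\alpha}_{0,n-1}\otimes q^{\alpha}_n$ and disintegrate $\mu_{0,n}(dx^n)=\mu_n(dx_n;x^{n-1})\,\mu_{0,n-1}(dx^{n-1})$, so that $\ell_{\phi}(\overrightarrow{q}^{\alpha}_{0,n})$ becomes an iterated integral whose innermost integration is against $q^{\alpha}_n$. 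Introduce
\begin{eqnarray*}
g^{\alpha}(x^n,y^{n-1})&\triangleq&\int_{{\cal Y}_n}\phi(x^n,y^n)\,q^{\alpha}_n(dy_n;y^{n-1},x^n),
\end{eqnarray*}
with $g^0$ defined analogously from $q^0_n$, and set $G^0(x^{n-1},y^{n-1})\triangleq\int_{{\cal X}_n}g^0(x^n,y^{n-1})\,\mu_n(dx_n;x^{n-1})$. By assumption (b) at $i=n$, $G^0\in L_1(\mu_{0,n-1},BC({\cal Y}_{0,n-1}))$, so $G^0$ is an admissible test function for the inductive hypothesis.

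I then split the difference $\ell_{\phi}(\overrightarrow{q}^{\alpha}_{0,n})-\ell_{\phi}(\overrightarrow{q}^{0}_{0,n})$ by adding and subtracting $g^0$ integrated against $\overrightarrow{q}^{\alpha}_{0,n-1}$, obtaining a term
\begin{eqnarray*}
(\mathrm{I})&=&\int_{{\cal X}_{0,n}}\int_{{\cal Y}_{0,n-1}}\big(g^{\alpha}-g^0\big)(x^n,y^{n-1})\,\overrightarrow{q}^{\alpha}_{0,n-1}(dy^{n-1};x^{n-1})\,\mu_{0,n}(dx^n)
\end{eqnarray*}
and a term $(\mathrm{II})$ in which $g^0$ is integrated against $\overrightarrow{q}^{\alpha}_{0,n-1}-\overrightarrow{q}^{0}_{0,n-1}$. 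For $(\mathrm{II})$, since $\overrightarrow{q}^{\alpha}_{0,n-1}$ does not depend on $x_n$, Fubini lets me carry out the $x_n$-integration first, turning $(\mathrm{II})$ into $\ell_{G^0}(\overrightarrow{q}^{\alpha}_{0,n-1})-\ell_{G^0}(\overrightarrow{q}^{0}_{0,n-1})$, which tends to $0$ by the inductive hypothesis. For $(\mathrm{I})$, I use that $\overrightarrow{q}^{\alpha}_{0,n-1}(\cdot;x^{n-1})$ is a probability measure to bound the $y^{n-1}$-integral by the supremum over $y^{n-1}$, giving
\begin{eqnarray*}
|(\mathrm{I})|&\leq&\int_{{\cal X}_{0,n-1}}\int_{{\cal X}_n}\sup_{y^{n-1}\in{\cal Y}_{0,n-1}}\big|g^{\alpha}-g^0\big|(x^n,y^{n-1})\,\mu_n(dx_n;x^{n-1})\,\mu_{0,n-1}(dx^{n-1}),
\end{eqnarray*}
where the inner double integral is exactly the quantity controlled by assumption (c) at $i=n$; hence, given $\epsilon>0$, it is below $\epsilon$ uniformly in $x^{n-1}$ for $\alpha\succ\alpha_{\epsilon}$, so $|(\mathrm{I})|<\epsilon$ because $\mu_{0,n-1}$ is a probability measure. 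Combining the two estimates yields $\ell_{\phi}(\overrightarrow{q}^{\alpha}_{0,n})\to\ell_{\phi}(\overrightarrow{q}^{0}_{0,n})$, completing the induction.

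The crux, and the only genuinely delicate point, is term $(\mathrm{I})$: here both the integrand $g^{\alpha}$ and the measure $\overrightarrow{q}^{\alpha}_{0,n-1}$ vary with $\alpha$ simultaneously, so the mere pointwise convergence $g^{\alpha}\to g^0$ coming from (a) cannot be transferred through integration against a moving kernel (nets, unlike sequences, admit no dominated-convergence shortcut). Assumption (c) is precisely the uniform-in-$y^{n-1}$, $x$-integrated strengthening of (a) that survives this integration, while assumption (b) is what keeps the reduced test function $G^0$ inside the space on which the inductive hypothesis applies. I would still verify carefully the measurability and Fubini manipulations used to pass from $(\mathrm{II})$ to $\ell_{G^0}$, and the fact that for fixed $(x^{n-1},y^{n-1})$ the slice $y_n\mapsto\phi(x^n,y^n)$ lies in $BC({\cal Y}_n)$ so that (a) and (c) genuinely apply, but these are routine given the Polish-space setting and the standing integrability hypotheses.
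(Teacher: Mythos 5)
Your proof is correct and follows essentially the same route as the paper's: induction on the number of kernels, peeling off the last factor, and splitting the difference by adding and subtracting the cross term, with one piece controlled by the uniform integrated estimate (c) and the other reduced via (b) to the inductive hypothesis. The only notable difference is that you work with a general test function $\phi\in L_1(\mu_{0,n},BC({\cal Y}_{0,n}))$, whereas the paper restricts to product test functions $\otimes_{i} h_i(x_i,y_i)$, which is precisely what makes hypothesis (c) --- stated for functions of $(x_i,y_i)$ only --- literally applicable; this is the point you yourself flag as needing verification.
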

\begin{proof}
See Appendix.
\end{proof}
\vspace*{0.2cm}
\noi Next, we utilize the weak$^*$-compactness of $\overrightarrow{Q}_{ad}$ to show that $\overrightarrow{Q}_{0,n}(D)$ is also weak$^*$-compact.
\begin{remark}\label{remove_boundness}
There are certain important cases in which $d_{0,n}$ may not be bounded. This is the case when $d_{0,n}$ is a metric of a linear metric space. The next theorem is crucial in showing the weak$^*$-closedness property of $\overrightarrow{Q}_{0,n}(D)$ to those distortion functions $d_{0,n}$ which are not necessarily bounded, since they are measurable functions from the class $d_{0,n}\in{L}_1(\mu_{0,n},BC({\cal Y}_{0,n}))$.
\end{remark}
\begin{theorem}\label{weakstar-compact_2}
Let  ${\cal X}_{0,n},{\cal Y}_{0,n}$ be two Polish spaces  and $d_{0,n} :{\cal X}_{0,n}\times{\cal Y}_{0,n}\mapsto[0,\infty]$, a measurable, nonnegative, extended real valued function, such that for a fixed $x^n \in {\cal X}_{0,n}$, $y^n \rightarrow d(x^n,\cdot)$ is continuous on ${\cal Y}_{0,n}$, for $\mu_{0,n}$-almost all $x^n \in {\cal X}_{0,n}$ and suppose the conditions of Lemma~\ref{weakstar-closed} hold. For any $D \in [0,\infty)$, the set ${\overrightarrow Q}_{0,n}(D)$  is a weak$^*$-closed subset of ${\overrightarrow Q}_{ad}$ and hence weak$^*$-compact.
\end{theorem}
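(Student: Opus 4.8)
The plan is to reduce the claim to two facts: that $\overrightarrow{Q}_{ad}$ is already weak$^*$-compact, and that the distortion sublevel set is weak$^*$-closed within it. By Lemma~\ref{weakstar-closed}, $\overrightarrow{Q}_{ad}$ is weak$^*$-closed, and since it sits inside the unit sphere $Q_{ad}=L_{\infty}^w(\mu_{0,n},\Pi_{rba}({\cal Y}_{0,n}))$ of the dual space $L_{\infty}^w(\mu_{0,n},M_{rba}({\cal Y}_{0,n}))=(L_1(\mu_{0,n},BC({\cal Y}_{0,n})))^*$, Alaoglu's theorem makes $Q_{ad}$ weak$^*$-compact; a weak$^*$-closed subset of a weak$^*$-compact set is weak$^*$-compact, so $\overrightarrow{Q}_{ad}$ is weak$^*$-compact. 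Writing $\overrightarrow{Q}_{0,n}(D)=\overrightarrow{Q}_{ad}\cap\{q_{0,n}\in Q_{ad}:\ell_{d_{0,n}}(q_{0,n})\le D\}$, it then suffices to show this intersection is weak$^*$-closed, after which weak$^*$-compactness follows again from the closed-subset-of-compact principle.

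The difficulty is that $d_{0,n}$ is only extended-real-valued and continuous in $y^n$, not bounded, so $\ell_{d_{0,n}}$ is not a weak$^*$-continuous functional and the bounded-distortion argument (the result for $d_{0,n}\in L_1(\mu_{0,n},BC({\cal Y}_{0,n}))$ cited just before this theorem) does not apply directly. To circumvent this I would truncate: set $d_{0,n}^m(x^n,y^n)\triangleq\min\{d_{0,n}(x^n,y^n),m\}$ for $m\in\mathbb{N}$. Each $d_{0,n}^m$ is bounded by $m$ and, being the minimum of two continuous functions, remains continuous in $y^n$; hence $d_{0,n}^m\in L_1(\mu_{0,n},BC({\cal Y}_{0,n}))$, so the associated functional $\ell_{d_{0,n}^m}$ is bounded, linear and weak$^*$-continuous on $L_{\infty}^w(\mu_{0,n},M_{rba}({\cal Y}_{0,n}))$, exactly as established for $\ell_{\phi}$ with $\phi\in L_1(\mu_{0,n},BC({\cal Y}_{0,n}))$.

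Now take a net $\{q_{0,n}^{\alpha}\}\subset\overrightarrow{Q}_{0,n}(D)$ with $q_{0,n}^{\alpha}\buildrel w^* \over \longrightarrow q_{0,n}^{0}$. Weak$^*$-closedness of $\overrightarrow{Q}_{ad}$ from Lemma~\ref{weakstar-closed} gives $q_{0,n}^{0}\in\overrightarrow{Q}_{ad}$. For each fixed $m$, monotonicity $d_{0,n}^m\le d_{0,n}$ yields $\ell_{d_{0,n}^m}(q_{0,n}^{\alpha})\le\ell_{d_{0,n}}(q_{0,n}^{\alpha})\le D$, and weak$^*$-continuity of $\ell_{d_{0,n}^m}$ passes this to the limit, $\ell_{d_{0,n}^m}(q_{0,n}^{0})=\lim_{\alpha}\ell_{d_{0,n}^m}(q_{0,n}^{\alpha})\le D$. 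Finally, since $d_{0,n}^m\uparrow d_{0,n}$ pointwise as $m\to\infty$, the monotone convergence theorem applied first to the inner integral against $q_{0,n}^{0}(\cdot;x^n)$ and then to the outer integral against $\mu_{0,n}$ gives $\ell_{d_{0,n}^m}(q_{0,n}^{0})\uparrow\ell_{d_{0,n}}(q_{0,n}^{0})$, whence $\ell_{d_{0,n}}(q_{0,n}^{0})\le D$. Thus $q_{0,n}^{0}\in\overrightarrow{Q}_{0,n}(D)$, establishing weak$^*$-closedness and, together with the weak$^*$-compactness of $\overrightarrow{Q}_{ad}$, the asserted weak$^*$-compactness.

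I expect the main obstacle to be the monotone passage to the limit in $m$: because the kernels $q_{0,n}^{0}(\cdot;x^n)$ lie in $\Pi_{rba}({\cal Y}_{0,n})$ and are only finitely additive, the monotone convergence theorem must be invoked in its finitely additive form, and one must confirm that the measurability and integrability hypotheses inherited from Lemma~\ref{weakstar-closed} guarantee that the iterated integrals defining $\ell_{d_{0,n}^m}(q_{0,n}^{0})$ and $\ell_{d_{0,n}}(q_{0,n}^{0})$ are well defined and that the interchange of the $m$-limit with the $\mu_{0,n}$-integration is legitimate.
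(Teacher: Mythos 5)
Your proposal is correct and follows essentially the same route as the paper's own proof: truncate the distortion as $d_{0,n}\wedge k$, use weak$^*$-convergence (justified by the boundedness and continuity of the truncation) to pass the distortion bound $D$ to the limit kernel for each $k$, and then invoke monotone convergence as $k\uparrow\infty$, concluding weak$^*$-closedness and hence weak$^*$-compactness via Alaoglu. Your closing caveat about the finitely additive setting for the monotone convergence step is a legitimate point that the paper's proof also implicitly relies on but does not elaborate.
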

\begin{proof}
Let $\{{\overrightarrow q}_{0,n}^{\alpha}\}\in {\overrightarrow Q}_{0,n}(D) \subset {\overrightarrow Q}_{ad}$ be a net. Since ${\overrightarrow Q}_{ad}$ is weak$^*$-compact, there exists a subnet of the net $\{{\overrightarrow q}_{0,n}^{\alpha}\},$ relabelled as the original net, and an element ${\overrightarrow q}_{0,n}^0 \in {\overrightarrow Q}_{ad}$ such that ${\overrightarrow q}_{0,n}^{\alpha} \buildrel w^* \over \longrightarrow {\overrightarrow q}_{0,n}^0$\footnote{i.e.$\Big| \int_{{\cal X}_{0,n}} \int_{{\cal Y}_{0,n}} \phi(x^n,y^n)
\overrightarrow{q}_{0,n}^{\alpha}(dy^n;x^n)\otimes\mu_{0,n}(dx^n)-\int_{{\cal X}_{0,n}}\int_{{\cal Y}_{0,n}} \phi(x^n,y^n)\overrightarrow{q}_{0,n}^{0}(dy^n;x^n)\otimes\mu_{0,n}(dx^n)\Big| \longrightarrow 0$  for any $\phi \in L_1(\mu_{0,n};BC({\cal Y}_{0,n}))$.}. We must show that ${\overrightarrow q}_{0,n}^0  \in {\overrightarrow Q}_{0,n}(D).$ Considering  the sequence  $\{d_{0,n}^k \triangleq d_{0,n}\wedge k, k \in N\},$ which are bounded, measurable functions (continuous in the second argument), it follows from the weak$^*$-convergence of the sequence $\{{\overrightarrow q}_{0,n}^{\alpha}\}$  to ${\overrightarrow q}_{0,n}^0$ that
\begin{eqnarray}
&&\left.\int_{{\cal X}_{0,n}}\bigg( \int_{{\cal Y}_{0,n}}d_{0,n}^k(x^n,y^n){\overrightarrow q}_{0,n}^{0}(dy^n;x^n)\bigg)\mu_{0,n}(dx^n) \right.\nonumber \\[-1.5ex]
\label{eq.i.5}\\[-1.5ex]
&&\quad\left.=\lim_{\alpha}\int_{{\cal X}_{0,n}} \bigg( \int_{{\cal Y}_{0,n}}d_{0,n}^k(x^n,y^n){\overrightarrow q}_{0,n}^{\alpha}(dy^n;x^n)\bigg)\mu_{0,n}(dx^n) \right.\nonumber
\end{eqnarray}
for each $k\in N$. Since $d_{0,n}$ is non-negative  and  $d_{0,n}^k\uparrow d_{0,n}$ as
$k \longrightarrow \infty$ and ${\overrightarrow q}_{0,n}^{\alpha}\in {\overrightarrow Q}_{0,n}(D)$, we have
\begin{eqnarray*}
\int_{{\cal X}_{0,n}}\bigg( \int_{{\cal Y}_{0,n}}
&d_{0,n}^k&(x^n,y^n){\overrightarrow q}_{0,n}^{0}(dy^n;x^n)\bigg)\mu_{0,n}(dx^n)\nonumber\\
&=&\lim_{\alpha}\int_{{\cal X}_{0,n}} \biggl( \int_{{\cal Y}_{0,n}}d_{0,n}^k(x^n,y^n){\overrightarrow q}_{0,n}^{\alpha}(dy^n;x^n)\biggr)\mu_{0,n}(dx^n)\nonumber\\
&\leq&\lim_{\alpha}\int_{{\cal X}_{0,n}} \bigg( \int_{{\cal Y}_{0,n}}d_{0,n}(x^n,y^n){\overrightarrow q}_{0,n}^{\alpha}(dy^n;x^n)\bigg)\mu_{0,n}(dx^n)\leq D \label{eq.i.6}
\end{eqnarray*}
which is valid for all $k \in N$. Since $d_{0,n}^k
\uparrow d_{0,n}$ and they are non-negative, it follows from Lebesgue's
monotone convergence theorem and non-negativity of stochastic kernels that
\begin{eqnarray*}
\int_{{\cal X}_{0,n}}\bigg( \int_{{\cal Y}_{0,n}}d_{0,n}(x^n,y^n){\overrightarrow q}_{0,n}^{0}(dy^n;x^n)\bigg)\mu_{0,n}(dx^n)\leq D.
\end{eqnarray*}

This shows that the weak$^*$-limit ${\overrightarrow q}_{0,n}^0 \in{\overrightarrow Q}_{0,n}(D)$ and hence we have proved  that the set ${\overrightarrow Q}_{0,n}(D)$  is a weak$^*$-closed subset of ${\overrightarrow Q}_{ad}$. By Alaoglu's theorem \cite{dunford1988} being a weak$^*$-closed subset of a weak$^*$-compact set, it is weak$^*$-compact.
\end{proof}
\vspace*{0.2cm}
\noi Based on Theorem~\ref{weakstar-compact_2} and lower semicontinuity of relative entropy, we show existence of the optimal reconstruction conditional distribution for nonanticipative RDF.
\begin{theorem} \label{th3}$(${\bf Existence}$)$
Under the conditions of Theorem~\ref{weakstar-compact_2}, $R^{na}_{0,n}(D)$ has a minimum.
\end{theorem}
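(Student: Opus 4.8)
The plan is to recast $R^{na}_{0,n}(D)$ as the minimization of the functional $q_{0,n}\mapsto\mathbb{I}(\mu_{0,n},q_{0,n})$ over the constraint set $\overrightarrow{Q}_{0,n}(D)$ and then invoke the generalized Weierstrass theorem: a weak$^*$-lower semicontinuous functional attains its infimum on a weak$^*$-compact set. By Theorem~\ref{weakstar-compact_2} the domain $\overrightarrow{Q}_{0,n}(D)$ is already known to be weak$^*$-compact, so the whole burden of the proof is to show that $q_{0,n}\mapsto\mathbb{I}(\mu_{0,n},q_{0,n})$ is weak$^*$-lower semicontinuous on $\overrightarrow{Q}_{ad}$.

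To get at the lower semicontinuity I would first record how the associated measures depend on the kernel. From the construction (P1)--(P3), the joint measure $P_{0,n}=\mu_{0,n}\otimes q_{0,n}$ satisfies $\int_{{\cal X}_{0,n}\times{\cal Y}_{0,n}}\phi\, dP_{0,n}=\ell_{\phi}(q_{0,n})$ for every $\phi\in L_1(\mu_{0,n},BC({\cal Y}_{0,n}))$, and $\ell_{\phi}$ was already shown to be weak$^*$-continuous in $q_{0,n}$. Taking test functions depending only on $y^n$ shows that the marginal $\nu_{0,n}$, and hence the product measure $\pi_{0,n}=\mu_{0,n}\times\nu_{0,n}$, also vary weak$^*$-continuously with $q_{0,n}$. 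Thus along any net $q^{\alpha}_{0,n}\buildrel w^* \over \longrightarrow q^{0}_{0,n}$ in $\overrightarrow{Q}_{ad}$ one has both $P^{\alpha}_{0,n}\to P^{0}_{0,n}$ and $\pi^{\alpha}_{0,n}\to\pi^{0}_{0,n}$ in the relevant weak$^*$-sense.

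The key step is then to pass from convergence of the pair $(P_{0,n},\pi_{0,n})$ to a lower bound on $\mathbb{I}(\mu_{0,n},q_{0,n})=\mathbb{D}(P_{0,n}||\pi_{0,n})$. For this I would employ the Donsker--Varadhan variational representation
\begin{eqnarray*}
\mathbb{D}(P_{0,n}||\pi_{0,n})=\sup_{\psi\in BC({\cal X}_{0,n}\times{\cal Y}_{0,n})}\Big\{\int_{{\cal X}_{0,n}\times{\cal Y}_{0,n}}\psi\, dP_{0,n}-\log\int_{{\cal X}_{0,n}\times{\cal Y}_{0,n}}e^{\psi}\, d\pi_{0,n}\Big\},
\end{eqnarray*}
which exhibits the relative entropy as a pointwise supremum of functionals that are each weak$^*$-continuous in $(P_{0,n},\pi_{0,n})$, hence in $q_{0,n}$ (note $e^{\psi}\in BC$ whenever $\psi\in BC$). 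A pointwise supremum of continuous functionals is lower semicontinuous, so $\liminf_{\alpha}\mathbb{I}(\mu_{0,n},q^{\alpha}_{0,n})\geq\mathbb{I}(\mu_{0,n},q^{0}_{0,n})$. Combining this with Theorem~\ref{weakstar-compact_2}, I would take a minimizing net in $\overrightarrow{Q}_{0,n}(D)$, extract a weak$^*$-convergent subnet with limit $q^{0}_{0,n}\in\overrightarrow{Q}_{0,n}(D)$, and conclude by lower semicontinuity that $\mathbb{I}(\mu_{0,n},q^{0}_{0,n})\leq\liminf_{\alpha}\mathbb{I}(\mu_{0,n},q^{\alpha}_{0,n})=R^{na}_{0,n}(D)$; since $q^{0}_{0,n}$ is feasible the reverse inequality is automatic, so the infimum is attained.

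The main obstacle I anticipate is justifying the variational representation and the weak$^*$-continuity of $(P_{0,n},\pi_{0,n})\mapsto\big(\int\psi\, dP_{0,n},\ \log\int e^{\psi}\, d\pi_{0,n}\big)$ in the present setting of finitely additive regular measures with the $L_{\infty}^{w}$ weak$^*$-topology, rather than the classical countably additive weak topology on Polish spaces where joint lower semicontinuity of relative entropy is standard. One must verify that the exponential integrands $e^{\psi}$ remain admissible against the finitely additive marginals and that the net convergence is strong enough (uniform over the relevant test functions) to preserve the supremum; the conditions (a)--(c) of Lemma~\ref{weakstar-closed} are precisely what make the convolution structure compatible with this requirement, and should be the technical leverage needed to close the argument.
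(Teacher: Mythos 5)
Your proposal is correct and follows essentially the same route as the paper: weak$^*$-compactness of $\overrightarrow{Q}_{0,n}(D)$ from Theorem~\ref{weakstar-compact_2}, weak$^*$-continuity of $q_{0,n}\mapsto(P_{0,n},\pi_{0,n})$, joint weak$^*$-lower semicontinuity of $\mathbb{D}(P_{0,n}\|\pi_{0,n})$, and the Weierstrass argument. The only difference is presentational: the paper cites \cite[Lemma~1.4.3]{dupuis-ellis97} for the lower semicontinuity of relative entropy (noting its extension to finitely additive measures), whereas you unfold the Donsker--Varadhan variational representation that underlies that very lemma --- and the technical caveat you flag about justifying it for finitely additive regular measures is exactly the point the paper passes over with a one-line remark.
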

\begin{proof}
This follows from Theorem~\ref{weakstar-compact_2} provided lower semicontinuity of ${\mathbb{I}}(\mu_{0,n},\cdot)$ on $\overrightarrow{Q}_{ad}$ is established. First we prove that $\overrightarrow{q}_{0,n}\rightarrow \mathbb{I}({\mu}_{0,n},\cdot)$ is weak$^*$-lower semicontinuous. Let $\{\overrightarrow{q}_{0,n}^{\alpha}\}$ be a net from $\overrightarrow{Q}_{ad}$ and suppose it is weak$^*$-convergent to ${\overrightarrow q}_{0,n}^0$. Define the net $P_{0,n}^{\alpha} \in \Pi_{rba}({\cal X}_{0,n}\times {\cal Y}_{0,n})$ given by the convolution product $P_{0,n}^{\alpha} \equiv \mu_{0,n}(dx^n) \otimes  {\overrightarrow q}_{0,n}^{\alpha}(dy^n;x^n)$. Take any $\varphi(\cdot)\in BC({\cal X}_{0,n}\times {\cal Y}_{0,n})$ and consider the expression
\begin{eqnarray}
\int_{{\cal X}_{0,n}\times{\cal Y}_{0,n}} \varphi_{0,n}(x^n,y^n) P_{0,n}^{\alpha}(dx^n,dy^n)\equiv \int_{{\cal X}_{0,n}\times{\cal Y}_{0,n}} \varphi_{0,n}(x^n,y^n) {\overrightarrow q}_{0,n}^{\alpha}(dy^n;x^n)\otimes\mu_{0,n}(dx^n)\nonumber.
\end{eqnarray}
Since ${\overrightarrow q}_{0,n}^{\alpha}\buildrel w^* \over \longrightarrow {\overrightarrow q}_{0,n}^0$
in $L_{\infty}^w(\mu_{0,n},\Pi_{rba}({\cal Y}_{0,n}))$, it is clear from the above
expression  that
\begin{eqnarray}
P_{0,n}^{\alpha} \buildrel w^* \over \longrightarrow P_{0,n}^0 \equiv \mu_{0,n} \otimes {\overrightarrow q}_{0,n}^0~~\hbox{in}~~\Pi_{rba}({\cal X}_{0,n}\times {\cal Y}_{0,n}).
\end{eqnarray}
Similarly one can easily verify that the net of the  product measures $\{\pi_{0,n}^{\alpha}\}$ converges to the product measure $\pi_{0,n}^0$,
\begin{eqnarray}
\pi_{0,n}^{\alpha} \equiv \nu_{0,n}^{\alpha}\times \mu_{0,n} \buildrel w^* \over
\longrightarrow \nu_{0,n}^0 \times \mu_{0,n}\equiv \pi_{0,n}^0\nonumber
\end{eqnarray}
where $\{\nu_{0,n}^{\alpha}\}$ are the marginals  of $\{P_{0,n}^{\alpha}\}$ on ${\cal Y}_{0,n}$ and $\nu_{0,n}^0$ is its weak$^*$-limit. Now we use the lower semicontinuity property of relative entropy \cite[Lemma~1.4.3, p.~36]{dupuis-ellis97}. Following  \cite{dupuis-ellis97} it is verified that the same procedure holds true not only for countably additive measures but also for finitely additive ones. Using this fact we conclude that
\begin{eqnarray}
\mathbb{D}(P_{0,n}||\pi_{0,n})\leq\liminf_{\alpha\longrightarrow\infty}\mathbb{D}(P_{0,n}^{\alpha}||\pi_{0,n}^{\alpha})\nonumber.
\end{eqnarray}
By (\ref{re4}), this is equivalent to
\begin{eqnarray}
{\mathbb{I}}(\mu_{0,n},q_{0,n})\leq\liminf_{\alpha\longrightarrow\infty}{\mathbb{I}}(\mu_{0,n},q_{0,n}^{\alpha}).
\end{eqnarray}
This proves weak$^*$-lower semicontinuity of ${\mathbb{I}}(\mu_{0,n},\cdot)$ on $\overrightarrow{Q}_{ad}$. We have already observed in Theorem~\ref{weakstar-compact_2} that the set $\overrightarrow{Q}_{0,n}(D)$ is weak$^*$-compact, and we have just seen that
${\mathbb{I}}(\mu_{0,n},\cdot)$  is weak$^*$-lower semicontinuous. Hence ${\mathbb{I}}(\mu_{0,n},\cdot)$ attains its infimum on $\overrightarrow{Q}_{0,n}(D)$. So there exists a $\overrightarrow{q}_{0,n}^* \in \overrightarrow{Q}_{0,n}(D)$ such that $R_{0,n}^{na}(D)={\mathbb{I}}(\mu_{0,n},\overrightarrow{q}_{0,n}^*)$.
\end{proof}


\section{Necessary Conditions of Optimality for Nonanticipative RDF}\label{necessary}

In this section the form of the optimal nonanticipative convolution reconstruction kernels is derived under a stationarity assumption. The method is based on calculus of variations on the space of measures \cite{dluenberger69}.
\begin{assumption}\label{stationarity}
The family of measures $\overrightarrow{q}_{0,n}(dy^n;x^n)$ defined in (\ref{eq.2}), is the convolution of stationary conditional distributions.
\end{assumption}

Assumption~\ref{stationarity} holds for stationary process $\{(X_i,Y_i):i\in\mathbb{N}\}$ and single letter distortion $d_{0,n}(x^n,y^n)\equiv\sum_{i=0}^n\rho(x_i,y_i)$. It also holds for distortion defined by $\rho(T^i{x^n},T^i{y^n})$, for which $T^i{x^n}=\tilde{x}^{n}$ is the $i^{th}$ shift operator on the input sequence $x^n$, where $\tilde{x}_{k}=x_{k+i}$ (similarly for $T^i{y^n}$), and $\sum_{i=0}^n\rho(T^ix^n,T^iy^n)$ depends only on the components of $(x^n,y^n)$  \cite{gray2010}. Utilizing Assumption~\ref{stationarity}, which holds for stationary processes  and a single letter distortion function, the Gateaux differential of $\mathbb{I}(\mu_{0,n},\overrightarrow{q}_{0,n})$ is taken at $\overrightarrow{q}_{0,n}^*$ in the direction of $\overrightarrow{q}_{0,n}-\overrightarrow{q}_{0,n}^*$, via the definition $\overrightarrow{q}_{0,n}^{\epsilon}\triangleq\overrightarrow{q}_{0,n}+\epsilon\big{(}
\overrightarrow{q}_{0,n}-\overrightarrow{q}_{0,n}^*\big{)}$, $\epsilon\in[0,1]$, since under the stationarity assumption, the functionals $\{q_i(dy_i;y^{i-1},x^i)\in{\cal Q}({\cal Y}_i;{\cal Y}_{0,i-1}\times{\cal X}_{0,i}):~i=0,1,\ldots,n\}$ are identical.
\begin{theorem} \label{th5}
Suppose ${\mathbb I}_{\mu_{0,n}}(\overrightarrow{q}_{0,n}) \triangleq {\mathbb I}(\mu_{0,n},\overrightarrow{q}_{0,n})$ is well defined for every $\overrightarrow{q}_{0,n}\in L_{\infty}^w(\mu_{0,n},$ $\Pi_{rba}({\cal Y}_{0,n}))$ possibly taking values from the set $[0,\infty].$ Then  $\overrightarrow{q}_{0,n} \rightarrow {\mathbb I}_{\mu_{0,n}}(\overrightarrow{q}_{0,n})$ is Gateaux differentiable at every point in $L_{\infty}^w(\mu_{0,n},\Pi_{rba}({\cal Y}_{0,n})),$  and the Gateaux derivative at the  point $\overrightarrow{q}_{0,n}^*$ in the direction $\overrightarrow{q}_{0,n}-\overrightarrow{q}_{0,n}^*$ is given
by
\begin{eqnarray}
\delta{\mathbb I}_{\mu_{0,n}}(\overrightarrow{q}_{0,n}^*,\overrightarrow{q}_{0,n}-\overrightarrow{q}_{0,n}^*)=\int_{{\cal X}_{0,n}}\int_{{\cal Y}_{0,n}}\log \Bigg(\frac{\overrightarrow{q}_{0,n}^*(dy^n;x^n)}{\nu_{0,n}^*(dy^n)}\Bigg)(\overrightarrow{q}_{0,n}-\overrightarrow{q}_{0,n}^*)(dy^n;x^n)\otimes \mu_{0,n}(dx^n)\nonumber
\end{eqnarray}
where $\nu_{0,n}^*\in{\cal M}_1({\cal Y}_{0,n})$ is the marginal measure corresponding
to $\overrightarrow{q}_{0,n}^*\otimes\mu_{0,n}\in{\cal M}_1({\cal Y}_{0,n}\times{\cal X}_{0,n})$.
\end{theorem}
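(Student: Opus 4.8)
The plan is to obtain the stated formula as the first variation of mutual information, computed directly from the definition of the Gateaux derivative. I would form the affine perturbation $\overrightarrow{q}_{0,n}^{\epsilon}\triangleq\overrightarrow{q}_{0,n}^*+\epsilon(\overrightarrow{q}_{0,n}-\overrightarrow{q}_{0,n}^*)$, $\epsilon\in[0,1]$, which stays in $L_{\infty}^w(\mu_{0,n},\Pi_{rba}({\cal Y}_{0,n}))$ since that set is convex, and study $\phi(\epsilon)\triangleq{\mathbb I}(\mu_{0,n},\overrightarrow{q}_{0,n}^{\epsilon})$. The first point to record is that the marginal map is linear: from $\nu_{0,n}^{\epsilon}(\cdot)=\int_{{\cal X}_{0,n}}\overrightarrow{q}_{0,n}^{\epsilon}(\cdot;x^n)\mu_{0,n}(dx^n)$ one gets $\nu_{0,n}^{\epsilon}=\nu_{0,n}^*+\epsilon(\nu_{0,n}-\nu_{0,n}^*)$, so that both arguments of the relative entropy $\mathbb{D}(\overrightarrow{q}_{0,n}^{\epsilon}(\cdot;x^n)\|\nu_{0,n}^{\epsilon})$ are affine in $\epsilon$.

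Before computing the derivative I would settle its existence from convexity. For fixed $\mu_{0,n}$ the functional $\overrightarrow{q}_{0,n}\mapsto{\mathbb I}(\mu_{0,n},\overrightarrow{q}_{0,n})$ is convex, by the joint convexity of relative entropy together with the affine dependence of both of its arguments on $\overrightarrow{q}_{0,n}$. Consequently $\phi$ is convex on $[0,1]$, the difference quotient $[\phi(\epsilon)-\phi(0)]/\epsilon$ is monotone non-decreasing in $\epsilon$, and hence the right derivative $\phi'(0^+)$ exists in $[-\infty,+\infty)$. This monotonicity is exactly what I would later exploit to interchange the $\epsilon$-limit with the $\mu_{0,n}$-integration by monotone convergence, with finiteness of $\phi$ at an interior point supplying the integrable envelope, thereby avoiding the search for a dominating function.

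For the explicit evaluation I would pass to Radon--Nikodym densities with respect to a common dominating measure $\lambda$ on ${\cal Y}_{0,n}$ (e.g.\ $\lambda=\nu_{0,n}^*+\nu_{0,n}$), writing $f^*,f$ for the densities of $\overrightarrow{q}_{0,n}^*(\cdot;x^n),\overrightarrow{q}_{0,n}(\cdot;x^n)$ and $g^*,g$ for those of $\nu_{0,n}^*,\nu_{0,n}$; existence of these densities in the finitely additive setting is guaranteed by the Radon--Nikodym theory already cited. Differentiating the integrand $\log(f^{\epsilon}/g^{\epsilon})f^{\epsilon}$ of $\phi(\epsilon)$ in $\epsilon$ and setting $\epsilon=0$ produces the three terms $(f-f^*)-\tfrac{f^*}{g^*}(g-g^*)+\log(f^*/g^*)(f-f^*)$. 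Integrating against $\lambda(dy^n)\mu_{0,n}(dx^n)$, the first term vanishes because $f,f^*$ integrate to one in $y^n$; the second term vanishes after integrating in $x^n$ first, since $\int_{{\cal X}_{0,n}}f^*(y^n;x^n)\mu_{0,n}(dx^n)=g^*(y^n)$ collapses it to $\int_{{\cal Y}_{0,n}}(g-g^*)\lambda(dy^n)=0$; and the surviving third term is precisely $\int_{{\cal X}_{0,n}}\int_{{\cal Y}_{0,n}}\log(\overrightarrow{q}_{0,n}^*/\nu_{0,n}^*)(\overrightarrow{q}_{0,n}-\overrightarrow{q}_{0,n}^*)(dy^n;x^n)\otimes\mu_{0,n}(dx^n)$, as claimed.

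The main obstacle is the rigorous justification of differentiating under the integral sign when the logarithmic integrand is unbounded and the underlying measures are only finitely additive; I expect the convexity-driven monotonicity of the difference quotients to be the decisive tool, reducing the interchange to monotone convergence rather than dominated convergence. A secondary point worth verifying is that the two vanishing terms are genuinely finite, so that no $\infty-\infty$ cancellation is hidden, which again follows from the normalization of the conditional and marginal measures. I would also note that the stationarity Assumption~\ref{stationarity} is not actually needed for the differentiability statement itself and enters only in identifying the component kernels for the subsequent closed-form results.
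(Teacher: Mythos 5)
Your derivation is correct: the affine perturbation, the convexity of $(u,v)\mapsto u\log(u/v)$ giving monotone difference quotients, and the cancellation of the two extra terms via $\int_{{\cal Y}_{0,n}}(f-f^*)\,d\lambda=0$ and $\int_{{\cal X}_{0,n}}f^*\,\mu_{0,n}(dx^n)=g^*$ is precisely the standard variational argument for this formula. The paper itself omits the proof, deferring to the cited reference, and that reference's (lengthy) argument is essentially the one you outline, so there is nothing substantive to distinguish.
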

\begin{proof}
The proof, although lengthy, it is similar to the one in \cite{farzad06}, hence it is omitted.
\end{proof}
\vspace*{0.2cm}
\noi The constrained problem defined by (\ref{ex12}) can be reformulated using Lagrange multipliers. The equivalence of constrained and unconstrained problems is established in the following theorem.
\begin{theorem} \label{lagrange_duality}
Suppose $d_{0,n}(x^n,y^n)\triangleq\sum_{i=0}^n\rho(T^i{x^n},T^i{y^n})$, where $d_{0,n}: {\cal X}_{0,n}\times{\cal Y}_{0,n} \rightarrow \overline R_0 \equiv [0,\infty]$ is continuous in the second argument and the set $\Gamma \equiv\{(x^n,y^n) \in {\cal X}_{0,n}\times{\cal Y}_{0,n}: d_{0,n}(x^n,y^n) < D \} $ is nonempty. Then the constrained problem as stated in Theorem~\ref{th3}, is equivalent to an unconstrained problem stated below.
\begin{align}
\inf_{\overrightarrow{q}_{0,n} \in \overrightarrow{Q}_{0,n}(D)} \mathbb{I}({\mu_{0,n}},\overrightarrow{q}_{0,n}) &= \max_{s \leq 0} \inf_{\overrightarrow{q}_{0,n}}\{ \mathbb{I}({\mu_{0,n}},\overrightarrow{q}_{0,n}) - sG(\overrightarrow{q}_{0,n})\},~G(\overrightarrow{q}_{0,n})\tri\ell_{d_{0,n}}(\overrightarrow{q}_{0,n})-D \nonumber\\
&= \max_{s \leq 0}\inf_{\overrightarrow{q}_{0,n}} \Big\{ \mathbb{I}({\mu_{0,n}},\overrightarrow{q}_{0,n}) - s \Big(\int_{{\cal X}_{0,n}} \int_{{\cal Y}_{0,n}}d_{0,n}(x^n,y^n)\nonumber\\
&\qquad{\overrightarrow q}_{0,n}(dy^n;x^n)\otimes\mu_{0,n}(dx^n) -D \Big)\Big\}\nonumber
\end{align}
where $\overrightarrow{q}_{0,n}\equiv\overrightarrow{q}_{0,n}(dy^n;x^n)=\otimes_{i=0}^n{q}_i(dy_i;y^{i-1},x^i)$-a.s. Further the infimum occurs on the boundary of the  set  $\overrightarrow {Q}_{0,n}(D)$.
\end{theorem}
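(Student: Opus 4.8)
The plan is to recognize the constrained problem of Theorem~\ref{th3} as a convex program over the space of measures and to invoke the Lagrange Duality Theorem of \cite{dluenberger69}. In the abstract form of that theorem, one is given a convex functional $f$ on a convex subset $\Omega$ of a vector space, a convex constraint map $G$ into a normed space ordered by a positive cone with nonempty interior, and a regularity (Slater) point at which the constraint is strictly satisfied; under these hypotheses the primal value equals $\max_{z^*\geq0}\inf_{x\in\Omega}\{f(x)+\langle z^*,G(x)\rangle\}$, the maximum is attained, and complementary slackness holds at the optimum. Here $f(\overrightarrow{q}_{0,n})=\mathbb{I}(\mu_{0,n},\overrightarrow{q}_{0,n})$, $\Omega=\overrightarrow{Q}_{ad}$, the constraint space is $\mathbb{R}$ (so the positive cone is $[0,\infty)$, with nonempty interior), and $G(\overrightarrow{q}_{0,n})=\ell_{d_{0,n}}(\overrightarrow{q}_{0,n})-D$; the feasible set $\{G\leq0\}\cap\Omega$ is exactly $\overrightarrow{Q}_{0,n}(D)$, and the sign convention $s\leq0$ in the statement corresponds to the multiplier $z^*=-s\geq0$.

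First I would verify the convexity hypotheses. The constraint functional is affine, since $\overrightarrow{q}_{0,n}\mapsto\ell_{d_{0,n}}(\overrightarrow{q}_{0,n})$ is a linear integral in the kernel; hence $G$ is convex. Convexity of $\overrightarrow{q}_{0,n}\mapsto\mathbb{I}(\mu_{0,n},\overrightarrow{q}_{0,n})$ for fixed source $\mu_{0,n}$ is the standard convexity of mutual information in the channel: by the representation (\ref{re4}) it equals $\mathbb{D}(P_{0,n}\|\pi_{0,n})$, where both $P_{0,n}=\mu_{0,n}\otimes\overrightarrow{q}_{0,n}$ and $\pi_{0,n}=\mu_{0,n}\times\nu_{0,n}$ depend affinely on $\overrightarrow{q}_{0,n}$, so joint convexity of $\mathbb{D}(\cdot\|\cdot)$ composed with an affine map gives convexity.

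The delicate hypothesis, and the step I expect to be the main obstacle, is convexity of the admissible set $\overrightarrow{Q}_{ad}$: conditional-independence (Markov) constraints in general carve out nonconvex sets of joint laws. The resolution exploits that the source marginal $\mu_{0,n}$ is held fixed. By Lemma~\ref{lem1}, membership in $\overrightarrow{Q}_{ad}$ is equivalent to the Markov chains $Y^i\leftrightarrow X^i\leftrightarrow X_{i+1}$, i.e. $P_{X_{i+1}|X^i,Y^i}(\cdot|x^i,y^i)=P_{X_{i+1}|X^i}(\cdot|x^i)$ for each $i$, where the right-hand side is determined solely by $\mu_{0,n}$ and is therefore common to all admissible kernels. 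If $\overrightarrow{q}^1,\overrightarrow{q}^2\in\overrightarrow{Q}_{ad}$ and $\overrightarrow{q}^\lambda=\lambda\overrightarrow{q}^1+(1-\lambda)\overrightarrow{q}^2$, then the corresponding joints satisfy $P^j_{X_{i+1},X^i,Y^i}=P_{X_{i+1}|X^i}\,P^j_{X^i,Y^i}$, and summing this identity with weights $\lambda,1-\lambda$ shows $P^\lambda_{X_{i+1}|X^i,Y^i}=P_{X_{i+1}|X^i}$ as well; hence $\overrightarrow{q}^\lambda\in\overrightarrow{Q}_{ad}$ and the set is convex. The Slater point is supplied by the hypothesis $\Gamma\neq\emptyset$: continuity of $d_{0,n}$ in the second argument lets one build a nonanticipative kernel whose reconstruction concentrates near values attaining $d_{0,n}<D$, yielding $\ell_{d_{0,n}}(\overrightarrow{q}_1)<D$, i.e. $G(\overrightarrow{q}_1)<0$.

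With these hypotheses in place, the Duality Theorem gives the asserted equality and the existence of a maximizing $s_0\leq0$, and complementary slackness yields $s_0\,G(\overrightarrow{q}_{0,n}^*)=0$ at the optimizer $\overrightarrow{q}_{0,n}^*$ guaranteed by Theorem~\ref{th3}. To conclude that the infimum occurs on the boundary $\{\ell_{d_{0,n}}=D\}$, I would rule out $s_0=0$: if $s_0=0$ the optimal kernel would minimize $\mathbb{I}(\mu_{0,n},\cdot)$ unconstrained, forcing the value $0$ and an independence structure whose average distortion exceeds any $D$ in the nontrivial range, contradicting activity of the constraint. Hence $s_0<0$, so complementary slackness forces $G(\overrightarrow{q}_{0,n}^*)=0$, that is $\ell_{d_{0,n}}(\overrightarrow{q}_{0,n}^*)=D$; equivalently, since $R^{na}_{0,n}(D)$ is nonincreasing and strictly decreasing on the relevant range, the distortion constraint is active and the minimizer lies on the boundary of $\overrightarrow{Q}_{0,n}(D)$.
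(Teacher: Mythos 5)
Your proposal is correct and follows essentially the same route as the paper: both invoke Luenberger's Lagrange Duality Theorem with $\Omega=\overrightarrow{Q}_{ad}$, $G(\overrightarrow{q}_{0,n})=\ell_{d_{0,n}}(\overrightarrow{q}_{0,n})-D$, a Slater point built from the nonempty set $\Gamma$ (the paper constructs it explicitly via the sections $\Gamma_{x^n}$ and the partition $A_0,A_1$ of ${\cal X}_{0,n}$, where you only sketch it), and complementary slackness to place the minimizer on the boundary. Your explicit verification that $\overrightarrow{Q}_{ad}$ is convex — via Lemma~\ref{lem1} and the fact that $P_{X_{i+1}|X^i}$ is common to all admissible kernels since $\mu_{0,n}$ is fixed — is a worthwhile addition that the paper leaves implicit.
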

\begin{proof}
See Appendix.
\end{proof}
\vspace*{0.2cm}
\noi Utilizing Theorem~\ref{lagrange_duality}, we can reformulate the constraint problem as an unconstrained problem, hence we have
\begin{equation}
{R}_{0,n}^{na}(D) = \sup_{s\leq{0}}\inf_{{\overrightarrow{q}_{0,n}}} \Big\{{{\mathbb I}}(\mu_{0,n},\overrightarrow{q}_{0,n})-s(\ell_{{d}_{0,n}}(\overrightarrow{q}_{0,n})-D)\Big\}. \label{ex13}
\end{equation}
Note that ${\overrightarrow{q}_{0,n}} \in {\cal M}_1({\cal Y}_{0,n})$ are probability measures on ${\cal Y}_{0,n}$ therefore, one should introduce another set of Lagrange multipliers.\\
Moreover, $\overrightarrow{q}_{0,n}(dy^n;x^n)=\otimes_{i=0}^n{q}_i(dy_i;y^{i-1},x^i)$ is a consistent probability measure on ${\cal Y}_{0,n}$, therefore for each $k=0,1,\ldots,n$, $\int_{{\cal Y}_{0,k}}\overrightarrow{q}_{0,k}(dy^k;x^k)=1$. This constraint is expressed via
\begin{eqnarray}
&&\left.\sum_{i=0}^n\int_{{\cal X}_{0,i}\times{\cal Y}_{0,i}}\lambda_{i}(x^i,y^{i-1})\Big{(}\overrightarrow{q}_{0,i}(dy^i;x^i)-1\Big{)}\mu_{0,i}(dx^i)\right.\nonumber\\[-1.5ex]\label{eq.6}\\[-1.5ex]
&&\quad\left.=\sum_{i=0}^n\int_{{\cal X}_{0,n}\times{\cal Y}_{0,n}}\lambda_{i}(x^i,y^{i-1})\Big{(}\overrightarrow{q}_{0,n}(dy^n;x^n)-1\Big{)}\mu_{0,n}(dx^n)\right.\nonumber
\end{eqnarray}
where $\{\lambda_{i}(\cdot,\cdot):~i=0,1,\ldots,n\}$ are Lagrange multipliers.\\
Utilizing the additional constraint (\ref{eq.6}) in (\ref{ex13}), then we derive the optimal reconstruction kernel for the nonanticipative RDF, $R_{0,n}^{na}(D)$. This is given in the following theorem.
\begin{theorem} \label{th6}
Suppose $d_{0,n}(x^n,y^n)=\sum_{i=0}^n\rho(T^i{x^n},T^i{y^n})$ and the conditions of Lemma~\ref{weakstar-closed} and Theorem~\ref{weakstar-compact_2} hold. Then the infimum in (\ref{ex13}) is attained at  $\overrightarrow{q}^*_{0,n} \in{L}_{\infty}^w(\mu_{0,n},{\Pi}_{rba}({\cal Y}_{0,n}))$ given by\footnote{Due to stationarity assumption $\nu_i(\cdot;\cdot)=\nu(\cdot;\cdot)$ and $q^*_i(\cdot;\cdot,\cdot)=q^*(\cdot;\cdot,\cdot)$,~$\forall~i=0,1,\ldots,n$.}
\begin{eqnarray}
\overrightarrow{q}^*_{0,n}(dy^n;x^n)&=&\otimes_{i=0}^nq_i^*(dy_i;y^{i-1},x^i)-a.s\nonumber\\
&=&\otimes_{i=0}^n\frac{e^{s \rho(T^i{x^n},T^i{y^n})}\nu^*_i(dy_i;y^{i-1})}{\int_{{\cal Y}_i} e^{s \rho(T^i{x^n},T^i{y^n})} \nu^*_i(dy_i;y^{i-1})},~s\leq{0}\label{ex14}
\end{eqnarray}
and $\nu^*_i(dy_i;y^{i-1})\in {\cal Q}({\cal Y}_i;{\cal Y}_{0,{i-1}})$. The nonanticipative RDF is given by 
\begin{eqnarray}
{R}_{0,n}^{na}(D)&=&sD -\sum_{i=0}^n\int_{{{\cal X}_{0,i}}\times{{\cal Y}_{0,i-1}}}\log \Big( \int_{{\cal Y}_i} e^{s\rho(T^i{x^n},T^i{y^n})} \nu^*_i(dy_i;y^{i-1})\Big)\nonumber\\
&&\quad\times{{\overrightarrow q}^*_{0,i-1}}(dy^{i-1};x^{i-1})\otimes\mu_{0,i}(dx^i)\label{ex15}
\end{eqnarray}
where ``$s$" is the optimal value of (\ref{ex13}).\\
If ${R}_{0,n}^{na}(D) > 0$ then $ s < 0$  and
\begin{eqnarray}
\sum_{i=0}^n\int_{{\cal X}_{0,i}} \int_{{\cal Y}_{0,i}}
\rho(T^i{x^n},T^i{y^n}){\overrightarrow q}^*_{0,i}(dy^i;x^i)\otimes \mu_{0,i}(dx^i)=D\label{eq.7}
\end{eqnarray}
and s is obtained from the equality condition (\ref{eq.7}).
\end{theorem}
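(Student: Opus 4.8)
The plan is to treat the unconstrained inner problem in (\ref{ex13}) by calculus of variations on the space of nonanticipative kernels, exactly as set up through Theorems~\ref{lagrange_duality} and~\ref{th5}. First I would form the augmented Lagrangian that adjoins to $\mathbb{I}(\mu_{0,n},\overrightarrow{q}_{0,n})$ both the distortion term $-s(\ell_{d_{0,n}}(\overrightarrow{q}_{0,n})-D)$ with $s\le 0$ and the per-stage normalization constraints (\ref{eq.6}) weighted by the multipliers $\{\lambda_i(x^i,y^{i-1})\}$. Because $\mathbb{I}(\mu_{0,n},\cdot)$ is convex in the kernel for fixed source $\mu_{0,n}$ while the distortion and normalization terms are affine, the first-order stationarity condition will be not only necessary but sufficient for the global infimum, whose attainment is already guaranteed by Theorem~\ref{th3}; hence it suffices to locate the stationary kernel.

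Next I would set the Gateaux derivative of the Lagrangian at the candidate optimizer $\overrightarrow{q}^*_{0,n}$, in every admissible direction $\overrightarrow{q}_{0,n}-\overrightarrow{q}^*_{0,n}$, equal to zero. The derivative of $\mathbb{I}$ is supplied verbatim by Theorem~\ref{th5} and contributes the log-ratio kernel $\log(\overrightarrow{q}^*_{0,n}/\nu^*_{0,n})$; the distortion and normalization terms are linear, so their derivatives are obtained by merely dropping the perturbation. Since the resulting identity must hold for all directions, the bracketed integrand must vanish $\mu_{0,n}\otimes\overrightarrow{q}^*_{0,n}$-a.e., giving
\begin{eqnarray*}
\log\frac{\overrightarrow{q}^*_{0,n}(dy^n;x^n)}{\nu^*_{0,n}(dy^n)}=s\,d_{0,n}(x^n,y^n)+\sum_{i=0}^n\lambda_i(x^i,y^{i-1}).
\end{eqnarray*}
Exponentiating and invoking the additive structure $d_{0,n}=\sum_i\rho(T^i x^n,T^i y^n)$ together with the nonanticipative factorizations $\overrightarrow{q}^*_{0,n}=\otimes_i q_i^*$ and $\nu^*_{0,n}=\otimes_i\nu_i^*$, the ratio splits into a product over stages. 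The per-stage normalization constraints then force a stage-by-stage match, so that each factor obeys $q_i^*(dy_i;y^{i-1},x^i)=e^{s\rho+\lambda_i}\nu_i^*(dy_i;y^{i-1})$; requiring $\int_{{\cal Y}_i}q_i^*=1$ identifies $e^{-\lambda_i}$ as the partition function $\int_{{\cal Y}_i}e^{s\rho}\nu_i^*$, which yields precisely the Gibbs form (\ref{ex14}).

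To obtain the value formula (\ref{ex15}) I would substitute the optimal kernel back into $\mathbb{I}(\mu_{0,n},\overrightarrow{q}^*_{0,n})=\int\int\log(\overrightarrow{q}^*_{0,n}/\nu^*_{0,n})\,\overrightarrow{q}^*_{0,n}\otimes\mu_{0,n}$. Replacing the log-ratio by $s\,d_{0,n}+\sum_i\lambda_i$ produces the term $sD$ (using the active distortion constraint) plus the sum of the log-partition contributions; since $\lambda_i$ depends only on $(x^i,y^{i-1})$, integrating out the future coordinates collapses each term onto $\overrightarrow{q}^*_{0,i-1}(dy^{i-1};x^{i-1})\otimes\mu_{0,i}(dx^i)$, reproducing (\ref{ex15}). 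Finally, the sign of $s$ and the equality (\ref{eq.7}) follow from the Lagrange-duality structure of Theorem~\ref{lagrange_duality}: duality gives $s\le 0$, and complementary slackness $s(\ell_{d_{0,n}}(\overrightarrow{q}^*_{0,n})-D)=0$ forces the distortion constraint to be tight whenever $s<0$, which is the case precisely when $R^{na}_{0,n}(D)>0$; the value of $s$ is then pinned down by solving the scalar equation (\ref{eq.7}).

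The main obstacle I anticipate is the passage from the single joint stationarity identity to the factorized, stage-wise kernel (\ref{ex14}). The admissible directions must remain inside the nonanticipative set $\overrightarrow{Q}_{ad}$, which is not convex under naive mixing of convolution kernels, so the variational calculation and the stage-by-stage matching must be justified through the causal (convolution) structure and the per-stage normalization multipliers rather than by treating $\overrightarrow{q}_{0,n}$ as an unstructured measure. Carefully tracking how $\lambda_i(x^i,y^{i-1})$ enforces normalization of each conditional factor, and confirming that the marginal $\nu^*_{0,n}$ itself factorizes causally as $\otimes_i\nu_i^*(dy_i;y^{i-1})$ via the chain rule, is where the real work lies.
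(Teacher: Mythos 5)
Your proposal follows essentially the same route as the paper's own proof: the augmented Lagrangian with the distortion multiplier $s\le 0$ and the per-stage normalization multipliers $\lambda_i(x^i,y^{i-1})$, the Gateaux derivative from Theorem~\ref{th5} set to zero (sufficient by convexity of $\mathbb{I}(\mu_{0,n},\cdot)$), exponentiation and stage-wise normalization to identify $\lambda_i$ and obtain the Gibbs form (\ref{ex14}), back-substitution for (\ref{ex15}), and the duality/complementary-slackness argument of Theorem~\ref{lagrange_duality} for $s<0$ and (\ref{eq.7}). The factorization subtlety you flag at the end is genuine, but the paper handles it exactly as tersely as you propose to --- it asserts the stage-wise match via the causal chain-rule factorization $\nu^*_{0,n}(dy^n)=\otimes_{i=0}^n\nu^*_i(dy_i;y^{i-1})$ and the defining a.s.\ factorization of $\overrightarrow{q}^*_{0,n}$ --- so your plan is faithful to the published argument.
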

\begin{proof}
The fully unconstrained problem of (\ref{ex13}) is obtained by introducing another set of Lagrange multipliers $\{\lambda_{i}(\cdot,\cdot):~i=0,1,\ldots,n\}$ as in (\ref{eq.6}). Using the pair of Lagrange multipliers $\{s,\lambda\triangleq\{\lambda_{i}(\cdot,\cdot):~i=0,1,\ldots,n\}\}$ introduce the extended pay-off functional
\begin{eqnarray}
\mathbb{I}^{s,\lambda}_D(\mu_{0,n},\overrightarrow{q}_{0,n})&\triangleq&\mathbb{I}(\mu_{0,n},\overrightarrow{q}_{0,n})-s\Big{(}\ell_{d_{0,n}}(\overrightarrow{q}_{0,n})-D\Big{)}\nonumber\\
&+&\sum_{i=0}^n\int_{{\cal X}_{0,n}} \int_{{\cal Y}_{0,n}}\lambda_i(x^i,y^{i-1})\Big{(}\overrightarrow{q}_{0,n}(dy^n;x^n)-1\Big{)}\mu_{0,n}(dx^n).\nonumber
\end{eqnarray}
This is a fully unconstrained problem on the vector space ${L}_{\infty}^{w}(\mu_{0,n},M_{rba}({\cal Y}_{0,n}))$. Utilizing Theorem~\ref{th5}, the Gateaux derivative of $\mathbb{I}_D^{s,\lambda}$ on ${L}_{\infty}^{w}(\mu_{0,n},M_{rba}({\cal Y}_{0,n}))$ at any point $\overrightarrow{q}_{0,n}^*$ in the direction $\overrightarrow{q}_{0,n}-\overrightarrow{q}_{0,n}^*$ is given by
\begin{align}
\delta\mathbb{I}^{s,\lambda}_D(\overrightarrow{q}^*_{0,n};\overrightarrow{q}_{0,n}-\overrightarrow{q}_{0,n}^*)&=\int_{{\cal X}_{0,n}\times{\cal Y}_{0,n}}\log\Bigg(\frac{\overrightarrow{q}_{0,n}^*(dy^n;x^n)}{\nu_{0,n}^*(dy^n)}\Bigg)(\overrightarrow{q}_{0,n}-\overrightarrow{q}_{0,n}^*)(dy^n;x^n)\otimes\mu_{0,n}(dx^n)\nonumber\\
&\quad-s\int_{{\cal X}_{0,n}\times{\cal Y}_{0,n}}d_{0,n}(x^n,y^n)(\overrightarrow{q}_{0,n}-\overrightarrow{q}_{0,n}^*)(dy^n;x^n)\otimes\mu_{0,n}(dx^n)\nonumber\\
&\quad+\sum_{i=0}^n\int_{{\cal X}_{0,n}\times{\cal Y}_{0,n}}\lambda_i(x^i,y^{i-1})(\overrightarrow{q}_{0,n}-\overrightarrow{q}_{0,n}^*)(dy^n;x^n)\otimes\mu_{0,n}(dx^n)\nonumber
\end{align}
\begin{align}
&=\int_{{\cal X}_{0,n}\times{\cal Y}_{0,n}}\log\Bigg(e^{\sum_{i=0}^n\big{(}-s\rho(T^i{x^n},T^i{y^n})+\lambda_i(x^i,y^{i-1})\big{)}}
\frac{\overrightarrow{q}_{0,n}^*(dy^n;x^n)}{\nu_{0,n}^*(dy^n)}\Bigg)\nonumber\\
&\qquad(\overrightarrow{q}_{0,n}-\overrightarrow{q}_{0,n}^*)(dy^n;x^n)\otimes\mu_{0,n}(dx^n),~\forall\overrightarrow{q}_{0,n}\in{L}_{\infty}^{w}(\mu_{0,n},M_{rba}({\cal Y}_{0,n}))\nonumber.
\end{align}
Since $\mathbb{I}^{s,\lambda}_D(\mu_{0,n},\overrightarrow{q}_{0,n})$ is convex in $\overrightarrow{q}_{0,n}$, it follows from the calculus of variations principle that a necessary and sufficient condition for $\overrightarrow{q}^*_{0,n}$ to be a minimizer is $\delta\mathbb{I}^{s,\lambda}_D(\overrightarrow{q}^*_{0,n};\overrightarrow{q}_{0,n}-\overrightarrow{q}_{0,n}^*)=0$, $\forall\overrightarrow{q}_{0,n}\in{L}_{\infty}^{w}(\mu_{0,n},M_{rba}({\cal Y}_{0,n}))$. Since the Gateaux derivative must be zero for all $\overrightarrow{q}_{0,n}\in{L}_{\infty}^{w}(\mu_{0,n},M_{rba}({\cal Y}_{0,n}))$ then
\begin{eqnarray}
\frac{\overrightarrow{q}_{0,n}^*(dy^n;x^n)}{\nu_{0,n}^*(dy^n)}=
e^{\sum_{i=0}^n\big{(}s\rho(T^i{x^n},T^i{y^n})-\lambda_i(x^i,y^{i-1})\big{)}}-a.s.\nonumber
\end{eqnarray}
Equivalently,
\begin{eqnarray}
\otimes_{i=0}^n\frac{q_i^*(dy_i;y^{i-1},x^i)}{\nu_{i}^*(dy_i;y^{i-1})}
=\otimes_{i=0}^n{e}^{\big{(}s\rho(T^i{x^n},T^i{y^n})-\lambda_i(x^i,y^{i-1})\big{)}}-a.s.\nonumber
\end{eqnarray}
Since $\int_{{\cal Y}_i}q_i^*(dy_i;y^{i-1},x^i)=1$, then
\begin{eqnarray}
\lambda_i(x^i,y^{i-1})=\log\int_{{\cal Y}_i}e^{s\rho(T^i{x^n},T^i{y^n})}\nu_i^*(dy_i;y^{i-1}),~i=0,1,\ldots,n\nonumber.
\end{eqnarray}
Hence,
\begin{eqnarray}
\overrightarrow{q}^*_{0,n}(dy^n;x^n)&=&\otimes_{i=0}^nq_i^*(dy_i;y^{i-1},x^i)-a.s\nonumber\\
&=&\otimes_{i=0}^n\frac{e^{s \rho(T^i{x^n},T^i{y^n})}\nu^*_i(dy_i;y^{i-1})}{\int_{{\cal Y}_i} e^{s \rho(T^i{x^n},T^i{y^n})} \nu^*_i(dy_i;y^{i-1})}.\nonumber
\end{eqnarray}
Since $s\leq0$ and $\lambda_i\geq0$,~$i=0,1,\ldots,n$ then $\overrightarrow{q}_{0,n}^*\in{L}_{\infty}^w(\mu_{0,n},{\Pi}_{rba}({\cal Y}_{0,n}))$. Substituting $\overrightarrow{q}_{0,n}^*$ into $\mathbb{I}_D^{s,\lambda}(\mu_{0,n},\overrightarrow{q}_{0,n})$ gives (\ref{ex15}).\\
Note that for $s=0$ then $R_{0,n}^{na}(D)=0$ and $\overrightarrow{q}_{0,n}^*(dy^n;x^n)=\nu_{0,n}^*(dy^n)$,~$\mu_{0,n}-$almost all $x^n\in{\cal X}_{0,n}$. This is trivial so we must have $s<0$. From Theorem~\ref{lagrange_duality} the solution occurs on the boundary of $\overrightarrow{Q}_{0,n}(D)$ giving (\ref{eq.7}) for $s<0$.
\end{proof}
\vspace*{0.2cm}
\noi Often it is interesting to identify conditions so that the optimal reconstruction is Markov with respect to $\{X_i:~i=0,1,\ldots,n\}$. The next remark discusses this case.
\begin{remark}
Note that if the distortion function satisfies $\rho(T^i{x^n},T^i{y^n})=\rho(x_i,T^i{y^n})$ then according to Theorem~\ref{th6} we have
\begin{equation}
{q}^*_{i}(dy_i;y^{i-1},x^i)=q_i^*(dy_i;y^{i-1},x_i)-a.s.,~i\in{\mathbb{N}^n}
\end{equation}
that is, the reconstruction kernel is Markov in $X^n$. However, even if $\rho(T^ix^n,T^iy^n)=\rho(x_i,y_i)$ (single letter) one cannot claim that the optimal reconstruction distribution is also Markov with respect to $\{Y_i:~i=0,1,\ldots,n\}$ because the right hand side (RHS) of (\ref{ex14}) does not satisfy $\nu_i(dy_i;y^{i-1})=\nu_i(dy_i;y_{i-1})$.
\end{remark}

The relation between nonanticipative RDF and filtering theory is developed for fixed source distribution. In the next remark we discuss extensions of the nonanticipative RDF for a class of sources and relations to robust filtering.
\begin{remark}
Nonanticipative RDF can be generalized to a class of sources to address robustness of the filter. One such class is defined by a relative entropy constraint between the unknown or true distribution $P_{X^n}$ with respect to the nominal distribution $P^0_{X^n}$ via 
\begin{align*}
{\cal M}_{P^0_{X^n}}(d)\triangleq\big\{P_{X^n}\in{\cal M}_1({\cal X}_{0,n}):\mathbb{D}(P_{X^n}||P^0_{X^n})\leq{d}\big\}
\end{align*} 
where $d$ is the radius of uncertainty. Such a model of uncertainty or  class of distributions is often employed in filtering and control applications because it is related to robust filtering and control using minimax methods \cite{xie-ugrinovskii-petersen2005,charalambous-rezaei2007}.\\
Therefore, the nonanticipative RDF for the class of sources ${\cal M}_{P^0_{X^n}}(d)$ is now defined using minimax strategies by
\begin{align}
R^{na,+}_{0,n}(D,d)=\inf_{\overrightarrow{P}_{Y^n|X^n}\in\overrightarrow{Q}_{0,n}(D)}\sup_{P_{X^n}\in{\cal M}_{P^0_{X^n}}(d)}\mathbb{I}(P_{X^n},\overrightarrow{P}_{Y^n|X^n})\label{equation111111}
\end{align}
\noi Through (\ref{equation111111}) one can obtain relations to minimax filtering strategies via nonanticipative RDF. An example using this formulation for control of Gaussian state space systems over limited rate channels is found in \cite{farhadi-charalambous2010}. The investigation of the classical RDF for such a relative entropy class of soures  is discussed in \cite{rezaei-charalambous-stavrou2010}, where it is also shown that the Von-Neumann minimax theorem holds and hence one can interchange infimum and supremum operations. The validity of the Von-Neumann minimax theorem  for (\ref{equation111111}), will imply that the optimal reconstruction distribution for the minimax nonanticipative RDF is (\ref{ex14}), and hence the remaining task is to perform the infimum operation over the relative entropy class of the solution to the nonanticipative RDF given by (\ref{ex15}). This is the simplest approach to relate nonanticipative RDF for a class of sources  and  minimax filtering techniques. Unlike minimax filtering techniques, the filtering obtained from (\ref{equation111111}) will always satisfy the fidelity criterion which can be defined with respect to probability of error or the average error.\\
\noi However, it is not clear how one can apply sensitivity minimization to nonanticipative RDF filter, because only the source distribution is given, while the observation map and filter are obtained from the realization of the optimal reconstruction distribution (see Fig.~\ref{filtering_and_causal}). This is contrary to sensitivity minimization approach, where the input-output maps are given and depend on design functions, such as, the controller or the filter \cite{hassibi-sayed-kailath1999,zhou2010}. Nevertheless, when the source is a second order Gaussian process described by a Power Spectral Density (PSD) and the fidelity of reconstruction is the mean-square error, then it might be possible to apply robust filtering and control techniques to address uncertainty of the PSD similar to the computation of capacity of channels with memory \cite{denic-charalambous-djouadi2009}.
\end{remark}

\section{Realization of Nonanticipative RDF}\label{realization}

The realization of the nonanticipative RDF (optimal reconstruction kernel and nonanticipative RDF) is equivalent to identifying the sensor mapping (see Fig.~\ref{filtering_and_causal}) which generates the auxiliary random process $\{Z_i:~i=0,1,\ldots,n\}$ so that the optimal reconstruction conditional distribution is matched from the output of the source to the output of the filter. This intermediate mapping consists of an encoder followed by a channel. Thus, the realization of the nonanticipative optimal reconstruction distribution consists of a communication channel, an encoder and a decoder such that the reconstruction from the sequence $X^n$ to the sequence $Y^n$ matches the nonanticipative rate distortion minimizing reconstruction kernel. Fig.~\ref{realization2} illustrates a cascade of subsystems that realizes the nonanticipative RDF. For the single letter expression of classical RDF this is related to the so-called source-channel matching of information theory \cite{gastpar2003}. It is also described in \cite{charalambous2008} and \cite{tatikonda2000} for control over finite capacity communication channels, since this technique allows one to design encoding/decoding schemes without encoding and decoding delays. The realization of the optimal reconstruction kernel is given below.
\begin{definition}\label{realization1}
Given a source $\{P_{X_i|X^{i-1}}(dx_i|x^{i-1}):i=0,\ldots,n\}$,  a channel $\{P_{B_i|B^{i-1},A^{i}}(db_i|$ $b^{i-1},a^i):i=0,\ldots,n\}$ is a realization of the optimal nonanticipative reconstruction kernel $\{q_i^*(dy_i;y^{i-1},x^i):i=0,\ldots,n\}$ if there exists a pre-channel encoder $\{P_{A_i|A^{i-1},B^{i-1},X^i}(da_i|a^{i-1},$ $b^{i-1},x^i):i=0,\ldots,n\}$ and a post-channel decoder $\{P_{Y_i|Y^{i-1},B^i}(dy_i|y^{i-1},b^i):i=0,\ldots,n\}$ such that
\begin{eqnarray}
{\overrightarrow q}_{0,n}^*(dy^n;x^n) &\triangleq&\otimes_{i=0}^n q_i^*(dy_i;y^{i-1},x^i)\nonumber\\
&=&\otimes_{i=0}^nP_{Y_i|Y^{i-1},X^i}(dy_i|y^{i-1},x^i)-a.s.
\end{eqnarray}
where the joint distribution is
\begin{align}
&P_{X^n,A^n, B^n, Y^n}(dx^n,da^n,db^n,dy^n)\\
&=\otimes_{i=0}^n{P}_{Y_i|Y^{i-1},B^i,A^i,X^i}(dy_i|y^{i-1},b^i,a^i,x^i)\nonumber\\
&\quad\otimes{P}_{B_i|B^{i-1},A^i,X^i,Y^{i-1}}(db_i|b^{i-1},a^i,x^i,y^{i-1})\otimes{P}_{A_i|A^{i-1},X^i,Y^{i-1},B^{i-1}}(da_i|a^{i-1},x^i,y^{i-1},b^{i-1})\nonumber\\
&\quad\otimes{P}_{X_i|X^{i-1},A^{i-1},B^{i-1},Y^{i-1}}(dx_i|x^{i-1},a^{i-1},b^{i-1},y^{i-1})-a.s.,\nonumber\\
&=\otimes_{i=0}^n P_{Y_i|Y^{i-1},B^i}(dy_i|y^{i-1},b^i)\otimes P_{B_i|B^{i-1},A^{i}}
(db_i|b^{i-1},a^i) \nonumber \\
&\quad\otimes P_{A_i|A^{i-1},B^{i-1},X^i}(da_i|a^{i-1},b^{i-1},x^i)\otimes P_{X_i|X^{i-1}}(dx_i|x^{i-1})-a.s. \nonumber
\end{align}
The filter is given by $\{P_{X_i|B^{i-1}}(dx_i|b^{i-1}):i=0,\ldots,n\}$.
\end{definition}
\begin{figure}[ht]
\centering
\includegraphics[scale=0.70]{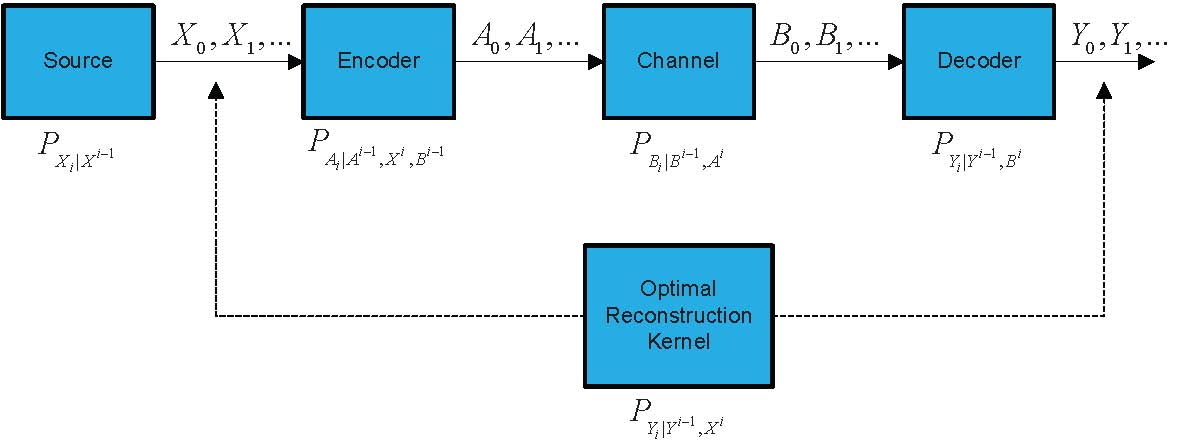}
\caption{Block Diagram of Realizable Nonanticipative Rate Distortion Function}
\label{realization2}
\end{figure}

\noi Thus, $\{B_i:~i=0,1,\ldots,n\}$ is the auxiliary random process which is obtained during the realization procedure in order to define the filter $\{P_{X_i|B^{i-1}}(dx_i|b^{i-1}):i=0,\ldots,n\}$. Note that unlike Bayesian filtering in which the auxiliary process represents the observations which are given \'a priori, in nonanticipative RDF this is identified during the realization procedure. In the Definition~\ref{realization1}, the following MC assumptions are assumed.
\begin{description}
\item[1)] $(X^i,A^i)\leftrightarrow(Y^{i-1},B^i)\leftrightarrow{Y_i}$;
\item[2)] $(X^i,Y^{i-1})\leftrightarrow(B^{i-1},A^i)\leftrightarrow{B_i}$;
\item[3)] $Y^{i-1}\leftrightarrow(A^{i-1},B^{i-1},X^i)\leftrightarrow{A_i}$;
\item[4)] $(A^{i-1},B^{i-1},Y^{i-1})\leftrightarrow{X}^{i-1}\leftrightarrow{X_i}$.
\end{description}
These conditional independent assumptions are natural since they correspond to data processing inequalities \cite{cover-thomas}. Thus, if $\{P_{B_i|B^{i-1},A^{i}}(db_i|b^{i-1},a^i):i=0,\ldots,n\}$ is a realization of the nonanticipative RDF minimizing kernel $\{q_i^*(dy_i;y^{i-1},x^i):i=0,\ldots,n\}$ then the channel connecting the source, encoder, channel, decoder achieves the nonanticipative RDF, and the filter is obtained via $\{P_{X_i|B^{i-1}}(dx_i|b^{i-1}):i=0,\ldots,n\}$. Moreover, the above MCs imply the following data processing inequality, $I (A^n \rightarrow B^n) \tri \sum_{i=0}^n{I}(A^i;B_i|B^{i-1})\geq{I}(X^n;Y^n)$. The optimal realization (encoder-channel-decoder) is defined as the one for which the last inequality holds with equality.


\section{Example}\label{example}

\par In this section, we present the filter for Gaussian Markov partially-observable processes by utilizing the realization procedure of Section~\ref{realization}.\\
Consider the following discrete-time partially observed linear Gauss-Markov system described by
\begin{eqnarray}
\left\{ \begin{array}{ll} X_{t+1}=AX_t+BW_t,~X_0=X,~t\in\mathbb{N}^n\\
Y_t=CX_t+NV_t,~t\in\mathbb{N}^n \end{array} \right.\label{equation51}
\end{eqnarray}
\begin{figure}[ht]
\centering
\includegraphics[scale=0.70]{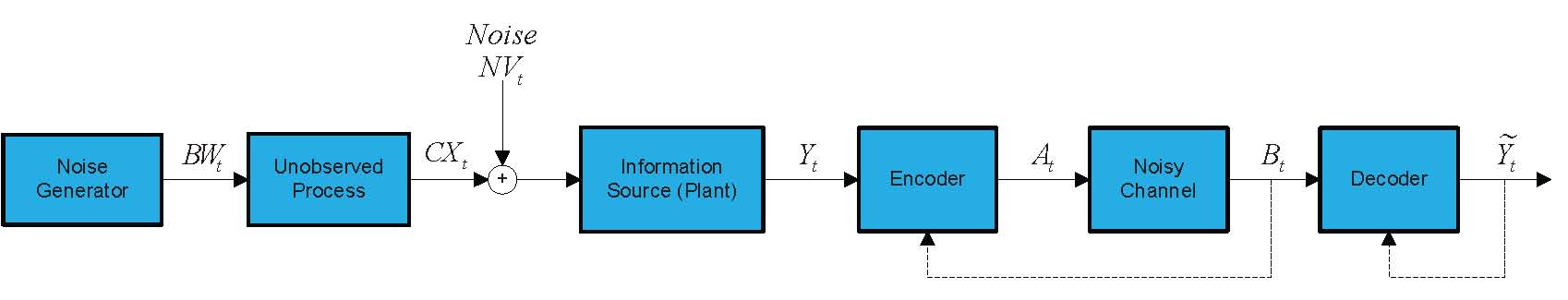}
\caption{Communication System}
\label{communication_system}
\end{figure}
where $X_t\in\mathbb{R}^m$ is the state (unobserved) process of information source (plant), and $Y_t\in\mathbb{R}^p$ is the partially observed (measurement) process. The model in (\ref{equation51}) consists of a process $\{X_t:~t\in\mathbb{N}^n\}$ which is not directly observed; instead what is directly observed is the process $\{Y_t:~t\in\mathbb{N}^n\}$ which is a noisy version of $\{X_t:~t\in\mathbb{N}^n\}$. This is a realistic model for any sensor which collects information for an underlying process, since the sensor is a measurement device which is often subject to additive Gaussian noise. Hence, in this application the objective is to compress the sensor data. Since we only treat the stationary case,  we assume that ($C,A$) is detectable and ($A,\sqrt{BB^{tr}}$) is stabilizable, ($N\neq0$) \cite{caines1988}. The state and observation noise $\{(W_t,V_t):t\in\mathbb{N}^n\}$ are mutually independent, independent of the Gaussian RV $X_0$, with parameters $N(\bar{x}_0,\bar{\Sigma}_0)$, where $W_t\in\mathbb{R}^k$ and $V_t\in\mathbb{R}^d$, are Gaussian IID processes with zero mean and identity covariances.\\
The realization will be done following Fig.~\ref{communication_system}. The goal is to reconstruct $\{Y_t:~t\in\mathbb{N}^n\}$ by $\{\tilde{Y}_t:~t\in\mathbb{N}^n\}$ causally. The distortion is single letter defined by
\begin{eqnarray*}
d_{0,n}(y^n,\tilde{y}^n)\triangleq\frac{1}{n+1}\sum_{t=0}^n||y_t-\tilde{y}_t||^2.
\end{eqnarray*}
The objective is to compute
\begin{eqnarray}
R_{0,n}^{na}(D)=\inf_{\overrightarrow{P}_{\tilde{Y}^n|Y^n}\in\overrightarrow{Q}_{0,n}(D)}\frac{1}{n+1}\mathbb{I}(P_{Y^n},\overrightarrow{P}_{\tilde{Y}^n|Y^n})\label{equation.1}
\end{eqnarray}
where $\overrightarrow{Q}_{0,n}(D)\triangleq\big{\{}\overrightarrow{P}_{\tilde{Y}^n|Y^n}:~E\{d_{0,n}(Y^n,\tilde{Y}^n)\}\leq{}D\big{\}}$, and realize the reconstruction distribution. The reconstruction of $\{X_t:~t\in\mathbb{N}^n\}$ when it is fully observed, i.e., when $Y_t=X_t$, is realized over a scalar additive white Gaussian noise (AWGN) channel in \cite{tatikonda-mitter2004}, while the partially observed scalar reconstruction of $\{Y_t:~t\in\mathbb{N}^n\}$ is realized over a scalar AWGN channel in \cite{charalambous2008} via indirect methods (utilizing upper bounds which are achievable).
\par Here, we consider the vector process $Y_t\in\mathbb{R}^p$ and realize it over a vector AWGN channel. The methodology is based on the explicit formulae of optimal reconstruction of Theorem~\ref{th6}. According to Theorem~\ref{th6}, the optimal reconstruction is given by
\begin{eqnarray}
\overrightarrow{P}^*_{\tilde{Y}^n|Y^n}(d\tilde{y}^n|y^n)=\otimes_{t=0}^n\frac{e^{s||\tilde{y}_t-y_t||^2}P_{\tilde{Y}_t|\tilde{Y}^{t-1}}(d\tilde{y}_t|\tilde{y}^{t-1})}{\int_{{\cal Y}_t}e^{s||\tilde{y}_t-y_t||^2}P_{\tilde{Y}_t|\tilde{Y}^{t-1}}(d\tilde{y}_t|\tilde{y}^{t-1})},~s\leq{0}\label{eq.9}
\end{eqnarray}
where each term in the RHS is identical because our results are derived based on the  stationarity assumption. Hence, from (\ref{eq.9}) it follows that $P_{\tilde{Y}_t|\tilde{Y}^{t-1},Y^t}=P_{\tilde{Y}_t|\tilde{Y}^{t-1},Y_t}(d\tilde{y}_t|\tilde{y}^{t-1},y_t)-$a.s., that is the reconstruction is Markov with respect to the process $\{Y_t:~t\in\mathbb{N}^n\}$. Moreover, since the exponential term $||\tilde{y}_t-y_t||^2$ in the RHS of (\ref{eq.9}) is quadratic in $(y_t,\tilde{y}_t)$, and $\{X_t:~t\in\mathbb{N}^n\}$ is Gaussian then $\{(X_t,{Y}_t):~t\in\mathbb{N}^n\}$ is jointly Gaussian, and it follows that a Gaussian distribution $P_{\tilde{Y}_t|\tilde{Y}^{t-1},Y_t}(\cdot|\tilde{y}^{t-1},y_t)$ (for a fixed realization of $(\tilde{y}^{t-1},y_t)$), and Gaussian distribution $P_{\tilde{Y}_t|\tilde{Y}^{t-1}}(\cdot|\tilde{y}^{t-1})$ can match the left and right side of (\ref{eq.9}). Therefore, at time $t\in\mathbb{N}^n$, the output $\tilde{Y}_t$ of the optimal reconstruction channel depends on $Y_t$ and the previous channel outputs $\tilde{Y}^{t-1}$, and its conditional distribution is Gaussian. Hence, the channel connecting $\{Y_t:t\in\mathbb{N}^n\}$ to $\{\tilde{Y}_t:t\in\mathbb{N}^n\}$ has the general form
\begin{eqnarray}
\tilde{Y}_t=\bar{A}_tY_t+\bar{B}_t\tilde{Y}^{t-1}+Z_t,~t\in\mathbb{N}^n\label{eq.10}
\end{eqnarray}
where $\bar{A}_t\in\mathbb{R}^{p\times{p}}$, $\bar{B}_t\in\mathbb{R}^{p\times{t}p}$, and $\{Z_t:~t\in\mathbb{N}^n\}$ is an independent sequence of Gaussian vectors. Since we treat the stationary case, the finite horizon analysis below is only an intermediate state before we give the stationary solution.   \\
The communication channel (\ref{eq.10}) can be realized via a memoryless additive Gaussian noise channel with feedback \cite{cover-thomas} defined by
\begin{eqnarray}
B_t=A_t+Z_t,~t\in\mathbb{N}^n\label{eq.11}
\end{eqnarray}
where the encoder, at time $t$, is a mapping $A_t=\Phi_t(Y_t,\tilde{Y}^{t-1})$ with power $P_t\triangleq{Trace}E\{A_tA_t^{tr}\}$, and the decoder at time $t\in\mathbb{N}^n$ receives $B^t$ and computes the reconstruction $\tilde{Y}_t=\Psi_t(B^t,\tilde{Y}^{t-1})$. By Section~\ref{realization}, in view of the MCs we have the data processing inequality $\mathbb{I}(P_{Y^n},\overrightarrow{P}_{\tilde{Y}^n|Y^n})\leq I(A^n \rightarrow B^n) = \ {I}(A^n;B^n) $, where the last equality holds because the channel is memoryless \cite{cover-thomas}.\\ 
\noi For the realization, the first step is the whitening of the source $\{Y_t:t\in\mathbb{N}^n\}$ by introducing the Gaussian innovation process $\{K_t:~t\in\mathbb{N}^n\}$, defined by
\begin{eqnarray}
K_t\triangleq{Y}_t-E\Big{\{}Y_t|\sigma\{\tilde{Y}^{t-1}\}\Big{\}},~t\in\mathbb{N}^n\label{equation52}
\end{eqnarray}
whose covariance is defined by 
\begin{align}
\Lambda_t\triangleq{E}\{K_tK_t^{tr}\},~t\in\mathbb{N}^n.
\end{align} 
\noi The second step is the diagonalization of the covariance $\{\Lambda_t:t\in\mathbb{N}^n\}$ by introducing a unitary transformation $\{E_t:t\in\mathbb{N}^n\}$ such that 
\begin{eqnarray}
E_t\Lambda_t{E}_t^{tr}=diag\{\lambda_{t,1},\ldots\lambda_{t,p}\},~t\in\mathbb{N}^n.\label{equation53}
\end{eqnarray}
\noi Thus, $\Gamma_t\triangleq{E}_tK_t$, where $\{\Gamma_t:~t\in\mathbb{N}^{n}\}$ has independent components for each $t\in\mathbb{N}^n$. In practise, the encoder consists of a pre-encoder which preprocesses the observations $\{Y_t:t\in\mathbb{N}^n\}$ by generating $\{K_t:t\in\mathbb{N}^n\}$ and then applies $\{E_t:t\in\mathbb{N}^n\}$ to it. At the decoder end, there is a pre-decoder which generates $\{\tilde{K}_t:~t\in\mathbb{N}^n\}$ defined by
\begin{eqnarray}
\tilde{K}_t\triangleq\tilde{Y}_t-E\Big{\{}Y_t|\sigma\{\tilde{Y}^{t-1}\}\Big{\}},~t\in\mathbb{N}^n\label{eq.12}
\end{eqnarray}
\noi on which the unitary transformation $\{E_t:t\in\mathbb{N}^n\}$ is applied to generate $\tilde{\Gamma}_t=E_t\tilde{K}_t$. Next, we calculate the RDF by taking advantage of the preprocessing at the encoder-decoder. Note that the fidelity criterion $d_{0,n}(\cdot,\cdot)$ is not affected by the preprocessing at the encoder-decoder since $d_{0,n}(Y^n,\tilde{Y}^n)=d_{0,n}(K^n,\tilde{K}^n)=\frac{1}{n+1}\sum_{t=0}^n||\tilde{K}_t-K_t||^2=\frac{1}{n+1}\sum_{t=0}^n||\tilde{\Gamma}_t-\Gamma_t||^2$. Now, we show that 
\begin{align*}
\mathbb{I}(P_{Y^n},\overrightarrow{P}_{\tilde{Y}^n|Y^n})&=\sum_{t=0}^n\Big{(}H(\tilde{K}_t|\tilde{K}^{t-1})-H(\tilde{K}_t|\tilde{K}^{t-1},K_t)\Big{)}\\
&=\sum_{t=0}^n\Big{(}H(\tilde{\Gamma}_t|\tilde{\Gamma}^{t-1})-H(\tilde{\Gamma}_t|\tilde{\Gamma}^{t-1},\Gamma_t)\Big{)}.
\end{align*}
\noi By (\ref{eq.9}), 
\begin{align*}
\overrightarrow{P}_{\tilde{Y}^n|Y^n}(d\tilde{y}^n|x^n)=\otimes_{t=0}^n{P}_{\tilde{Y}_t|\tilde{Y}^{t-1},Y_t}(d\tilde{y}_t|\tilde{y}^{t-1},y_t)-a.s.
\end{align*}
Hence, 
\begin{align*}
\mathbb{I}(P_{Y^n},\overrightarrow{P}_{\tilde{Y}^n|Y^n})=\sum_{t=0}^n\Big{(}H(\tilde{Y}_t|\tilde{Y}^{t-1})-H(\tilde{Y}_t|\tilde{Y}^{t-1},Y_t)\Big{)}.
\end{align*}
Since conditional entropy is translation invariant, utilizing (\ref{eq.12}) gives 
\begin{align}
H(\tilde{Y}_t|\tilde{Y}^{t-1})&=H(\tilde{K}_t|\tilde{Y}^{t-1})\nonumber\\
&=H(\tilde{K}_t|\tilde{Y}_{-1},\tilde{Y}_0,\ldots,\tilde{Y}_{t-1})\nonumber\\
&\stackrel{(a)}=H\big{(}\tilde{K}_t|\tilde{Y}_{-1},\tilde{Y}_0,\ldots,\tilde{Y}_{t-2},\tilde{K}_{t-1}+E(\tilde{Y}_{t-1}|\sigma\{\tilde{Y}^{t-2}\})\big{)}\nonumber\\
&=H(\tilde{K}_t|\tilde{Y}_{-1},\tilde{Y}_0,\ldots,\tilde{Y}_{t-2},\tilde{K}_{t-1})\nonumber\\
&=H(\tilde{K}_t|\tilde{K}^{t-1})\label{equation61}
\end{align}
and repeated application of step $(a)$ gives (\ref{equation61}). Similarly, $H(\tilde{Y}_t|\tilde{Y}^{t-1},Y_t)=H(\tilde{K}_t|\tilde{Y}^{t-1},K_t)=H(\tilde{K}_t|\tilde{K}^{t-1},K_t)$. Hence, 
\begin{align*}
\mathbb{I}(P_{Y^n},\overrightarrow{P}_{\tilde{Y}^n|Y^n})=\sum_{t=0}^n\Big{(}H(\tilde{K}_t|\tilde{K}^{t-1})-H(\tilde{K}_t|\tilde{K}^{t-1},K_t)\Big{)}\equiv\sum_{t=0}^n{I}(K_t;\tilde{K}_t|\tilde{K}^{t-1}).
\end{align*}
\noi Since the unitary transformation is non-singular then $\mathbb{I}(P_{Y^n},\overrightarrow{P}_{\tilde{Y}^n|Y^n})=\sum_{t=0}^n{I}(K_t;\tilde{K}_t|\tilde{K}^{t-1})=\sum_{t=0}^n{I}(\Gamma_t;\tilde{\Gamma}_t|\tilde{\Gamma}^{t-1}),~{t}\in\mathbb{N}^n$. Therefore, (\ref{equation.1}) is equivalent to the following expression.
\begin{eqnarray}
R^{na}_{0,n}(D)=R_{0,n}^{na,\Gamma^n,\tilde{\Gamma}^n}(D)\triangleq\inf_{\overrightarrow{P}_{\tilde{\Gamma}^n|\Gamma^n}:~E\big{\{}d_{0,n}(\Gamma^n,\tilde{\Gamma}^n)\leq{D}\big{\}}}\frac{1}{n+1}\mathbb{I}(P_{\Gamma^n},\overrightarrow{P}_{\tilde{\Gamma}^n|\Gamma^n}).\label{equation.2}
\end{eqnarray}
\noi By \cite{berger} (invoking an upper bound and Shannon's lower bound if necessary) the stationary solution of (\ref{equation.2}) is given by
\begin{align*}
\lim_{n \longrightarrow \infty} R_{0,n}^{na,\Gamma^n,\tilde{\Gamma}^n}(D)=\lim_{n \longrightarrow \infty}\frac{1}{2}\frac{1}{n+1}\sum_{t=0}^n\sum_{i=1}^p\log\Big{(}\frac{\lambda_{t,i}}{\delta_{t,i}}\Big{)}
\end{align*}
where
\begin{eqnarray}
\delta_{t,i} \triangleq \left\{ \begin{array}{ll} \xi_t & \mbox{if} \quad \xi_t\leq\lambda_{t,i} \\
\lambda_{t,i} &  \mbox{if}\quad\xi_t>\lambda_{t,i} \end{array} \right.,~t\in\mathbb{N}^n,~i=2,\ldots,p\nonumber
\end{eqnarray}
and $\{\xi_t:~t\in\mathbb{N}^n\}$ is chosen such that $\sum_{i=1}^p\delta_{t,i}=D$. Define $\eta_{t,i}\tri1-\frac{\delta_{t,i}}{\lambda_{t,i}}$,~$i=1,\ldots,p$, $\Delta_t\triangleq{diag}\{\delta_{t,1},\ldots,\delta_{t,p}\}$, and $H_t\triangleq{diag}\{\eta_{t,1},\ldots,\eta_{t,p}\}\in\mathbb{R}^{p\times{p}}$.\\
As a result, the reconstruction conditional distribution is given by
\begin{eqnarray}
P^*_{\tilde{\Gamma}^n|\Gamma^n}(d\tilde{\gamma}^n|{\gamma}^n)=\otimes_{t=0}^n{P}^*_{\Gamma_t|\tilde{\Gamma}_t}(d\tilde{\gamma}_t|\gamma_t)-a.s.\nonumber
\end{eqnarray}
where ${P}^*_{\tilde{\Gamma}_t|{\Gamma}_t}(\cdot|\cdot)\sim{N}(H_{t}\Gamma_t,H_{t}\Delta_{t})$.\\
\begin{figure}[ht]
\centering
\includegraphics[scale=0.70]{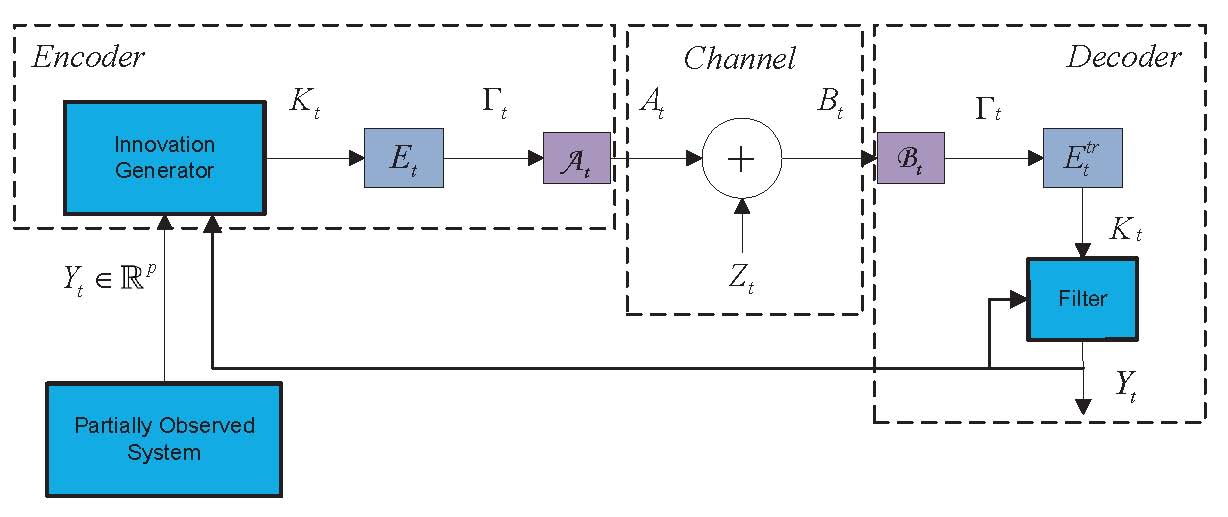}
\caption{Design of Discrete-Time Communication System}
\label{discrete_time_communication_system}
\end{figure}
\noi{\it Realization of Nonanticipative RDF Over Vector AWGN Channel.} Consider a vector channel $B_t=A_t+Z_t,~t\in\mathbb{N}^n$, where $Z_t$ is Gaussian zero mean, $Q\triangleq{C}ov(Z_t)=diag\{q_1,q_2,\ldots,q_p\}$, and $A_t\in\mathbb{R}^p$. By Section~\ref{realization}, and the memoryless nature of the channel we know that $I(A^n \rightarrow B^n) ={I}(A^n;B^n)\geq\mathbb{I}(P_{Y^n},\overrightarrow{P}_{\tilde{Y}^n|Y^n})$. Hence, we compress the source and transmit it to the decoder over the vector channel, so that the RDF is equal to the capacity of the channel, i.e., $\lim_{n \longrightarrow \infty} R_{0,n}^{na}(D)= \lim_{n \longrightarrow \infty} \frac{1}{n+1}  I(A^n;B^n)$. That is, we match the source to the channel. Therefore, we need to design the operators $\{({\cal A}_t,{\cal B}_t):~t\in\mathbb{N}^n\}$ so that the compressed signal $A_t={\cal A}_t\Gamma_t$, is sent through an AWGN channel with feedback (shown in Fig.~\ref{discrete_time_communication_system}), after which the received signal is decompressed by $\tilde{\Gamma}_t={\cal B}_tB_t$ at the pre-decoder. By the knowledge of the channel output at the decoder, the mean square estimator $\hat{X}_t$ is generated at the decoder since $\hat{X}_t\triangleq{E}\big{\{}X_t|\sigma\{\tilde{Y}^{t-1}\}\big{\}}$ (one may also use $\sigma\{{B}^{t-1}\}$ to find the filter of $\{X_t:~t\in\mathbb{N}^n\}$).\\
\noi The compression operator $\{{\cal A}_t:t\in\mathbb{N}^n\}$ is chosen so that $\lim_{n \longrightarrow \infty} R^{na}_{0,n}(D)=\lim_{n \longrightarrow \infty} \frac{1}{n+1}I(A^n;B^n)$. Recall that $B_t=A_t+Z_t$, $A_t={\cal A}_tE_tK_t$,~$Q\triangleq{C}ov(Z_t)$,~$Trace{E}\{A_tA_t^{tr}\}={P}_t, t=0,1, \ldots n$. Hence, we find $\{{\cal A}_t:t\in\mathbb{N}^n\}$ so that the following holds.
\begin{eqnarray}
\lim_{n \longrightarrow \infty} \frac{1}{n+1} C_{0,n}(P_0,\ldots{P}_n)&\triangleq& \lim_{n \longrightarrow \infty}   \frac{1}{n+1}I(A^n;{B}^n)= \lim_{n \longrightarrow \infty}\frac{1}{2}\frac{1}{n+1}\sum_{t=0}^n\log|I+E\{A_tA_t^{tr}\}Q^{-1}|\nonumber\\
&=&\lim_{n \longrightarrow \infty}\frac{1}{2}\frac{1}{n+1}\sum_{t=0}^n\log\frac{|\Lambda_{t}|}{|\Delta_{t}|}=\lim_{n \longrightarrow \infty}R^{na}_{0,n}(D).\nonumber
\end{eqnarray}
\noi From the previous equality we obtain
\begin{equation}
{\cal A}_t\triangleq\sqrt{Q\Delta_t^{-1}H_t},~t\in\mathbb{N}^n.\nonumber
\end{equation}
The decompression operator $\{{\cal B}_t:t\in\mathbb{N}^n\}$ is chosen so that the desired distortion is achieved by the above realization. The decompressed channel output $\tilde{\Gamma}_t={\cal B}_t{B}_t$ due to transmitting the compressed input $A_t={\cal A}_t{\Gamma}_t$ is
\begin{eqnarray}
\tilde{\Gamma}_t&=&{\cal B}_tB_t={\cal B}_t(A_t+Z_t)={\cal B}_t({\cal A}_t{\Gamma}_t+Z_t),~\Gamma_t=E_tK_t\nonumber\\
&=&H_tE_tK_t+{\cal B}_tZ_t,~t\in\mathbb{N}^n.\label{equation56}
\end{eqnarray}
By pre-multiplying $\tilde{\Gamma}_t$ by $E_t^{tr}$ we can construct
\begin{eqnarray*}
\tilde{K}_t&=&E_t^{tr}\tilde{\Gamma}_t=E_t^{tr}H_tE_tK_t+E_t^{tr}{\cal B}_tZ_t,~t\in\mathbb{N}^n.
\end{eqnarray*}
The reconstruction of $Y_t$ is given by the sum of $\tilde{K}_t$ and $C\hat{X}_t$ as follows.
\begin{eqnarray}
\tilde{Y}_t&=&\Psi_t(B^t,\tilde{Y}^{t-1})\nonumber\\
&=&\tilde{K}_t+C\hat{X}_t,~\hat{X}_t=E\Big{\{}X_t|\sigma\{\tilde{Y}^{t-1}\}\Big{\}}\label{eq.13}\\
&=&E_t^{tr}H_tE_tK_t+E_t^{tr}{\cal B}_tZ_t+C\hat{X}_t,~t\in\mathbb{N}^n.\label{eq.14}
\end{eqnarray}
Next, we determine $\{{\cal B}_t:t\in\mathbb{N}^n\}$.\\
First, we notice that
\begin{eqnarray}
E\Big{\{}(Y_t-\tilde{Y}_t)^{tr}(Y_t-\tilde{Y}_t)\Big{\}}=Trace\Big{(}E\Big{\{}(Y_t-\tilde{Y}_t)(Y_t-\tilde{Y}_t)^{tr}\Big{\}}\Big{)}.\nonumber
\end{eqnarray}
Then we can compute
\begin{eqnarray}
&&E\Big{\{}(Y_t-\tilde{Y}_t)^{tr}(Y_t-\tilde{Y}_t)\Big{\}}=Trace{E}\Big{\{}(K_t-\tilde{K}_t)(K_t-\tilde{K}_t)^{tr}\Big{\}}\nonumber\\
&&=Trace{E}\Big{\{}(K_t-E_t^{tr}\tilde{\Gamma}_t)(K_t-E_t^{tr}\tilde{\Gamma}_t)^{tr}\Big{\}}\nonumber\\
&&=Trace{E}\Big{\{}(K_t-E_t^{tr}H_tE_tK_t-E_t^{tr}{\cal B}_tZ_t)(K_t-E_t^{tr}H_tE_tK_t-E_t^{tr}{\cal B}_tZ_t)^{tr}\Big{\}}\nonumber\\
&&=Trace{E}\Big{\{}\big{(}(I-E_t^{tr}H_tE_t)K_t-E_t^{tr}{\cal B}_tZ_t\big{)}\big{(}(I-E_t^{tr}H_tE_t)K_t-E_t^{tr}{\cal B}_tZ_t\big{)}^{tr}\Big{\}}\nonumber\\
&&=Trace\Big{\{}(I-E_t^{tr}H_tE_t)\Lambda_t(I-E_t^{tr}H_tE_t)^{tr}+E_t^{tr}{\cal B}_tQ{\cal B}_t^{tr}E_t\Big{\}}\nonumber\\
&&=Trace\Big{\{}(I-E_t^{tr}H_tE_t)E_t^{tr}diag(\lambda_{t,1},\ldots,\lambda_{t,p})E_t(I-E_t^{tr}H_tE_t)^{tr}+E_t^{tr}{\cal B}_tQ{\cal B}_t^{tr}E_t\Big{\}}\nonumber\\
&&=Trace\Big{\{}E_t^{tr}\Big{(}(I-H_t)diag(\lambda_{t,1},\ldots,\lambda_{t,p})(I-H_t)^{tr}+({\cal B}_tQ{\cal B}_t^{tr})\Big{)}E_t\Big{\}}\nonumber\\
&&\stackrel{(b)}=Trace\Big{\{}diag(\delta_{t,1},\ldots,\delta_{t,p})\Big{\}}=D~~~~\nonumber
\end{eqnarray}
where $(b)$ holds if we set
\begin{equation}
{\cal B}_t\triangleq\sqrt{H_t\Delta_tQ^{-1}},~t\in\mathbb{N}^n.\nonumber
\end{equation}
This shows that the realization of Fig.~\ref{discrete_time_communication_system} achieves end-to-end average distortion equal to $D$.\\
\noi{\it Decoder.} The decoder is $\tilde{Y}_t=\tilde{K}_t+C\hat{X}_t$, where $\{\hat{X}_t:~t\in\mathbb{N}^n\}$ is obtained from the modified Kalman filter as follows.
Recall that
\begin{eqnarray}
\tilde{Y}_t&=&\tilde{K}_t+C\hat{X}_t\nonumber\\
&=&E_t^{tr}H_tE_t(Y_t-C\hat{X}_t)+E_t^{tr}{\cal B}_tZ_t+C\hat{X}_t\nonumber\\
&=&E_t^{tr}H_tE_t(CX_t+NV_t-C\hat{X}_t)+E_t^{tr}{\cal B}_tZ_t+C\hat{X}_t\nonumber\\
&=&E_t^{tr}H_tE_t(CX_t-C\hat{X}_t)+C\hat{X}_t+E_t^{tr}H_tE_tN V_t+E_t^{tr}{\cal B}_t Z_t\label{equation58}
\end{eqnarray}
where $\{V_t:~t\in\mathbb{N}^n\}$ and $\{Z_t:~t\in\mathbb{N}^n\}$ are independent Gaussian vectors. Then $\hat{X}_t=E\big{\{}X_t|\sigma\{\tilde{Y}^{t-1}\}\big{\}}$ is given by the modified Kalman filter
\begin{eqnarray}
\hat{X}_{t+1}&=&A\hat{X}_t+A\Sigma_t(E_t^{tr}H_tE_tC)^{tr}M_t^{-1}\big(\tilde{Y}_t-C\hat{X}_t\big),~\hat{X}_0=\bar{x}_0\label{10}\\
\Sigma_{t+1}&=&A\Sigma_tA^{tr}-A\Sigma_t(E_t^{tr}H_tE_tC)^{tr}M_t^{-1}(E_t^{tr}H_tE_tC)\Sigma_tA+BB_t^{tr},~\Sigma_0=\bar{\Sigma}_0\label{11}
\end{eqnarray}
where
\begin{eqnarray*}
M_t=E_t^{tr}H_tE_tC\Sigma_t(E_t^{tr}H_tE_tC)^{tr}+E_t^{tr}H_tE_tNN^{tr}(E_t^{tr}H_tE_t)^{tr}+E_t^{tr}{\cal B}_t {\cal B}_t^{tr}E_t.
\end{eqnarray*}
\noi{\it Stationary Solution: Infinite Horizon.}  Now, we are ready to give the complete solution to the stationary nonanticipative RDF and its realization. As $t\longrightarrow\infty$, under the assumption that the linear Gauss-Markov system is stabilizable and detectable, we have
\begin{eqnarray}
\Sigma_{\infty}&=&A\Sigma_\infty{A}^{tr}-A\Sigma_{\infty}(E_\infty^{tr}H_\infty{E}_{\infty}C)^{tr}M_{\infty}^{-1}(E_{\infty}^{tr}H_{\infty}E_{\infty}C)\Sigma_{\infty}A+BB_{\infty}^{tr}\nonumber
\end{eqnarray}
where
\begin{equation}
M_\infty=E_\infty^{tr}H_\infty{E}_{\infty}C\Sigma_{\infty}(E_{\infty}^{tr}H_{\infty}E_{\infty}C)^{tr}+E_{\infty}^{tr}H_{\infty}E_{\infty}NN^{tr}(E_{\infty}^{tr}H_{\infty}E_{\infty})^{tr}+E_{\infty}^{tr}{\cal B}_{\infty} {\cal B}_{\infty}^{tr}E_\infty\nonumber
\end{equation}
and $E_{\infty}$ is the unitary matrix that diagonalizes $\Lambda_{\infty}$ given by
\begin{eqnarray}
E_{\infty}\Lambda_{\infty}E_{\infty}^{tr}=diag(\lambda_{\infty,1},\ldots,\lambda_{\infty,p})\nonumber
\end{eqnarray}
and
\begin{eqnarray}
\delta_{\infty,i} \triangleq \left\{ \begin{array}{ll} \xi_\infty & \mbox{if} \quad \xi_\infty\leq\lambda_{\infty,i} \\
\lambda_{\infty,i} &  \mbox{if}\quad\xi_\infty>\lambda_{\infty,i} \end{array} \right.,~i=1,\ldots,p\nonumber
\end{eqnarray}
satisfying $\sum_{i=1}^p\delta_{\infty,i}=D$.\\
Define
\begin{eqnarray}
\Delta_{\infty}=diag(\delta_{\infty,1},\ldots,\delta_{\infty,p}),~H_{\infty}=diag(\eta_{\infty,1},\ldots,\eta_{\infty,p})\nonumber
\end{eqnarray}
where $\eta_{\infty,i}=1-\frac{\delta_{\infty,i}}{\lambda_{\infty,i}}$.
The nonanticipative RDF can be computed as follows.
\begin{eqnarray}
R^{na}(D)&=&\lim_{n\longrightarrow\infty}\inf_{P_{\tilde{Y}^n|Y^n}(d\tilde{y}^n|{y}^n)\in\overrightarrow{Q}_{0,n}(D)}\frac{1}{n+1}\mathbb{I}(P_{Y^n},\overrightarrow{P}_{\tilde{Y}^n|Y^n})\nonumber\\
&=&\lim_{n\longrightarrow\infty}\Bigg{(}\frac{1}{2}\frac{1}{n+1}\sum_{t=0}^n\sum_{i=1}^p\log\Big(\frac{\lambda_{t,i}}{\delta_{t,i}}\Big)\Bigg{)}\nonumber\\
&=&\frac{1}{2}\sum_{i=1}^p\log\Big(\frac{\lambda_{\infty,i}}{\delta_{\infty,i}}\Big)=\frac{1}{2}\log\frac{|\Lambda_{\infty}|}{|\Delta_{\infty}|}=\frac{1}{2}\lim_{n\longrightarrow\infty}C_{0,n}(P_0,\ldots,P_n)\stackrel{(c)}\equiv{C}(P)\label{equation59}
\end{eqnarray}
where $(c)$ comes from the fact that the power constraint satisfies $\lim_{t\longrightarrow\infty}Trace{E}\{A_tA_t^{tr}\}=\lim_{t\longrightarrow\infty}P_t=P$. 
Thus, for a given distortion level $D$, $C(P)=R^{na}(D)$ is the minimum capacity under which there exists a realizable filter for the data reconstruction of $\{Y_t:~t\in\mathbb{N}\}$ by $\{\tilde{Y}_t:~t\in\mathbb{N}\}$ ensuring an average distortion equal to $D$. Note that for $\frac{D}{p}<\min_{i}\lambda_{\infty,i}$ then $R^{na}(D)=\frac{1}{2}\log\frac{|\Lambda_{\infty}|}{(\frac{D}{p})^p}$, e.g., $\delta_{\infty,i}=\frac{D}{p}$. Hence, from (\ref{equation59}) we have $D=p\Big(|\Lambda_{\infty}|e^{-2R^{na}}\Big)^{\frac{1}{p}}$. As a result, we have the direct relation between the reconstruction error $D$ and the rate $R^{na}$. Finally, the filter is the steady state version of (\ref{10}), (\ref{11}) with initial condition $\hat{X}_0 = E\{ X_0| Y^{-1}\}$ and $\Sigma_0$ the covariance of $X_0- \hat{X}$ which is Gaussian $N(0, \Sigma_\infty)$.

\section{Conclusion}
This paper investigates nonanticipative RDF on abstract spaces. Existence of the optimal reconstruction conditional distribution is shown, while closed form expression is derived for the stationary case. The relation between filtering theory and nonanticipative rate distortion theory is discussed via a realization procedure. Finally, an example is presented which illustrates the realization of the nonanticipative RDF.

\appendix

\section{ Proofs }
\subsection{Proof of Lemma~\ref{weakstar-closed}}
\noi To show closedness of ${\overrightarrow Q}_{ad}$ as a subset of $Q_{ad}$ it suffices to show that
\begin{eqnarray*}
\otimes_{i=0}^n{q}_i^{\alpha}(\cdot;y^{i-1},x^{i})\buildrel w^* \over \longrightarrow\otimes_{i=0}^n{q}_i^{0}(\cdot;y^{i-1},x^{i}).
\end{eqnarray*}
This will be shown by induction. Consider $n=0$. For any $h_{0}(x_0,y_0)\in{L}_1(\mu_{0},{BC}({\cal Y}_{0}))$, by definition of weak$^*$-convergence it follows from (a) that
\begin{eqnarray}
\lim_{\alpha\longrightarrow\infty}\int_{{\cal X}_0\times{\cal Y}_0}h_0(x_0,y_0)q_0^{\alpha}(dy_0;x_0)\mu_0(dx_0)=\int_{{\cal X}_0\times{\cal Y}_0}h_0(x_0,y_0)q_0^{0}(dy_0;x_0)\mu_0(dx_0)\nonumber.
\end{eqnarray}
Consider $n=1$. For $\tilde{h}_0(\cdot,\cdot)\in{L}_1(\mu_0,BC({\cal Y}_0))$, $\tilde{h}_1(\cdot,\cdot)\in{L}_1(\mu_{1},BC({\cal Y}_{1}))$ We need to show that
\begin{eqnarray}
&&\lim_{\alpha\longrightarrow\infty}\Bigg{|}\int_{{\cal X}_{0}}\bigg{(}\int_{{\cal Y}_0}h_{0}(x_0,y_0)\bigg{(}\int_{{\cal X}_{1}}\bigg{(}\int_{{\cal Y}_{1}}h_{1}(x_1,y_1)q_1^{\alpha}(dy_1;y_0,x^1)\bigg{)}\mu_1(dx_1;x_0)\bigg{)}q_0^{\alpha}(dy_0;x_0)\bigg{)}\mu_0(dx_0)\nonumber\\
&&-\int_{{\cal X}_{0}}\bigg{(}\int_{{\cal Y}_0}h_{0}(x_0,y_0)\bigg{(}\int_{{\cal X}_{1}}\bigg{(}\int_{{\cal Y}_{1}}h_{1}(x_1,y_1)q_1^{0}(dy_1;y_0,x^1)\bigg{)}\mu_1(dx_1;x_0)\bigg{)}q_0^{0}(dy_0;x_0)\bigg{)}\mu_0(dx_0)\Bigg{|}=0\nonumber.
\end{eqnarray}
The latter equation is written as follows.
\begin{eqnarray*}
&&\Bigg{|}\int_{{\cal X}_{0}}\bigg{(}\int_{{\cal Y}_0}h_{0}(x_0,y_0)\bigg{(}\int_{{\cal X}_{1}}\bigg{(}\int_{{\cal Y}_{1}}h_{1}(x_1,y_1)q_1^{\alpha}(dy_1;y_0,x^1)\bigg{)}\mu_1(dx_1;x_0)\bigg{)}q_0^{\alpha}(dy_0;x_0)\bigg{)}\mu_0(dx_0)\nonumber\\
&&-\int_{{\cal X}_{0}}\bigg{(}\int_{{\cal Y}_0}h_{0}(x_0,y_0)\bigg{(}\int_{{\cal X}_{1}}\bigg{(}\int_{{\cal Y}_{1}}h_{1}(x_1,y_1)q_1^{0}(dy_1;y_0,x^1)\bigg{)}\mu_1(dx_1;x_0)\bigg{)}q_0^{0}(dy_0;x_0)\bigg{)}\mu_0(dx_0)\Bigg{|}
\end{eqnarray*}
\begin{eqnarray*}
&&\leq\Bigg{|}\int_{{\cal X}_{0}}\bigg{(}\int_{{\cal Y}_0}h_{0}(x_0,y_0)\underbrace{\bigg{(}\int_{{\cal X}_{1}}\bigg{(}\int_{{\cal Y}_{1}}h_{1}(x_1,y_1)q_1^{0}(dy_1;y_0,x^1)\bigg{)}\mu_1(dx_1;x_0)\bigg{)}}_{\tilde{h}_1(x_0,y_0)}q_0^{\alpha}(dy_0;x_0)\bigg{)}\mu_0(dx_0)\\
&&-\int_{{\cal X}_{0}}\bigg{(}\int_{{\cal Y}_0}h_{0}(x_0,y_0)\bigg{(}\int_{{\cal X}_{1}}\bigg{(}\int_{{\cal Y}_{1}}h_{1}(x_1,y_1)q_1^{0}(dy_1;y_0,x^1)\bigg{)}\mu_1(dx_1;x_0)\bigg{)}q_0^{0}(dy_0;x_0)\bigg{)}\mu_0(dx_0)\Bigg{|}\\
&&+\Bigg{|}\int_{{\cal X}_{0}}\bigg{(}\int_{{\cal Y}_0}h_{0}(x_0,y_0)\bigg{(}\int_{{\cal X}_{1}}\bigg{(}\int_{{\cal Y}_{1}}h_{1}(x_1,y_1)q_1^{\alpha}(dy_1;y_0,x^1)\bigg{)}\mu_1(dx_1;x_0)\bigg{)}q_0^{\alpha}dy_0;x_0)\bigg{)}\mu_0(dx_0)\nonumber\\
&&-\int_{{\cal X}_{0}}\bigg{(}\int_{{\cal Y}_0}h_{0}(x_0,y_0)\bigg{(}\int_{{\cal X}_{1}}\bigg{(}\int_{{\cal Y}_{1}}h_{1}(x_1,y_1)q_1^{0}(dy_1;y_0,x^1)\bigg{)}\mu_1(dx_1;x_0)\bigg{)}q_0^{\alpha}(dy_0;x_0)\bigg{)}\mu_0(dx_0)\Bigg{|}.
\end{eqnarray*}
We need to show that both RHS terms go to zero as $a\longrightarrow\infty$. Let $\epsilon>0$ be given. Then, there exists an $\alpha_\epsilon\in{\cal D}$ such that for all $\alpha\succeq{\alpha}_\epsilon$ the first RHS term can be written as
\begin{eqnarray}
&&\Bigg{|}\int_{{\cal X}_{0}}\bigg{(}\int_{{\cal Y}_0}h_{0}(x_0,y_0){\tilde{h}_1(x_0,y_0)}\Big{(}q_0^{\alpha}(dy_0;x_0)-q_0^{0}(dy_0;x_0)\Big{)}\bigg{)}\mu_0(dx_0)\Bigg{|}\nonumber\\
&&=\Bigg{|}\int_{{\cal X}_{0}}\bigg{(}\int_{{\cal Y}_0}h_{0}(x_0,y_0)\tilde{h}_1(x_0,y_0)\Big{(}q_0^{\alpha}(dy_0;x_0)-q_0^{0}(dy_0;x_0)\Big{)}\bigg{)}\mu_0(dx_0)\Bigg{|}\nonumber\\
&&\leq\int_{{\cal X}_{0}}\Bigg{|}\int_{{\cal Y}_0}h_{0}(x_0,y_0)\tilde{h}_1(x_0,y_0)\Big{(}q_0^{\alpha}(dy_0;x_0)-q_0^{0}(dy_0;x_0)\Big{)}\Bigg{|}\mu_0(dx_0)\nonumber\\
&&\leq\epsilon,~\forall~\epsilon>0~\mbox{and}~\forall\alpha\succ\alpha_\epsilon\nonumber
\end{eqnarray}
where the last inequality follows from condition (b), e.g., $\tilde{h}_0(\cdot,\cdot)\in{L}_1(\mu_0,BC({\cal Y}_0))$.\\
The second RHS term can be written as
\begin{eqnarray}
&&\Bigg{|}\int_{{\cal X}_{0}}\bigg{(}\int_{{\cal Y}_0}h_{0}(x_0,y_0)\underbrace{\bigg{(}\int_{{\cal X}_{1}}\bigg{(}\int_{{\cal Y}_{1}}h_{1}(x_1,y_1)\Big{(}q_1^{\alpha}(dy_1;y_0,x^1)-q_1^0(dy_i;y_{0},x^1)\Big{)}\bigg{)}\mu_1(dx_1;x_0)\bigg{)}}_{\tilde{h}^\alpha_1(x_0,y_0)}
\nonumber\\
&&\otimes{q}_0^{\alpha}(dy_0;x_0)\bigg{)}\mu_0(dx_0)\Bigg{|}=\int_{{\cal X}_{0}}\int_{{\cal Y}_0}h_{0}(x_0,y_0)\tilde{h}^\alpha_1(x_0,y_0)
q_0^{\alpha}(dy_0;x_0)\otimes\mu_0(dx_0).\label{eq.15}
\end{eqnarray}
By condition (c) for $i=1$, and $\forall~\epsilon>0$ and $\alpha\succ\alpha_{\epsilon}$ we have
\begin{eqnarray}
\sup_{y_0\in{\cal Y}_{0}}\int_{{\cal X}_1}\bigg{|}\int_{{\cal Y}_1}h_1(x_1,y_1)q_1^{\alpha}(dy_1;y_0,x^1)-\int_{{\cal Y}_1}h_1(x_1,y_1)q_1^0(dy_1;y^0,x^1)\bigg{|}\mu_1(dx_1;x_0)\leq\epsilon,~\forall~x_0\in{\cal X}_0.\nonumber
\end{eqnarray}
Utilizing the last inequality into (\ref{eq.15}) yields that in the limit as $\alpha\longrightarrow\infty$, then (\ref{eq.15}) goes to zero.\\
Next, suppose that for $n=k$ and for all $\epsilon>0$ there exists $\alpha_\epsilon\in{\cal D}$ such that for any $\alpha\succeq\alpha_\epsilon$
\begin{eqnarray}
&&\Bigg{|}\int_{{\cal X}_{0}}\bigg{(}\int_{{\cal Y}_0}h_{0}(x_0,y_0)\ldots\bigg{(}\int_{{\cal X}_{k}}\bigg{(}\int_{{\cal Y}_{k}}h_{k}(x_k,y_k){q}_k^{\alpha}(dy_k;y^{k-1},x^k)\bigg{)}\mu_k(dx_k;x^{k-1})\bigg{)}\nonumber\\
&&\ldots{q}_0^{\alpha}(dy_0;x_0)\bigg{)}\mu_0(dx_0)\nonumber\\
&&-\int_{{\cal X}_{0}}\bigg{(}\int_{{\cal Y}_0}h_{0}(x_0,y_0)\ldots\bigg{(}\int_{{\cal X}_{k}}\bigg{(}\int_{{\cal Y}_{k}}h_{k}(x_k,y_k){q}_k^{0}(dy_k;y^{k-1},x^k)\bigg{)}\mu_k(dx_k;x^{k-1})\bigg{)}\nonumber\\
&&\ldots{q}_0^{0}(dy_0;x_0)\bigg{)}\mu_0(dx_0)\Bigg{|}\leq\epsilon.\nonumber
\end{eqnarray}
To conclude the derivation we need to show that for $n=k+1$
\begin{eqnarray*}
\otimes_{i=0}^{k+1}{q}_i^{\alpha}(\cdot;y^{i-1},x^{i})\buildrel w^* \over \longrightarrow\otimes_{i=0}^{k+1}{q}_i^{0}(\cdot;y^{i-1},x^{i}).
\end{eqnarray*}
Consider $n=k+1$. We need to show that for all $\epsilon>0$ there exists $\alpha_\epsilon\in{\cal D}$ such that for any $\alpha\succeq\alpha_\epsilon$
\begin{eqnarray*}
&&\Bigg{|}\int_{{\cal X}_{0}}\bigg{(}\int_{{\cal Y}_0}h_{0}(x_0,y_0)\ldots\bigg{(}\int_{{\cal X}_{k+1}}\bigg{(}\int_{{\cal Y}_{k+1}}h_{k+1}(x_{k+1},y_{k+1}){q}_{k+1}^{\alpha}(dy_{k+1};y^{k},x^{k+1})\bigg{)}\mu_{k+1}(dx_{k+1};x^{k})\bigg{)}\\
&&\ldots{q}_0^{\alpha}(dy_0;x_0)\bigg{)}\mu_0(dx_0)\nonumber\\
&&-\int_{{\cal X}_{0}}\bigg{(}\int_{{\cal Y}_0}h_{0}(x_0,y_0)\ldots\bigg{(}\int_{{\cal X}_{k+1}}\bigg{(}\int_{{\cal Y}_{k+1}}h_{k+1}(x_{k+1},y_{k+1}){q}_{k+1}^{0}(dy_{k+1};y^{k},x^{k+1})\bigg{)}\mu_{k+1}(dx_{k+1};x^{k})\bigg{)}\\
&&\ldots{q}_0^{0}(dy_0;x_0)\bigg{)}\mu_0(dx_0)\Bigg{|}\leq\epsilon.\nonumber
\end{eqnarray*}
Since,
\begin{eqnarray*}
&&\Bigg{|}\int_{{\cal X}_{0}}\bigg{(}\int_{{\cal Y}_0}h_{0}(x_0,y_0)\ldots\bigg{(}\int_{{\cal X}_{k+1}}\bigg{(}\int_{{\cal Y}_{k+1}}h_{k+1}(x_{k+1},y_{k+1}){q}_{k+1}^{\alpha}(dy_{k+1};y^{k},x^{k+1})\bigg{)}\mu_{k+1}(dx_{k+1};x^{k})\bigg{)}\\
&&\ldots{q}_0^{\alpha}(dy_0;x_0)\bigg{)}\mu_0(dx_0)\nonumber\\
&&-\int_{{\cal X}_{0}}\bigg{(}\int_{{\cal Y}_0}h_{0}(x_0,y_0)\ldots\bigg{(}\int_{{\cal X}_{k+1}}\bigg{(}\int_{{\cal Y}_{k+1}}h_{k+1}(x_{k+1},y_{k+1}){q}_{k+1}^{0}(dy_{k+1};y^{k},x^{k+1})\bigg{)}\mu_{k+1}(dx_{k+1};x^{k})\bigg{)}\nonumber\\
&&\ldots{q}_0^{0}(dy_0;x_0)\bigg{)}\mu_0(dx_0)\Bigg{|}
\end{eqnarray*}
\begin{eqnarray*}
&&\leq\Bigg{|}\int_{{\cal X}_{0,k}}\int_{{\cal Y}_{0,k}}\otimes_{i=0}^kh_{i}(x_i,y_i)\bigg{(}\int_{{\cal X}_{k+1}}\int_{{\cal Y}_{k+1}}h_{k+1}(x_{k+1},y_{k+1})\Big{(}{q}_{k+1}^{\alpha}(dy_{k+1};y^{k},x^{k+1})\nonumber\\
&&-q_{k+1}^0(dy_{k+1};y^{k},x^{k+1})\Big{)}\mu_{k+1}(dx_{k+1};x^{k})\bigg{)}
\otimes_{i=0}^k{q}_{i}^\alpha(dy_i;y^{i-1},x^i)\otimes\mu_{i}(dx_i;x^{i-1})\Bigg{|}\\
&&+\Bigg{|}\int_{{\cal X}_{0,k}}\int_{{\cal Y}_{0,k}}\otimes_{i=0}^kh_{i}(x_i,y_i)\underbrace{\bigg{(}\int_{{\cal X}_{k+1}}\int_{{\cal Y}_{k+1}}h_{k+1}(x_{k+1},y_{k+1}){q}_{k+1}^{0}(dy_{k+1};y^{k},x^{k+1})\mu_{k+1}(dx_{k+1};x^{k})\bigg{)}}_{\tilde{h}_{k+1}(x^k,y^k)}\\
&&\otimes_{i=0}^k\Big{(}{q}_{i}^\alpha(dy_i;y^{i-1},x^i)-{q}_{i}^0(dy_i;y^{i-1},x^i)\Big{)}\otimes\mu_{i}(dx_i;x^{i-1})\Bigg{|}.\nonumber
\end{eqnarray*}
By condition (c) the following inequality holds, $\forall x^k\in{\cal X}_{0,k}$,
\begin{align*}
&\sup_{y^k\in{\cal Y}_{0,k}}\int_{{\cal X}_{k+1}}\bigg{|}\int_{{\cal Y}_{k+1}}h_{k+1}(x_{k+1},y_{k+1})\Big{(}{q}_{k+1}^{\alpha}(dy_{k+1};y^{k},x^{k+1})-\\
&\qquad\qquad{q}_{k+1}^0(dy_{k+1};y^{k},x^{k+1})\Big{)}\bigg{|}\mu_{k+1}(dx_{k+1};x^{k})\nonumber\\
&\leq\epsilon,~\forall~\epsilon>0~\mbox{and}~\forall~\alpha\succ\alpha_{\epsilon}.
\end{align*}
Also, by condition (b), $\tilde{h}_{k+1}\in{L}_1(\mu_{0,k},BC({\cal Y}_{0,k}))$. Utilizing the previous observations and the induction hypothesis $\otimes_{i=0}^{k}{q}_i^{\alpha}(\cdot;y^{i-1},x^{i})\buildrel w^* \over \longrightarrow\otimes_{i=0}^{k}{q}_i^{0}(\cdot;y^{i-1},x^{i})$ in the two inequalities above, then in the limit as $\alpha\longrightarrow\infty$, the terms in the inequality go to zero.\\
As a result, ${\overrightarrow Q}_{ad}$ is a weak$^*$-closed set. Being a weak$^*$-closed subset of the weak$^*$-compact set $Q_{ad}$, ${\overrightarrow{Q}}_{ad}$ is also weak$^*$-compact.

\subsection{Proof of Theorem~\ref{lagrange_duality}}
\noi The proof is based on  Lagrange Duality theorem \cite[Theorem 1, p.~224]{dluenberger69}. We choose $X \triangleq{L}_{\infty}^{w}(\mu_{0,n},M_{rba}({\cal Y}_{0,n}))$ which is clearly a vector space. For the set $\Omega$ the natural choice is the set $ \Omega  =\overrightarrow{Q}_{ad}\equiv{L}_{\infty}^w(\mu_{0,n},{\Pi}_{rba}({\cal Y}_{0,n}))\subseteq X$. Define
\begin{eqnarray}
G(\overrightarrow{q}_{0,n})&\triangleq& \ell_{d_{0,n}}({\overrightarrow q}_{0,n})-D,\quad {\overrightarrow q}_{0,n} \in L_{\infty}^{w}(\mu_{0,n},M_{rba}({\cal Y}_{0,n}))\nonumber\\
&\triangleq&  \int_{{\cal X}_{0,n}}
\biggl( \int_{{\cal Y}_{0,n}}d_{0,n}(x^n,y^n) {\overrightarrow q}(dy^n;x^n) \biggr) \mu_{0,n}(dx^n) - D.\nonumber
\end{eqnarray}
It is clear that $G(\cdot)$ is a convex mapping from $L_{\infty}^{w}(\mu_{0,n},M_{rba}({\cal Y}_{0,n}))$ into the real line with the natural ordering $(\mathbb{R},\preceq) \triangleq Z$. Also recall that ${\overrightarrow q}_{0,n} \rightarrow {\mathbb{I}}(\mu_{0,n}; \overrightarrow{q}_{0,n})$
is convex  and well defined on $\Omega$  and that,  by Theorem \ref{th3}, $\inf_{\overrightarrow{q}_{0,n}\in \overrightarrow{Q}_{0,n}(D)}\mathbb{I}(\mu_{0,n}; \overrightarrow{q}_{0,n})$ exists and is finite. Thus, according to the Lagrange duality theorem referred to above, it suffices to show that there exists a ${\overrightarrow q}_{0,n}^{1} \in \Omega$ such that
\begin{eqnarray}
G({\overrightarrow q}_{0,n}^{1})= \int_{{\cal X}_{0,n}} \big\{ \int_{{\cal Y}_{0,n}}d_{0,n}(x^n,y^n){\overrightarrow q}_{0,n}^{1}(dy^n;x^n)\big\} \mu_{0,n}(dx^n) - D < 0.\nonumber
\end{eqnarray}
Introduce  the sets $A_1 \triangleq \{ x^n \in {\cal X}_{0,n}: \Gamma_{x^n} \ne \emptyset \} $ and $A_0 \triangleq {\cal X}_{0,n}\setminus A_1$, with $\Gamma_{x^n}$ denoting the $x^n$-section of $\Gamma.$  Define the measure valued function ${\overrightarrow q}_{0,n}^{1}$ as follows
\begin{eqnarray}
{\overrightarrow q}_{0,n}^{1}(\Gamma_{x^n};x^n) = 0,~~ \forall ~~ x \in A_0;~~ {\overrightarrow q}_{0,n}^{1}({\cal Y}_{0,n};x^n)= 1,~~ \forall~~ x^n \in {\cal X}_{0,n}\nonumber
\end{eqnarray}
\begin{eqnarray}
0 \leq {\overrightarrow q}_{0,n}^{1}(B;x^n)\leq 1, B\subset \Gamma_{x^n}, ~ {\overrightarrow q}_{0,n}^{1}(\Gamma_{x^n};x^n) =1,~~ \forall ~~ x^n \in A_1\nonumber
\end{eqnarray}
where $B \in {\cal B}({\cal Y}_{0,n})$. Since by hypothesis $\Gamma \ne \emptyset$ we have  $\mu_{0,n}(A_1)>0$
and thus the kernel ${\overrightarrow q}_{0,n}^{1}$ is well defined and it belongs to $L_{\infty}^w(\mu_{0,n},{\Pi}_{rba}({\cal Y}_{0,n}))$. Using this kernel in  the expression for $\ell_{d_{0,n}}({\overrightarrow q}_{0,n})$, one can easily verify that $\ell_{d_{0,n}}({\overrightarrow q}_{0,n}^{1}) < D$ and hence $G( {\overrightarrow q}_{0,n}^{1}) < 0$. Then, by the Lagrange Duality theory, we arrive at the conclusion of the theorem as stated. Also it follows from the same duality theory that if the infimum is achieved by some ${\overrightarrow q}_{0,n}^* \in L_{\infty}^w(\mu_{0,n},{\Pi}_{rba}({\cal Y}_{0,n}))$, then
\begin{eqnarray}
s \biggl(\int_{{\cal X}_{0,n}} \int_{{\cal Y}_{0,n}}d_{0,n}(x^n,y^n) {\overrightarrow q}_{0,n}^*(dy^n;x^n)\otimes\mu_{0,n}(dx^n) - D\biggr
)=0.
\end{eqnarray}
In other words, for non-zero $s \in (-\infty,0]$, solution occurs on the boundary. This completes the proof.

\section*{Acknowledgement}
The authors wish to thank the associate editor and anonymous reviewers for their valuable comments which improved significantly the presentation of this paper.

\bibliographystyle{IEEEtran}
\bibliography{photis_references_filtering_weakstar}

\end{document}